\documentclass[acmsmall,screen,nonacm]{acmart}

\usepackage{main}

\begin{document}

\title{Potential Functions as Types}
\subtitle{A Synthetic Modal Formulation of Amortized Cost}

\begin{abstract}
  Amortized analysis can be framed from the physicist's view, amenable to manual verification in dependent type theory using potential functions, and the banker's view, amenable to automated inference in substructural type theory using type-level credit annotations.
  In this work, we synthesize these perspectives in Calf, a dependent type theory for cost verification.
  From the physicist's view, we present a fracture and gluing theorem that renders every type as containing a fusion of an abstraction function and a potential function.
  By construction, every program between two such types must preserve abstraction, to facilitate modularity of behavior, and conserve potential, to facilitate modularity of cost.
  Incorporating the banker's view, we synthetically construct type operators for credits and debits.
  We then define Giralf, a graded substructural dependent type theory for programming with credits and debits, which is semantically interpreted as a sub-language of Calf.
  Finally, we adapt an inference algorithm to transform a limited class of Calf programs into Giralf counterparts, automating the cost analysis of common algorithms in Calf.
\end{abstract}
 
\author{Harrison Grodin}
\orcid{0000-0002-0947-3520}
\email{hgrodin@cs.cmu.edu}

\author{Ethan Chu}
\orcid{0009-0005-6041-0313}
\email{ethanchu@cs.cmu.edu}

\author{Runming Li}
\orcid{0000-0001-7600-9069}
\email{runmingl@cs.cmu.edu}

\affiliation{
	\institution{Carnegie Mellon University}
	\department{Computer Science Department}
	\streetaddress{5000 Forbes Ave.}
	\city{Pittsburgh}
	\state{PA}
	\postcode{15213}
	\country{USA}
}

\author{Jan Hoffmann}
\orcid{0000-0001-8326-0788}
\email{jhoffmann@cmu.edu}

\author{Robert Harper}
\orcid{0000-0002-9400-2941}
\email{rwh@cs.cmu.edu}

\affiliation{
	\institution{Carnegie Mellon University}
	\department{Computer Science Department}
	\streetaddress{5000 Forbes Ave.}
	\city{Pittsburgh}
	\state{PA}
	\postcode{15213}
	\country{USA}
}

\authorsaddresses{}

\keywords{
  abstract data type,
  abstraction,
  abstraction function,
  amortized analysis,
  algorithm analysis,
  call-by-push-value,
  cost analysis,
  data structure,
  dependent type theory,
  information flow,
  modal type theory,
  modularity,
  phase distinction,
  proof assistants,
  resource analysis,
  verification
}

\maketitle

\section{Introduction} \label{sec:introduction}

Amortized analysis, pioneered by \citeauthor{sleator-tarjan>1985-paging} [\citeyear{tarjan>1985}], is a technique for analyzing the cost of a sequence of operations on an ephemeral data structure.
Since its inception, there have been two compatible perspectives of the method---the \emph{physicist's view} and the \emph{banker's view}.

In the physicist's view, a \emph{potential function} $\Phi : \X \to \Cost$ assigns potential (\ie, future cost) to each data structure of a type $\X$, where $\Cost$ is a type representing cost (commonly the natural numbers).
Then, for an operation $f : \X \to \X$ with a \emph{true cost} $c_\top : \X \to \Cost$ and an imagined \emph{amortized cost} $c_{\ABS} : \X \to \Cost$, one proves a principle tantamount to the conservation of energy:
\begin{equation}
  c_\top(x) + \Phi(f(x)) \le \Phi(x) + c_{\ABS}(x).  \label{eq:amortization}
\end{equation}
Iterating this inequality (traditionally via a telescoping sum) ensures that the true cost of a sequence of operations is upper-bounded by the sum of the amortized costs and the initial potential.
Because it requires a proof of \cref{eq:amortization}, which could rely on arbitrarily complex facts and invariants of the data, the physicist's view is well-suited for \emph{manual verification} in dependent type theory~\citep{niu-sterling-grodin-harper>2022,grodin-harper>2024} and higher-order logic~\citep{nipkow-brinkop>2019}.

In the banker's view, cost is viewed as a coin-like resource---called a \emph{credit}---that can be saved within a data structure.
Credits can be spent later to offset the cost of an expensive operation; if all true costs are offset by credits, the amortized cost of a sequence of operations is simply the number of credits stored within the input data.
Due to their status as a resource, credits must be treated \emph{substructurally}: although credits may be wasted, they may not be duplicated.
In many common algorithms and data structures, it is possible to attach the requisite credits to a data structure automatically, placing a credit in exactly the locations where cost will later be incurred.
Thus, the banker's view is well-suited for substructural logics and type
theories~\citep{atkey>2011,mevel-jourdan-pottier>2019} as well as
\emph{automated inference}~\citep{hofmann-jost>2003,hoffmann-jost>2022}.

From either perspective, amortization is fundamentally about \emph{modularity}.
Amortized analysis does not affect the \emph{implementation} details or the true cost of data structure operations.
Instead, it exports a reasonable cost model as a \emph{cost interface} that allows client programs to reason about the cost of operations while encapsulating exactly when costs occur.
Both potential functions and credits are \emph{ghost data}, serving only to mediate between the \emph{private} reality of an implementation and the \emph{public} fictitious amortized costs presented in an interface.

In this work, we develop a technique
that unifies the physicist's method and the banker's method in a dependent type theory for modular verification of amortized cost.
    The physicist's potential functions and associated inequalities are first-class.
    Every type comes, implicitly or explicitly, equipped with a private potential function; and every program includes, implicitly or explicitly, a proof of the conservation of potential.
    A synthetic phase distinction ensures modularity, providing a stable mathematical model and amortized cost bound against which client programs may be verified.
    Within this type theory, we define the banker's credits as a type operator.
    Then, we construct a substructural sub-language for writing programs with credits and provide an inference algorithm that emits amortized cost upper bounds and corresponding certificates of soundness.

\begin{wrapstuff}[r,type=figure,width=0.35\textwidth]
  \centering
  \begin{tikzpicture}[scale=0.7]
    \begin{scope}[fill opacity=0.4]
      \fill[red]   (90:1.3cm)  circle (2.2cm);
      \fill[green] (210:1.3cm) circle (2.2cm);
      \fill[blue]  (330:1.3cm) circle (2.2cm);
    \end{scope}

    \draw (90:1.3cm)  circle (2.2cm);
    \draw (210:1.3cm) circle (2.2cm);
    \draw (330:1.3cm) circle (2.2cm);

    \node at (90:2.2cm)  {amortized cost};
    \node[align=center, text width=2cm] at (210:2.42cm) {functions \\ in types};
    \node[align=center, text width=2cm] at (330:2.4cm) {lax \\ squares};

    \node at (150:1.6cm) {AARA};
    \node at (30:1.55cm) {Calf};
    \node at (270:1.5cm) {AFAT};

    \node at (0,0) {PFAT};
  \end{tikzpicture}
  \caption{The central ideas of this work.}\label{fig:venn}
\end{wrapstuff}

Our approach achieves these goals by synthesizing three main ideas, seamlessly integrating manual verification of amortized cost, modularity via abstraction, and automated cost inference within dependent type theory, depicted in \cref{fig:venn}.
\par
\begin{enumerate}
  \item
    We work in Calf~\citep{niu-sterling-grodin-harper>2022,grodin-niu-sterling-harper>2024}, a dependent type theory for cost verification.
    Within Calf, the conservation of energy principle used in the physicist's view of amortized analysis can be packaged as a lax commutative square~\citep{grodin-harper>2024}.
  \item
    We make use of the insights of \citet{grodin-li-harper>2026} who achieve modularity in (univalent) dependent type theory by rendering \emph{abstraction functions as types}~(AFAT).
    Seen via a modal fracture and gluing theorem~\citep{rijke-shulman-spitters>2020}, every type contains an abstraction function, and every function between types contains a commutative square ensuring abstraction is preserved.
    To accommodate the cost effect of Calf, the authors permit a weaker notion of lax commutativity on costs.
  \item
    We incorporate ideas from the Automatic Amortized Resource Analysis (AARA) family of substructural type theories~\citep{hofmann-jost>2003,hoffmann-jost>2022}, which includes types equipped with credits to represent the banker's view of amortized analysis.
    Semantically, the types of AARA are interpreted as containing both a set of values and a potential function, and the soundness theorem ensures that potential is conserved.
    Because credits are spent locally, AARA supports automated inference of cost for common classes of programs, using linear programming to ensure sufficient credits are always available.
\end{enumerate}
Unifying these ideas, we render \emph{potential functions as types}~(PFAT).
In the following, we provide additional background on each of these ideas, which we then make use of throughout the work.

\subsection{Calf: Cost Analysis in Dependent Type Theory}\label{sec:introduction:calf}

This work takes place in the Calf type theory~\citep{niu-sterling-grodin-harper>2022}, which extends dependent type theory with an adjoint layer supporting a notion of cost.
Calf is a dependent variation of call-by-push-value~\citep{levy>2003,ahman-ghani-plotkin>2016,vakar>thesis,pedrot-tabareau>2019}.
As such, it includes two sorts of types, the \emph{value types} and the \emph{computation types}:
\begin{align*}
  \text{Val.}~\X,\Y,\Z &\coloncolonequals \U{\A} \mid \UnitV \mid \ProdV{\X}{\Y} \mid \ZeroV \mid \SumV{\X}{\Y} \mid \NatV \mid \ListV{\X} \mid {\textstyle\DSumV{x : \X}{\Y(x)}} \mid \VV \mid \CC \mid \cdots  \\
  \text{Comp.}~\A,\B,\C &\coloncolonequals \F{\X} \mid \UnitC \mid \ProdC{\A}{\B} \mid \PowerC{\X}{\A} \mid \DPowerC{x : \X}{\A(x)} \mid \cdots
\end{align*}
The universe of value types $\VV$ and the universe of computation types $\CC$ are both, themselves, value types~\citep{krishnaswami-pradic-benton>2015}.%

Let $(\Cost, \le_{\Cost}, 0, +)$ be an ordered commutative monoid representing cost, typically chosen to be the natural numbers $(\NatV, \le_{\NatV} 0, +)$.
The \emph{cost effect} is a printing-like effect, available at all computation types $\A$: the effect $\charge[\A]{c}[e]$ increases the cost of computation $e : \A$ by $c : \Cost$ units of cost.
\begin{center}
  \begin{minipage}{0.45\textwidth}
    \[
      \infer
        {\Gamma \vdash c : \Cost \\ \Gamma \vdash e : \A}
        {\Gamma \vdash \charge[\A]{c}[e] : \A}
    \]
  \end{minipage}%
  \begin{minipage}{0.55\textwidth}
    \[\begin{aligned}
      \charge{0}[e] &= e \\
      \charge{c_1 + c_2}[e] &= \charge{c_1}[\charge{c_2}[e]]
    \end{aligned}\]
  \end{minipage}%
\end{center}
Every type $\A$ is equipped with a cost preorder $\le_{\A}$ on $\U{\A}$,%
\footnote{Formally, \citeauthor{grodin-niu-sterling-harper>2024} use a \emph{synthetic} notion of preorder. In univalent type theory, we inherit all types as value types, placing the requirement for a type to be a synthetic preorder on computation types.}
where $e \le_{\A} e'$ means that $e$ and $e'$ have the same behavior, although the cost of $e$ may be lower than that of $e'$~\citep{grodin-niu-sterling-harper>2024}.

Within Calf, \citet{niu-sterling-grodin-harper>2022} verify amortized costs using the physicist's method by proving the requisite conservation principles.
Later, in a refinement of Calf, \citet{grodin-harper>2024} showed that amortized analysis can be viewed as a lax commutative square using additional type constructors of the enriched effect calculus (EEC)~\citep{egger-mogelberg-simpson>2009,egger-mogelberg-simpson>2014} and linear/non-linear type theory (LNL)~\citep{benton>1995,krishnaswami-pradic-benton>2015}%
\footnote{The $\TensorC{\A}{\B}$ and $\LolliC{\A}{\B}$ types of LNL make use the commutativity of the cost monoid.}%
, including pure functions $\ExpV{\X}{\Y}$, homomorphisms $\LolliV{\A}{\B}$, sums $\ZeroC$ and $\SumC{\A}{\B}$, and copowers $\CopowerC{\X}{\A}$.
\begin{align*}
  \text{Val.}~\X,\Y,\Z &\coloncolonequals \cdots \mid \ExpV{\X}{\Y} \mid \DProdV{x : \X}{\Y(x)} \mid \LolliV{\A}{\B} \\
  \text{Comp.}~\A,\B,\C &\coloncolonequals \cdots \mid \underbrace{\ZeroC \mid \SumC{\A}{\B} \mid \CopowerC{\X}{\A} \mid \CopowerC{(x : \X)}{\A(x)}}_\text{EEC} \mid \underbrace{\TopC \mid \TensorC{\A}{\B} \mid \LolliC{\A}{\B}}_\text{LNL}
\end{align*}
Using homomorphisms $\LolliV{\A}{\B}$, \citet{grodin-harper>2024} describe a generalization of amortized analysis centered on computation types.
For an implementation type $\A[\top]$, let $g_\top : \LolliV{\A[\top]}{\A[\top]}$ be an operation on the data structure annotated with a realistic cost model; and for a specification type $\A[\ABS]$, let $g_{\ABS} : \LolliV{\A[\ABS]}{\A[\ABS]}$ be an analogous operation instead annotated with a purported amortized cost.
These two programs can be connected by defining a homomorphism $\alpha : \LolliV{\A[\top]}{\A[\ABS]}$ such that the inequality $\alpha \circ g_\top \le g_{\ABS} \circ \alpha$ holds, depicted as the following lax commutative square:
\[\begin{tikzcd}
	{\A[\top]} & {\A[\top]} \\
	{\A[\ABS]} & {\A[\ABS]}
	\arrow[mmv, "{g_\top}", from=1-1, to=1-2]
	\arrow[mmv, ""{name=0, anchor=center, inner sep=0}, "\alpha"', from=1-1, to=2-1]
	\arrow[mmv, ""{name=1, anchor=center, inner sep=0}, "\alpha", from=1-2, to=2-2]
	\arrow[mmv, "{g_{\ABS}}"', from=2-1, to=2-2]
	\arrow["\ge"{description}, draw=none, from=1, to=0]
\end{tikzcd}\]
In the case where $\A[\top] = \F{\X[\top]}$ and $\A[\ABS] = \F{\X[\ABS]}$, the maps above are of the following form:
\begin{align*}
  g_\top~(\ret{x_\top}) &\isdef \charge{c_\top(x_\top)}[\ret[f_\top(x_\top)]] &c_\top &: \ExpV{\X[\top]}{\Cost} &f_\top &: \ExpV{\X[\top]}{\X[\top]} \\
  g_{\ABS}~(\ret{x_{\ABS}}) &\isdef \charge{c_{\ABS}(x_{\ABS})}[\ret[f_{\ABS}(x_{\ABS})]] &c_{\ABS} &: \ExpV{\X[\ABS]}{\Cost} &f_{\ABS} &: \ExpV{\X[\ABS]}{\X[\ABS]} \\
  \alpha~(\ret{x_\top}) &\isdef \charge{\Phi(x_\top)}[\ret[\chi(x_\top)]] &\Phi &: \ExpV{\X[\top]}{\Cost} &\chi &: \ExpV{\X[\top]}{\X[\ABS]}
\end{align*}
\citeauthor{grodin-harper>2024} observe that the cost aspect of the inequality $\alpha \circ g_\top \le g_{\ABS} \circ \alpha$, in this case, is precisely the conservation of potential condition of \cref{eq:amortization}.
For this reason, they refer to the lax commutative square as a ``generalized amortization condition'' and to $\alpha$ as a ``behavior-relevant generalization of potential functions''.
In this work, inspired by the induced equation $\chi \circ f_\top = f_{\ABS} \circ \chi$, we take a dual perspective: we treat $\alpha$ as a cost-aware generalization of an \emph{abstraction function}.

\subsection{Abstraction Functions as Types}

To support modular abstract data types within dependent type theory, we build
on recent work on incorporating
\emph{abstraction functions}~\citep{hoare>1972} into dependent type theory~\citep{grodin-li-harper>2026}.
The authors propose an \emph{abstract phase}, a proposition $\AbsV$, to isolate abstract/public/interface-level data from concrete/private/implementation-level data using modalities in homotopy type theory~\citep{sterling-harper>2021,rijke-shulman-spitters>2020}.
This phase gives rise to a \emph{fracture and gluing} theorem guaranteeing that every type $\X$ contains precisely a concrete type, an abstract type, and an abstraction function between the two, all accessible via modal constructions defined using $\AbsV$.

When a function $\chi : \ExpV{\X[\top]}{\X[\ABS]}$ is assembled into a type $\X$, functions on $\X$ must respect abstraction.
For example, a function $f : \ExpV{\X}{\X}$ can be built out of a function $f_\top : \ExpV{\X[\top]}{\X[\top]}$ on the concrete representation, a function $f_{\ABS} : \ExpV{\X[\ABS]}{\X[\ABS]}$ on the abstract representation, and a proof of coherence up to $\chi$ (below, left).
This technique facilitates modularity by ensuring that, in addition to a concrete implementation, every program also contains a stable abstract specification on which clients can depend.
Client code is then verified in the abstract phase (\ie, assuming $\AbsV$), where both the concrete implementation and the abstraction function are erased: $\X = \X[\ABS]$ and $f = f_{\ABS}$.

\begin{center}
  \begin{minipage}{0.5\textwidth}
\[\begin{tikzcd}
	{\X[\top]} & {\X[\top]} \\
	{\X[\ABS]} & {\X[\ABS]}
	\arrow[arrv, "{f_\top}", from=1-1, to=1-2]
	\arrow[arrv, ""{name=0, anchor=center, inner sep=0}, "\chi"', from=1-1, to=2-1]
	\arrow[arrv, ""{name=1, anchor=center, inner sep=0}, "\chi", from=1-2, to=2-2]
	\arrow[arrv, "{f_{\ABS}}"', from=2-1, to=2-2]
	\arrow["{=}"{description}, draw=none, from=1, to=0]
\end{tikzcd}\]
  \end{minipage}%
  \begin{minipage}{0.5\textwidth}
\[\begin{tikzcd}
	{\X[\top]} & {\U[\F{\X[\top]}]} \\
	{\X[\ABS]} & {\U[\F{\X[\ABS]}]}
	\arrow[arrv, "{f_\top}", from=1-1, to=1-2]
	\arrow[arrv, ""{name=0, anchor=center, inner sep=0}, "\chi"', from=1-1, to=2-1]
	\arrow[arrv, ""{name=1, anchor=center, inner sep=0}, "{\U[\F{\chi}]}", from=1-2, to=2-2]
	\arrow[arrv, "{f_{\ABS}}"', from=2-1, to=2-2]
	\arrow["{\ge}"{description}, draw=none, from=1, to=0]
\end{tikzcd}\]
  \end{minipage}%
\end{center}

\citet{grodin-li-harper>2026} extend this technique to accommodate the cost effect.
Let $f_\top$ and $f_{\ABS}$ be functions satisfying the depicted program inequality (above, right), meaning that they cohere behaviorally and the cost of $f_\top$ is upper-bounded by the cost of $f_{\ABS}$.
Just as the behavior of $f_{\ABS}$ is a client-facing approximation of the true behavior of $f_\top$, the cost annotation within $f_{\ABS}$ is a client-facing upper-bound on the cost of $f_\top$.
This construction does not account for amortization exactly because the abstraction function $\chi : \ExpV{\X[\top]}{\X[\ABS]}$ is pure and thus cannot incorporate potential via the cost effect.
In the present work, we extend this development to enable the assembly of a cost-aware homomorphism $\alpha : \LolliV{\A[\top]}{\A[\ABS]}$ into a type, thus accommodating both abstraction and potential.
The cost of the client-facing $f_{\ABS} : \LolliV{\A[\ABS]}{\A[\ABS]}$ is then an \emph{amortized} upper-bound on the cost of $f_{\top} : \LolliV{\A[\top]}{\A[\top]}$, facilitating modular verification with amortized costs.

\subsection{AARA: Automatic Amortized Resource Analysis}\label{sec:introduction:aara}

To automatically infer cost bounds within our type theory,
we build on ideas from AARA, a family of substructural type systems~\citep{hofmann-jost>2003}
based on the banker's view.
There are many variants of AARA that have been developed over more
than two decades~\citep{hoffmann-jost>2022}; we focus on a core
language approximately based on that of \citet{hoffmann-hofmann>2010}.

Syntactically in AARA, types include credits; for example, the type $\Credit{c}{\A}$ stores $c$ credits alongside data of type $\A$.
Semantically, connecting to the physicist's method, every type describes not only a set of values $\AARAValues{\A}$ but also a potential function $\Phi_{\A} : \ExpV{\AARAValues{\A}}{\Cost}$ that computes the potential/credits contained within a type.
For example, $\AARAValues{\Credit{c}{\A}} \isdef \AARAValues{\A}$ and $\Phi_{\Credit{c}{\A}}(a) \isdef c + \Phi_{\A}(a)$.
Syntactically, the affine treatment of credits in AARA
ensures that credits are never duplicated; semantically, this appears
in the soundness theorem for AARA as a conservation of potential
theorem for $\Phi_{\A}$.

The design of AARA is motivated by the desire to reduce automated cost
inference to a linear programming (LP) problem.
However, this limits expressivity, requiring potential functions (usually polynomials) to be selected \emph{a priori} with limited support for manual
verification~\cite{pham-niu-glover-saad-hoffmann>2025}.

\begin{samepage}
In reference to languages in the tradition of AARA, \citeauthor{niu-sterling-grodin-harper>2022} make the following remark:
\begin{myquote}{niu-sterling-grodin-harper>2022}
  \dots it is not immediately clear how one may take better advantage of the existing type-based approaches to amortized analysis in Calf.
\end{myquote}
In this work, using the idea that potential functions can be built into types as a bridge between Calf and AARA, we incorporate credits into types to make precisely this connection.
\end{samepage}

\subsection{Contributions}

In this work, we synthesize these three major ideas: potential as
the cost of an abstraction function, abstraction functions built into types, and credits as a substructural type former.
\begin{enumerate}
\item We extend Calf to natively support a modular account of
  amortized analysis within dependent type theory, reconstructing the
  physicist's method.
  Specifically, in \cref{sec:physicist} we extend the
  work of \citet{grodin-li-harper>2026} on building abstraction
  functions into types, proving a fracture and gluing theorem for the
  universe of computation types that enable cost-aware abstraction functions that emit potential to be built into types.
  In \cref{sec:modularity}, we observe that this approach facilitates a modular notion of amortized cost interface via the abstract phase.

\item In \cref{sec:banker}, we develop a standard library for credits,
  debits, and credit-carrying data structures based on the banker's
  method inside the dependent type theory.

\item
  In \cref{sec:giralf} we define Giralf, an AARA-like graded substructural type theory that streamlines programming with credits and debits.
  Then, by providing it with a semantics in Calf, we demonstrate formally the sense in which Giralf generalizes AARA.
  Moreover, we adapt the LP-based cost
  inference algorithm for AARA to Giralf, generating certificates in Calf
  guaranteeing the soundness of inferred bounds.

\end{enumerate}
The central theorems and constructions of this work are mechanized in Cubical Agda~\citep{norell>2009,vezzosi-mortberg-abel>2019}, indicated by the \AgdaFormalized{} symbol.

\section{The Physicist's View}\label{sec:physicist}

To incorporate potential functions into types, we extend the technique of \citet{grodin-li-harper>2026} that builds functions $\ExpV{\X[\top]}{\X[\ABS]}$ into value types to the level of computation types, where a homomorphism $\LolliV{\A[\top]}{\A[\ABS]}$ can emit cost to be thought of as potential.
This development takes place in a univalent type theory that is merely extended with a single proposition, the abstract phase $\AbsV$; the rest of the constructions are defined in terms of this proposition.

\subsection{Abstraction Functions as Types: Hoare's Abstraction Functions, Synthetically}

First, we briefly review the work of \citet{grodin-li-harper>2026} that makes use of the modalities of \citet{rijke-shulman-spitters>2020} to incorporate abstraction functions into types.
The key result is a \emph{fracture and gluing} theorem stating that every type contains exactly a concrete type, an abstract type, and an abstraction function between them.
We first recall the modalities that isolate concrete and abstract types.

\begin{definition}[Abstract Modality, \citealp{grodin-li-harper>2026} \AgdaFormalized{Calf.Value.Open}]\label{def:open-modality}
  The \emph{abstract modality} $\OpenV{\X} \isdef \ExpV{\AbsV}{\X}$ marks a type as client-facing,
  where its unit $\etaOpV[\X] : \ExpV{\X}{\OpenV{\X}}$ is $\vlam{x}{\vlam{\abs*}{x}}$.
  A value type $\X$ is \emph{abstract} when $\etaOpV[\X]$ is an equivalence, and write $\VV[\op]$ for the universe of abstract value types.
\end{definition}

\begin{definition}[Concrete Modality, \citealp{grodin-li-harper>2026} \AgdaFormalized{Calf.Value.Closed}]\label{def:closed-modality}
  The \emph{concrete modality} $\ClosedV{\X}$ marks a type as irrelevant to clients.
  It is defined as the following higher-inductive type (a pushout):
  \begin{center}
    \begin{minipage}{0.5\linewidth}
      \begin{center}
      \iblock{
        \mhang{\KW{data}~\ClosedV~(\X : \VV) : \VV~\KW{where}}{
          \mrow{\LABEL{\etaClV[\X]} : \ExpV{\X}{\ClosedV{\X}}}
          \mrow{\LABEL{\starClV} : \ExpV{\AbsV}{\ClosedV{\X}}}
          \mrow{\LABEL{\_} : \DProdV*{x : \X}{\DProdV{\abs : \AbsV}{\LABEL{\etaClV[\X]} x = \LABEL{\starClV}~\abs}}}
        }
      }
      \end{center}
    \end{minipage}%
    \begin{minipage}{0.5\linewidth}
      \[\begin{tikzcd} {\ProdV{\AbsV}{\X}} & \AbsV \\
        \X & \ClosedV{\X}
        \arrow[arrv, "{\LABEL{proj}_2}"', from=1-1, to=2-1]
        \arrow[arrv, "{\LABEL{proj}_1}", from=1-1, to=1-2]
        \arrow[arrv, "{\starClV}", from=1-2, to=2-2]
        \arrow[arrv, "{\etaClV}", from=2-1, to=2-2]
        \arrow["\lrcorner"{anchor=center, pos=0.125, rotate=180, color=RegalBlue}, draw=none, from=2-2, to=1-1]
      \end{tikzcd}\]
    \end{minipage}%
  \end{center}
  Say that a value type $\X$ is \emph{concrete} when $\etaClV[\X]$ is an equivalence
  (or equivalently when $\OpenV{\X} = \UnitV$)
  and write $\VV[\cl]$ for the universe of concrete value types.
\end{definition}

For a concrete type $\X[\cl] : \VV[\cl]$ and an abstract type $\X[\op] : \VV[\op]$, an \emph{abstraction function} in the style of \citet{hoare>1972} is a function $\chi_{\cl} : \ExpV{\X[\cl]}{\ClosedV{\X[\op]}}$, where the concrete modality on the output ensures that the abstraction itself is entirely concrete and hidden from the interface.
Based on these definitions, we recall the fracture and gluing theorem, which proves that every type in the universe $\VV$ carries precisely the same data as a concrete type, an abstract type, and an abstraction function.

\begin{theorem}[Fracture and Gluing, \citealp{rijke-shulman-spitters>2020} {\AgdaFormalized[fracture-and-gluing]{Calf.Value.Glue}}]\label{thm:vfracglue}
  The following equivalence holds:
  \[
    \VV = \DSumV{\X[\cl] : \VV[\cl]}{\DSumV{\X[\op] : \VV[\op]}{\ExpV{\X[\cl]}{\ClosedV{\X[\op]}}}}
  \]
\end{theorem}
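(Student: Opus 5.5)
We construct maps in both directions and show they are mutually inverse, using univalence throughout to identify types with equivalences.

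\textbf{Fracture.} For the map from left to right, send $\X : \VV$ to the triple $(\ClosedV{\X}, \OpenV{\X}, \chi)$. Here $\chi : \ExpV{\ClosedV{\X}}{\ClosedV{\OpenV{\X}}}$ is the functorial action of $\ClosedV$ on the unit $\etaOpV[\X]$. Defined by the recursor of the pushout, it sends $\etaClV[\X]~x$ to $\etaClV(\etaOpV[\X]~x)$ and $\starClV~\abs$ to $\starClV~\abs$, with the path constructor sent to the path constructor.

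We must check that $\ClosedV{\X}$ is concrete and $\OpenV{\X}$ is abstract. Both are standard facts about lex/idempotent modalities:
\begin{itemize}
  \item $\OpenV{\OpenV{\X}} = \ExpV{\AbsV}{\ExpV{\AbsV}{\X}} = \OpenV{\X}$, because $\AbsV$ is a proposition, so $\ProdV{\AbsV}{\AbsV} = \AbsV$.
  \item $\OpenV{\ClosedV{\X}}$ is contractible: under $\abs : \AbsV$, every point of the pushout is identified with $\starClV~\abs$, so $\ClosedV{\X}$ is contractible.
\end{itemize}

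\textbf{Gluing.} For the map from right to left, given $(\X[\cl], \X[\op], \chi_{\cl})$, take the pullback
\[
  \X \isdef \DSumV{x_\cl : \X[\cl]}{\DSumV{x_\op : \X[\op]}{\chi_{\cl}(x_\cl) = \etaClV(x_\op)}},
\]
that is, the pullback of $\chi_{\cl}$ along $\etaClV[\X[\op]] : \ExpV{\X[\op]}{\ClosedV{\X[\op]}}$.

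\textbf{Right-then-left is the identity.} We must show that the glued type has $\ClosedV$-part $\X[\cl]$, $\OpenV$-part $\X[\op]$, and recovered map $\chi_{\cl}$.
\begin{itemize}
  \item \emph{Open part.} Applying $\OpenV$ commutes with the pullback, since $\OpenV$ is a right adjoint. Under $\AbsV$, the type $\ClosedV{\X[\op]}$ is contractible, $\X[\cl]$ is contractible (as $\X[\cl]$ is concrete), and $\X[\op]$ is fixed (as $\X[\op]$ is abstract). So $\OpenV{\X} = \OpenV{\X[\op]} = \X[\op]$.
  \item \emph{Closed part.} We need $\ClosedV$ to preserve this pullback, i.e.\ that $\ClosedV$ is lex. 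Lexness holds because $\ClosedV$ is the closed modality of a proposition, hence a lex modality (Rijke--Shulman--Spitters). Then $\ClosedV{\X}$ is the pullback of $\chi_{\cl}$ along $\ClosedV$ applied to $\etaClV$. The latter is an equivalence by idempotence, so $\ClosedV{\X} = \ClosedV{\X[\cl]} = \X[\cl]$.
  \item \emph{Recovered map.} Tracking $\chi$ through these identifications recovers $\chi_{\cl}$, which needs a coherence check on the naturality squares.
\end{itemize}

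\textbf{Left-then-right is the identity.} We must show that every $\X$ is the pullback of $\ClosedV{\X} \to \ClosedV{\OpenV{\X}} \leftarrow \OpenV{\X}$. This is the fracture square. The canonical map from $\X$ into the pullback is given by $x \mapsto (\etaClV~x, \etaOpV~x, \mathsf{refl})$. To show it is an equivalence, it suffices to show it is both $\OpenV$-connected and $\ClosedV$-connected, or equivalently that it becomes an equivalence after applying each modality. The two modalities are jointly conservative, by Artin gluing for the open/closed pair associated to $\AbsV$.
\begin{itemize}
  \item Under $\AbsV$ everything reduces to $\X \to \OpenV{\X}$, which is an equivalence.
  \item Under $\ClosedV$, use lexness again: the $\OpenV{\X} \to \ClosedV{\OpenV{\X}}$ leg becomes an equivalence after $\ClosedV$.
\end{itemize}

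\textbf{Main obstacle.} The main obstacle is establishing the joint conservativity (equivalently, the fracture square) and the lexness of $\ClosedV$ directly in HoTT. I would first prove conservativity fiberwise: a type $\Y$ with both $\OpenV{\Y}$ and $\ClosedV{\Y}$ contractible is contractible. To do so, use $\ClosedV{\Y}$ contractible to obtain, by the pushout induction principle, an element of $\ExpV{\AbsV}{\Y}$-or-$\Y$ data. Then glue this using the contractibility of $\OpenV{\Y}$. Apply this to the fibers of the comparison map. Since we may invoke Rijke--Shulman--Spitters, these facts can ultimately be cited rather than reproved.
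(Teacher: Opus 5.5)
Your proposal is correct and takes the paper's route: the paper fractures $\X$ as $(\ClosedV{\X}, \OpenV{\X}, \ClosedV{\etaOpV[\X]})$, glues $(\X[\cl], \X[\op], \chi_\cl)$ as the pullback of $\chi_\cl$ against $\etaClV$, and leaves the round-trip verifications to \citet{rijke-shulman-spitters>2020}, which are the same constructions and citations (lexness of the concrete modality, the fracture square) that you use. Your round-trip checks also follow the argument the paper writes out for the computation-level analogue (\cref{thm:cfracglue}): you show the canonical map $x \mapsto (\etaClV(x), \etaOpV(x))$ into the fracture pullback is an equivalence, and you use lexness plus idempotence to recover the concrete and abstract parts of a glued type.
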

\begin{proof}[Proof Sketch]
  In the forward direction, fracture a type $\X$ into the concrete type $\X[\cl] \isdef \ClosedV{\X}$, the abstract type $\X[\op] \isdef \OpenV{\X}$, and the abstraction function $\ClosedV{\etaOpV[\X]} : \ExpV{\ClosedV{\X}}{\ClosedV{\OpenV{\X}}}$.
  In the reverse direction, define $\GlueV{\X[\cl]}{\X[\op]}{\chi_\cl} \isdef \ProdV{\X[\cl]}{\X[\op]}[\ClosedV{\X[\op]}] \isdef \DSumV{(x_\cl, x_\op) : \ProdV{\X[\cl]}{\X[\op]}}{\chi_\cl(x_\cl) = \etaClV(x_\op)}$.
\end{proof}
\begin{corollary}[{\AgdaFormalized[fracture-and-gluing-square]{Calf.Value.Glue}}]\label{cor:vfracglue-map}
  Every $f : \ExpV{\X}{\Y}$ consists of a concrete function $f_\cl : \ExpV{\ClosedV{\X}}{\ClosedV{\Y}}$ and an abstract function $f_\op : \ExpV{\OpenV{\X}}{\OpenV{\Y}}$ that cohere (formally, $\ClosedV{\etaOpV[\Y]} \circ f_\cl = \ClosedV[f_\op \circ \etaOpV[\X]]$).
\end{corollary}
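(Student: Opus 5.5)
The plan is to apply functoriality of the two modalities and then check the coherence square using only the naturality of the unit $\etaOpV$. Concretely, given $f : \ExpV{\X}{\Y}$, I set $f_\cl \isdef \ClosedV{f} : \ExpV{\ClosedV{\X}}{\ClosedV{\Y}}$. This is the functorial action of the pushout $\ClosedV$, defined by recursion: it sends $\etaClV(x)$ to $\etaClV(f\,x)$ and $\starClV\,\abs$ to $\starClV\,\abs$, and the path constructor to the path constructor. Likewise I set $f_\op \isdef \OpenV{f} \isdef \vlam{g}{f \circ g} : \ExpV{\OpenV{\X}}{\OpenV{\Y}}$.

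The key identity is the naturality of $\etaOpV$, namely $f_\op \circ \etaOpV[\X] = \etaOpV[\Y] \circ f$. It holds definitionally, since both sides send $x$ to $\vlam{\abs*}{f\,x}$. From this I compute
\[
  \ClosedV{\etaOpV[\Y]} \circ f_\cl = \ClosedV{\etaOpV[\Y]} \circ \ClosedV{f} = \ClosedV[\etaOpV[\Y] \circ f] = \ClosedV[f_\op \circ \etaOpV[\X]],
\]
using functoriality of $\ClosedV$, that is, preservation of composition. That preservation is proved by induction on $\ClosedV{\X}$: the point constructors agree definitionally, and the path case is a routine square filling.

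To justify that $f$ ``consists of'' exactly this data, I would go further and show that maps $\ExpV{\X}{\Y}$ are equivalent to triples $(f_\cl, f_\op, p)$. Here $p$ is a proof of the stated coherence, with $f_\cl$ and $f_\op$ taken arbitrary rather than induced. The route is through \cref{thm:vfracglue}: replace $\Y$ by $\GlueV{\ClosedV{\Y}}{\OpenV{\Y}}{\ClosedV{\etaOpV[\Y]}}$. A map into this pullback is then a map $\ExpV{\X}{\ClosedV{\Y}}$, a map $\ExpV{\X}{\OpenV{\Y}}$, and a homotopy between their composites into $\ClosedV{\OpenV{\Y}}$.

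Next I use the universal properties of the two modalities. Maps $\ExpV{\X}{\ClosedV{\Y}}$ correspond to maps $\ExpV{\ClosedV{\X}}{\ClosedV{\Y}}$, because $\ClosedV{\Y}$ is concrete and $\ClosedV$ is a modality. Similarly, maps $\ExpV{\X}{\OpenV{\Y}}$ correspond to maps $\ExpV{\OpenV{\X}}{\OpenV{\Y}}$. Transporting the homotopy along these equivalences yields the coherence equation of the statement, precomposed with $\etaClV[\X]$. Since $\ClosedV{\OpenV{\Y}}$ is concrete, that precomposition is itself an equivalence on paths, so the restriction can be removed.

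I expect the main obstacle to be this last transport. The homotopy has to be rewritten so that it matches the stated shape on the nose, and that requires the naturality square of $\etaClV$ together with the fact that $\ClosedV$ is lex, so that it commutes with the path types involved. Since the statement only asks for the decomposition and its coherence, the first, explicit construction already suffices. I would present it, and treat the equivalence as a remark that follows from \cref{thm:vfracglue}.
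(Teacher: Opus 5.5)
Your second argument, the equivalence, is correct, and it is how this corollary follows from \cref{thm:vfracglue}. You fracture $\Y$, use the universal property of the pullback, and move maps out of $\X$ to maps out of $\ClosedV{\X}$ and $\OpenV{\X}$ via the universal properties of the two modalities. The problem is your decision to present only the first construction and treat the equivalence as a remark. That construction is plain functoriality plus naturality of $\etaOpV$. It never uses \cref{thm:vfracglue} and gives only the forgetful direction $f \mapsto (\ClosedV{f}, \OpenV{f}, \ldots)$. Here ``consists of'' means an equivalence, and the paper relies on the converse direction. \cref{lem:AbstractionV-exp} says that \emph{to define} a map between glued types it suffices to give $f_\top$, $f_{\ABS}$ and a commuting square. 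That requires every coherent triple $(f_\cl, f_\op, p)$ to come from a unique $f : \ExpV{\X}{\Y}$, which only your equivalence argument delivers. So the equivalence should be the proof. The first construction then just identifies its forward leg: the extension of $\etaClV \circ f$ along $\etaClV[\X]$ is $\ClosedV{f}$ by definition, and the extension of $\etaOpV[\Y] \circ f$ along $\etaOpV[\X]$ is $\OpenV{f}$.

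The obstacle you anticipate in the final step is not there, and lexness of $\ClosedV$ plays no role. The pullback coherence reads $\ClosedV{\etaOpV[\Y]} \circ f_\cl \circ \etaClV[\X] = \etaClV \circ f_\op \circ \etaOpV[\X]$. Naturality $\ClosedV{k} \circ \etaClV[\X] = \etaClV \circ k$ holds definitionally by the computation rule of the higher inductive type. It rewrites the right-hand side as $\ClosedV[f_\op \circ \etaOpV[\X]] \circ \etaClV[\X]$. Since $\ClosedV{\OpenV{\Y}}$ is concrete, precomposition with $\etaClV[\X]$ is an equivalence $\ExpV{\ClosedV{\X}}{\ClosedV{\OpenV{\Y}}} \simeq \ExpV{\X}{\ClosedV{\OpenV{\Y}}}$. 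Its action on identity types is therefore an equivalence too, which is all you need to strip the precomposition and obtain the stated coherence.
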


Using gluing, it is possible to treat any function $\chi : \ExpV{\X[\top]}{\X[\ABS]}$ as an abstraction function $\ClosedV[\etaOpV \circ \chi] : \ExpV{\ClosedV{\X[\top]}}{\ClosedV{\OpenV{\X[\ABS]}}}$ and thus build it into a type; accordingly, we sometimes refer to $\chi$ as an abstraction function.
For convenience, we define a utility that builds an arbitrary $\chi$ into a type.
\begin{definition}[{\AgdaFormalized[Abstraction]{Calf.Value.Abstraction}}]\label{def:AbstractionV}
  Define $
    \GlueeV{\X[\top]}{\X[\ABS]}{\chi} \isdef \GlueV{\ClosedV{\X[\top]}}{\OpenV{\X[\ABS]}}{\ClosedV[\etaOpV \circ \chi]}
  $.
\end{definition}
\begin{corollary}[{\AgdaFormalized[Abstraction-id]{Calf.Value.Abstraction}}]\label{cor:AbstractionV-id}
  By \cref{thm:vfracglue}, it is the case that $\X = \GlueeV{\X}{\X}{\idV[\X]}$ for all $\X$.
\end{corollary}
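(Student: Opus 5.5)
We need $\X = \GlueeV{\X}{\X}{\idV[\X]}$. Unfolding \cref{def:AbstractionV} with $\chi \isdef \idV[\X]$ gives
\[
  \GlueeV{\X}{\X}{\idV[\X]} = \GlueV{\ClosedV{\X}}{\OpenV{\X}}{\ClosedV[\etaOpV[\X] \circ \idV[\X]]} = \GlueV{\ClosedV{\X}}{\OpenV{\X}}{\ClosedV{\etaOpV[\X]}},
\]
using $\etaOpV[\X] \circ \idV[\X] = \etaOpV[\X]$ definitionally. The plan is to recognize the right-hand side as the round trip of the equivalence of \cref{thm:vfracglue}, first fracturing and then gluing.

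In the forward direction, the proof of \cref{thm:vfracglue} fractures $\X$ into exactly the triple $(\ClosedV{\X}, \OpenV{\X}, \ClosedV{\etaOpV[\X]})$. The reverse direction sends this triple to $\GlueV{\ClosedV{\X}}{\OpenV{\X}}{\ClosedV{\etaOpV[\X]}}$. Since the two maps form an equivalence, the composite gluing $\circ$ fracture is homotopic to the identity on $\VV$. Evaluating this homotopy at $\X$ yields the required path $\X = \GlueeV{\X}{\X}{\idV[\X]}$. By univalence, this path is the canonical map $x \mapsto (\etaClV(x), \etaOpV(x), \mathsf{refl})$ into the pullback $\ProdV{\ClosedV{\X}}{\OpenV{\X}}[\ClosedV{\OpenV{\X}}]$.

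There are two bookkeeping checks. First, $\ClosedV{\X}$ must actually lie in $\VV[\cl]$ and $\OpenV{\X}$ in $\VV[\op]$, so that the triple is well-typed in the $\Sigma$-type. This holds because both are modalities and their images are modal; it is already used in the theorem's forward map. Second, the abstraction-function component must match exactly, with no transport needed. It does, because $\ClosedV[\etaOpV \circ \chi]$ with $\chi = \idV$ reduces to $\ClosedV{\etaOpV[\X]}$ at the level of the functorial action of $\ClosedV$.

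The main obstacle, if there is one, lies in this second check. It depends on how $\ClosedV$ acts on maps, defined by pushout recursion. If $\ClosedV[f \circ \idV] = \ClosedV{f}$ fails to be definitional in the chosen presentation, I would supply it by function extensionality plus induction on the higher-inductive type $\ClosedV{\X}$. The $\etaClV$ and $\starClV$ cases are each refl, and the path-constructor case is trivial since the two maps agree definitionally on the points. Otherwise, the statement is a direct instance of the section/retraction half of \cref{thm:vfracglue}.
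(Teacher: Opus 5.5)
Your proposal is correct and matches the paper's argument: unfolding \cref{def:AbstractionV} at $\chi \isdef \idV[\X]$ gives exactly the gluing of the fracture $(\ClosedV{\X}, \OpenV{\X}, \ClosedV{\etaOpV[\X]})$, and that gluing equals $\X$ by the gluing-after-fracture half of \cref{thm:vfracglue}. Your fallback induction on $\ClosedV{\X}$ is unnecessary, because $\etaOpV[\X] \circ \idV[\X] = \etaOpV[\X]$ already holds by function $\eta$, and congruence of the action of $\ClosedV$ on maps then finishes that step.
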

\begin{lemma}[{\AgdaFormalized[square']{Calf.Value.Abstraction}}]\label{lem:AbstractionV-exp}
  To define a map
  $f : \ExpV{\GlueeV{\X[\top]}{\X[\ABS]}{\chi}}{\GlueeV{\Y[\top]}{\Y[\ABS]}{\psi}}$,
  it suffices by \cref{cor:vfracglue-map} to define a pair of functions $f_\top : \ExpV{\X[\top]}{\Y[\top]}$ and $f_{\ABS} : \ExpV{\X[\ABS]}{\Y[\ABS]}$ such that
  \[\begin{tikzcd}
    {\X[\top]} & {\Y[\top]} \\
    {\X[\ABS]} & {\Y[\ABS]}
    \arrow[arrv, "{f_\top}", from=1-1, to=1-2]
    \arrow[arrv, ""{name=0, anchor=center, inner sep=0}, "\chi"', from=1-1, to=2-1]
    \arrow[arrv, ""{name=1, anchor=center, inner sep=0}, "\psi", from=1-2, to=2-2]
    \arrow[arrv, "{f_{\ABS}}"', from=2-1, to=2-2]
    \arrow["{=}"{description}, draw=none, from=1, to=0]
  \end{tikzcd}\]
  using $f_\cl = \ClosedV{f_\top}$ and $f_\op = \OpenV{f_{\ABS}}$.
\end{lemma}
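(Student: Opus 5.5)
We apply \cref{cor:vfracglue-map} to the glued types $\GlueeV{\X[\top]}{\X[\ABS]}{\chi}$ and $\GlueeV{\Y[\top]}{\Y[\ABS]}{\psi}$. First we compute their concrete and abstract parts using \cref{thm:vfracglue}. Gluing is inverse to fracture, so the concrete part of $\GlueeV{\X[\top]}{\X[\ABS]}{\chi}$ is $\ClosedV{\X[\top]}$ and its abstract part is $\OpenV{\X[\ABS]}$. Moreover, the induced abstraction function $\ClosedV{\etaOpV}$ is identified with $\ClosedV[\etaOpV \circ \chi]$, and the analogous identifications hold for $\psi$. Hence \cref{cor:vfracglue-map} says that a map $f$ between the glued types amounts to three pieces of data:
\begin{itemize}
  \item a concrete map $f_\cl : \ExpV{\ClosedV{\X[\top]}}{\ClosedV{\Y[\top]}}$;
  \item an abstract map $f_\op : \ExpV{\OpenV{\X[\ABS]}}{\OpenV{\Y[\ABS]}}$;
  \item a coherence $\ClosedV[\etaOpV \circ \psi] \circ f_\cl = \ClosedV[f_\op \circ \etaOpV \circ \chi]$ of maps $\ClosedV{\X[\top]} \to \ClosedV{\OpenV{\Y[\ABS]}}$.
\end{itemize}
These are transported along the equivalences above, using that $\ClosedV{\ClosedV{\X[\top]}} = \ClosedV{\X[\top]}$ and $\OpenV{\OpenV{\X[\ABS]}} = \OpenV{\X[\ABS]}$ by idempotence of the modalities.

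We take $f_\cl \isdef \ClosedV{f_\top}$ and $f_\op \isdef \OpenV{f_{\ABS}}$, the functorial actions of the modalities. It remains to produce the coherence from the hypothesis $\psi \circ f_\top = f_{\ABS} \circ \chi$. Functoriality of $\ClosedV$ rewrites the left side as $\ClosedV[\etaOpV \circ \psi \circ f_\top]$. Naturality of $\etaOpV$, namely $\OpenV{f_{\ABS}} \circ \etaOpV = \etaOpV \circ f_{\ABS}$, rewrites the right side as $\ClosedV[\etaOpV \circ f_{\ABS} \circ \chi]$. The two sides then agree by applying $\ClosedV[\etaOpV \circ -]$ to the hypothesis.

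The main obstacle is the bookkeeping in the first step. We must check that the equivalences identifying the fracture of a glued type with its gluing data transport the coherence condition of \cref{cor:vfracglue-map} to exactly the one stated. In particular, the unit $\etaClV$ applied to $\ClosedV{\X[\top]}$ must be matched with the identity, and the dependence on the $\ClosedV$-pullback structure of $\GlueV{}{}{}$ must be unwound. I would handle this by working directly with the pullback definition of $\GlueV{}{}{}$ from the proof of \cref{thm:vfracglue} rather than transporting.

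Concretely, $f$ sends a point $(x_\cl, x_\op, p)$ of the glued type to $(\ClosedV{f_\top}(x_\cl), \OpenV{f_{\ABS}}(x_\op), q)$. Here $q : \ClosedV[\etaOpV \circ \psi](\ClosedV{f_\top}(x_\cl)) = \etaClV(\OpenV{f_{\ABS}}(x_\op))$ is obtained as follows:
\begin{itemize}
  \item rewrite via the coherence square above at $x_\cl$;
  \item then apply $\ClosedV{\OpenV{f_{\ABS}}}$ to $p$;
  \item finally use naturality of $\etaClV$.
\end{itemize}
This explicit formula avoids the transport issues entirely, and it visibly restricts to $\ClosedV{f_\top}$ and $\OpenV{f_{\ABS}}$ on the two components, as the statement requires.
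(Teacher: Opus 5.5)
Your proposal is correct and follows the paper's approach. You apply \cref{cor:vfracglue-map} to the glued types with $f_\cl = \ClosedV{f_\top}$ and $f_\op = \OpenV{f_{\ABS}}$, and discharge the coherence using functoriality of $\ClosedV$, naturality of $\etaOpV$, and the hypothesis $\psi \circ f_\top = f_{\ABS} \circ \chi$. Your explicit map on the pullback is a sound way to sidestep the transport bookkeeping; its witness is built from the coherence, $\ClosedV{\OpenV{f_{\ABS}}}$ applied to $p$, and naturality of $\etaClV$, and it agrees with those components.
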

The equation, rendered as a square, is a functional analogue of the relational notion of representation independence~\citep{reynolds>1983}.

\begin{example}[{\citealp[\S 2.2.1]{grodin-li-harper>2026}}]\label{ex:afat-bq}
  Consider the batched queue functional data structure, which represents a queue as a pair of lists $(l_1, l_2)$~\citep{hood-melville>1981,burton>1982,gries>1989,okasaki>1999}.
  Elements are enqueued to the ``back'' list $l_2$ and dequeued from the ``front'' list $l_1$---unless, of course, the second list is empty, in which case the dequeue operation reverses $l_2$ to replace $l_1$.
  Although the implementation type of batched queues is $\LLV$, the generic client-facing specification type describing the mathematical behavior of queues is $\LV$; the former can be converted to the latter using the function $\chi \isdef \vlam{(l_1,l_2)}{\ListVAppend{l_1}{\DEFN{reverse}~l_2}}$.
  Both perspectives, mediated by $\chi$, can be built into a single type:
  \[ \X \isdef \GlueeV{\LLV}{\LV}{\chi}. \]
  To define a function $\DEFN{enqueue} : \ExpV{\NatV}{\ExpV{\X}{\X}}$, it suffices by \cref{lem:AbstractionV-exp} to give the true code alongside an abstract mathematical model that coheres up to $\chi$.
  Defining
  \begin{align*}
    \DEFN{enqueue}_\top~n~(l_1,l_2) \isdef (l_1, \ListVCons{n}{l_2})
    &&
    \DEFN{enqueue}_{\ABS}~n~l \isdef \ListVAppend{l}{\ListVSing{n}},
  \end{align*}
  observe that the following equation holds:
  \[\begin{tikzcd}
    \LLV && \LLV \\
    \LV && \LV
    \arrow[arrv, "{\DEFN{enqueue}_\top~n}", from=1-1, to=1-3]
    \arrow[arrv, ""{name=0, anchor=center, inner sep=0}, "\chi"', from=1-1, to=2-1]
    \arrow[arrv, ""{name=1, anchor=center, inner sep=0}, "\chi", from=1-3, to=2-3]
    \arrow[arrv, "{\DEFN{enqueue}_{\ABS}~n}"', from=2-1, to=2-3]
    \arrow["{=}"{description}, draw=none, from=1, to=0]
  \end{tikzcd}\]
  These data suffice to implement $\DEFN{enqueue}$, hiding the private batched implementation under a public list-based specification.
  The $\DEFN{empty}$ and $\DEFN{dequeue}$ operations are similar.
\end{example}

By \cref{thm:vfracglue}, a programmer knows that every type contains
a concrete type $\X[\cl]$, an abstract type $\X[\op]$, and
an abstraction function $\chi_\cl : \ExpV{\X[\cl]}{\ClosedV{\X[\op]}}$.
However, the meanings of ``concrete'' and ``abstract'' depend on the semantic interpretation of the phase proposition $\AbsV$~\citep{grodin-li-harper>2026}.

\paragraph{Semantics}
\begin{samepage}
  Using a Kripke (\ie, presheaf) semantics with two worlds $\{ \ABS \vdash \top \}$, every type is interpreted as a pair of types along with a genuine function, and every function $f : \ExpV{\X}{\Y}$ is interpreted as a coherent pair of functions:
  \begin{center}
    \begin{minipage}{0.5\textwidth}
  \[ \Interp{\X} \isdef \Presheaf[\Interp[\vdash]{\X}]{\Interp[\top]{\X}}{\Interp[\ABS]{\X}} \]
    \end{minipage}%
    \begin{minipage}{0.5\textwidth}
\[\Interp{f} \isdef \begin{tikzcd}
	{\Interp[\top]{\X}} & {\Interp[\top]{\Y}} \\
	{\Interp[\ABS]{\X}} & {\Interp[\vdash]{\Y}}
	\arrow["{\Interp[\top]{f}}", from=1-1, to=1-2]
	\arrow[""{name=0, anchor=center, inner sep=0}, "{\Interp[\vdash]{\X}}"', from=1-1, to=2-1]
	\arrow[""{name=1, anchor=center, inner sep=0}, "{\Interp[\vdash]{\Y}}", from=1-2, to=2-2]
	\arrow["{\Interp[\ABS]{f}}"', from=2-1, to=2-2]
	\arrow["{=}"{description}, draw=none, from=0, to=1]
\end{tikzcd}\]
    \end{minipage}%
  \end{center}
\end{samepage}

\subsection{Potential Functions as Types: Sleator's Potential Functions, Synthetically}

In the previous section, we recalled that a function $\chi : \ExpV{\X[\top]}{\X[\ABS]}$ can be assembled into a value type.
Now, we extend this construction to the level of computation types, enabling a homomorphism $\alpha : \LolliV{\A[\top]}{\A[\ABS]}$ to be assembled into a computation type.
Viewing the cost of such homomorphisms as potential \`a la \citet{grodin-harper>2024}, this will render \emph{potential functions as types}.

With the goal of proving an analogous fracture and gluing principle for the universe $\CC$ of computation types, we must first define analogues to the abstract and concrete modalities (\cref{def:open-modality,def:closed-modality}) at the level of computation types.
First, the abstract modality adapts straightforwardly:

\begin{definition}[{\AgdaFormalized{Calf.Computation.Open}}]
  The \emph{abstract modality on computation types}
  is defined as $\OpenC{\A} \isdef \PowerC{\AbsV}{\A}$, where its unit $\etaOpC[\A] : \LolliV{\A}{\OpenC{\A}}$ is defined as $\clam{a}{\plam{\abs*}{a}}$.
  Say that a computation type $\A$ is \emph{abstract} when $\etaOpC[\A]$ is an equivalence, and write $\CC[\op]$ for the universe of abstract computation types.
\end{definition}

In \cref{def:closed-modality}, the concrete modality $\ClosedV{\X}$ on value types is defined as a pushout of $\X$ and $\AbsV$ over the product type $\ProdV{\AbsV}{\X}$.
To define the concrete modality $\ClosedC{\A}$ on computation types, we replace $\AbsV$ and $\ProdV{\AbsV}{\X}$ with the copowers $\CopowerC{\AbsV}{\UnitC}$ and $\CopowerC{\AbsV}{\A}$, respectively.

\begin{definition}[{\AgdaFormalized{Calf.Computation.Closed}}]
  The \emph{concrete modality on computation types}
  is the following pushout:
      \[\begin{tikzcd}
        {\CopowerC{\AbsV}{\A}} & {\CopowerC{\AbsV}{\UnitC}} \\
        \A & {\ClosedC{\A}}
        \arrow[mmv, "{\CopowerC{\AbsV}{\mathsf{triv}}}", from=1-1, to=1-2]
        \arrow[mmv, "{\mathsf{proj}}"', from=1-1, to=2-1]
        \arrow[mmv, "\starClC", from=1-2, to=2-2]
        \arrow[mmv, "{\etaClC[\A]}", from=2-1, to=2-2]
        \arrow["\lrcorner"{anchor=center, pos=0.125, rotate=180, color=RedDevil}, draw=none, from=2-2, to=1-1]
      \end{tikzcd}\]
  Crucially, given $\abs : \AbsV$, we have $\ClosedC{\A} = \UnitC$.
  Say that a computation type $\A$ is \emph{concrete} when $\etaClC[\A]$ is an equivalence, and write $\CC[\cl]$ for the universe of concrete computation types.
\end{definition}

Although the abstract modality $\OpenC{\A}$ is well-behaved, the concrete modality $\ClosedC{\A}$ given above is insufficient%
\footnote{This modal operator need not constitute a stable orthogonal factorization system~\citep{rijke-shulman-spitters>2020}.}
for fracture and gluing without further information about the interpretation of computation types.
Thus, it is important to reveal additional information about the structure of $\CC$.

\subsubsection{Computation Types as Cost Algebras}

In prior work on Calf, computation types are syntactically left unspecified but semantically interpreted as \emph{cost algebras}~\citep{niu-sterling-grodin-harper>2022,li-harper>2025}.
In this work, rather than leaving $\CC$ unknown, we \emph{define} $\CC$ to be the universe of cost algebras.
\begin{samepage}
\begin{definition}[{\AgdaFormalized{Calf.Computation}}]\label{def:cost-algebra}
  A \emph{cost algebra} $\A : \CC$ consists of
  \begin{enumerate}
    \item
      an underlying preordered%
      \footnote{In this setting, a preordered value type is a synthetic preorder~\citep{grodin-niu-sterling-harper>2024} that, to accommodate the higher type theory of this work, is an h-set~\citep{univalentfoundations>2013}. Both of these requirements are orthogonality conditions, rendering synthetic preorders as a reflective subuniverse~\citep{rijke-shulman-spitters>2020} of $\VV$.}
      value type $\U{\A} : \VV$ and
    \item a function $\charge[\A]{-} : \ExpV{\Cost}{\ExpV{\U{\A}}{\U{\A}}}$ such that
    \item $\charge[\A]{0}[a] = a$ and
    \item $\charge[\A]{c_1 + c_2}[a] = \charge[\A]{c_1}[\charge[\A]{c_2}[a]]$.
  \end{enumerate}
\end{definition}
\end{samepage}
In other words, a cost algebra consists of a value type along with a sensible way of incorporating cost, used to give the semantics of the cost effect as given above.%
\footnote{In follow-up work by~\citet{grodin-niu-sterling-harper>2024}, the semantics of computation types was generalized to support other effects; for simplicity, we only consider cost in this work, although the arguments naturally adapt to the setting of various other effects.}
Correspondingly, we define the value type $\LolliV{\A}{\B}$ to be the type of cost algebra homomorphisms.
\begin{samepage}
\begin{definition}[{\AgdaFormalized{Calf.Computation}}]\label{def:cost-algebra-hom}
  A \emph{cost algebra homomorphism} $f : \LolliV{\A}{\B}$ consists of
  \begin{enumerate}
    \item an underlying value-level function $\U{f} : \ExpV{\U{\A}}{\U{\B}}$ such that
    \item for all $a : \U{\A}$, it is the case that $\charge[\B]{c}[\U{f}(a)] = \U{f}(\charge[\A]{c}[a])$.
  \end{enumerate}
  For readability, we often write $f$ instead of $\U{f}$, such as $f(a)$ instead of $\U{f}(a)$.
\end{definition}
\end{samepage}
Even though such an $f : \LolliV{\A}{\B}$ is a reusable value, the notation is inspired by the idea that such a function should use its input linearly, so as not to drop or duplicate incoming effects~\citep{egger-mogelberg-simpson>2009,egger-mogelberg-simpson>2014}.
This notion of linearity is \emph{semantic}, requiring a proof that cost is preserved.

Given this revelation of computation types as cost algebras, the computation-level abstract and concrete modalities satisfy properties sufficient for proving fracture and gluing.

\begin{lemma}[{\AgdaFormalized{Calf.Computation.Closed}}]\label{lem:u-open}\label{lem:u-closed}
  For all computation types $\A$, both $\U[\OpenC{\A}] = \OpenV[\U{\A}]$ and $\U[\ClosedC{\A}] = \ClosedV[\U{\A}]$.
\end{lemma}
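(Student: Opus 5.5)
We establish the two equations separately, using in each case that $\CC$ is now defined as the universe of cost algebras (\cref{def:cost-algebra}). The point is that both computation-level modalities are constructed so that forgetting to the carrier $\U{-}$ recovers the value-level modalities of \cref{def:open-modality,def:closed-modality}.

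For the abstract modality, the power $\PowerC{\X}{\A}$ in the cost-algebra model has carrier $\ExpV{\X}{\U{\A}}$. Its preorder is pointwise, and its cost action is pointwise, $\charge{c}[f] \isdef \vlam{x}{\charge[\A]{c}[f(x)]}$; the cost-algebra laws follow pointwise from those of $\A$. Taking $\X \isdef \AbsV$ gives $\U[\OpenC{\A}] = \ExpV{\AbsV}{\U{\A}} = \OpenV[\U{\A}]$ definitionally. We also check that $\U{\etaOpC[\A]} = \etaOpV[\U{\A}]$, so that the units agree as well.

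For the concrete modality, we first compute pushouts in $\CC$. The copower $\CopowerC{\X}{\A}$ has carrier $\ProdV{\X}{\U{\A}}$ with action on the second component. In particular, $\U[\CopowerC{\AbsV}{\UnitC}] = \ProdV{\AbsV}{\UnitV} \simeq \AbsV$, and $\U[\CopowerC{\AbsV}{\A}] = \ProdV{\AbsV}{\U{\A}}$. The key step is to show that the forgetful functor $\U{-}$ preserves this particular pushout. To do so, we take the value-level pushout $P \isdef \ClosedV[\U{\A}]$ and equip it with a cost-algebra structure: define $\charge{c}$ on $P$ by recursion on the higher inductive type, sending $\etaClV(a) \mapsto \etaClV(\charge[\A]{c}[a])$ and $\starClV~\abs \mapsto \starClV~\abs$. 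The path constructor case is immediate, since both endpoints land in $\starClV~\abs$ by the glue path. The unit and composition laws then follow by induction on $P$. They hold on points by the laws of $\A$, and the path cases are automatic provided $P$ is an h-set, so that the path case collapses.

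We then verify that $P$, with maps $\etaClV$ and $\starClV$, has the universal property of the pushout in $\CC$. A cocone in $\CC$ consists of homomorphisms out of $\A$ and out of $\CopowerC{\AbsV}{\UnitC}$ that agree on $\CopowerC{\AbsV}{\A}$. The underlying functions of such a cocone induce a map out of $P$ by the value-level pushout property. That this map is a homomorphism is proved again by induction on $P$, and uniqueness is inherited from the value-level universal property. Finally, since $\ClosedC{\A}$ is defined as this pushout, it equals $P$ up to the uniqueness of pushouts, giving $\U[\ClosedC{\A}] = \ClosedV[\U{\A}]$.

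The main obstacle is the h-level and preorder bookkeeping. Carriers of cost algebras must be synthetic preorders that are h-sets, but the raw pushout $\ClosedV[\U{\A}]$ need not a priori satisfy these orthogonality conditions. The first thing to try is to show that $\ClosedV$ preserves h-sets and synthetic preorders. This should go through because $\ClosedV$ is a lex modality and these conditions are orthogonality or truncation conditions that lex modalities preserve, using the fact (footnote to \cref{def:cost-algebra}) that synthetic preorders form a reflective subuniverse. If that fails, we would instead reflect the pushout into preordered h-sets and argue that this reflection leaves $\ClosedV[\U{\A}]$ unchanged.
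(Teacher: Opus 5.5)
The paper gives no prose proof of this lemma and defers to the mechanization. Your construction is the natural one and is correct as far as it goes:

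\begin{itemize}
\item the abstract case is definitional;
\item for the concrete case, you put the induced action on the value-level pushout $\ClosedV[\U{\A}]$ and check the laws and the universal property on carriers, which is legitimate because being a homomorphism is a proposition once targets are h-sets;
\item $\ClosedV$ preserves h-sets because it is lex, via $\bigcirc(x = y) \simeq (\eta x = \eta y)$.
\end{itemize}

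Together these give a complete proof in the setting the mechanization actually checks, where the inequality structure is replaced by equality.

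The step that fails as written is the claim that $\ClosedV$ preserves synthetic preorders because ``lex modalities preserve orthogonality conditions.'' That principle is false. Right orthogonality to a map $f : S \to T$ asks that $X^T \to X^S$ be an equivalence. A lex modality preserves finite limits but need not commute with $(-)^T$; truncation levels are the special case that can be rephrased through iterated identity types.

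For a concrete failure, the abstract types are exactly those right orthogonal to $\AbsV \to \UnitV$. If $\ClosedV$ preserved this class, then for every abstract $\X$, $\ClosedV{\X}$ would be both abstract and concrete, hence equal to $\UnitV$. But in the two-world Kripke semantics, $\ClosedV{\OpenV[\SumV{\UnitV}{\UnitV}]}$ is the map from a two-element set to a one-element set, which is not terminal.

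Your fallback is circular. Pushouts in $\CC$ are obtained by reflecting the carrier-level pushout into preordered h-sets, so $\U[\ClosedC{\A}]$ would be the reflection of $\ClosedV[\U{\A}]$. Since $\U[\ClosedC{\A}]$ is a synthetic preorder, the lemma is in fact equivalent to $\ClosedV[\U{\A}]$ already being one.

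So in the preordered setting this is not bookkeeping but the actual content of the lemma. It needs an argument specific to how the directed structure defining synthetic preorders interacts with $\AbsV$ (for instance, that $\ClosedV$ commutes with the relevant exponential), not lexness alone. The paper itself sidesteps this by mechanizing with equality in place of inequality.
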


\begin{lemma}[{\AgdaFormalized{Calf.Computation.Closed}}]\label{lem:open-closed-lex}
  Both $\OpenC$ and $\ClosedC$ are \emph{lex}, meaning that they preserve pullbacks.
\end{lemma}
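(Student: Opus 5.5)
The plan is to reduce both claims to the corresponding value-level facts by going through the forgetful functor $\U{-} : \CC \to \VV$. First I will check that $\U{-}$ creates pullbacks in $\CC$. Given a cospan $\A \to \C \leftarrow \B$ of cost algebra homomorphisms $f,g$, the pullback in $\CC$ has carrier $\DSumV{(a,b) : \ProdV{\U{\A}}{\U{\B}}}{f(a) = g(b)}$, the pullback in $\VV$. Its cost action is componentwise, $\charge{c}[(a,b)] \isdef (\charge{c}[a], \charge{c}[b])$, and this is well defined because $f$ and $g$ are homomorphisms (\cref{def:cost-algebra-hom}), so the identification $f(a) = g(b)$ transports under $\charge{c}$.

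The laws of \cref{def:cost-algebra} hold componentwise. The preorder is also componentwise, and it remains a synthetic preorder and an h-set, since these are orthogonality conditions and so are closed under limits. Because $\U{-}$ reflects equivalences (a homomorphism whose underlying map is an equivalence is an equivalence of cost algebras), a square in $\CC$ is a pullback exactly when its image under $\U{-}$ is a pullback in $\VV$.

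Next, for a pullback square in $\CC$ I apply $\OpenC$ and then $\U{-}$. By \cref{lem:u-open} the result is the image of the square under $\OpenV$, applied to the underlying pullback square in $\VV$. So it suffices that $\OpenV$ and $\ClosedV$ are lex on $\VV$. For $\OpenV$ this is immediate, since it is exponentiation by $\AbsV$, a right adjoint, and so preserves all limits. For $\ClosedV$ it is the standard fact of \citet{rijke-shulman-spitters>2020} that the closed modality of a proposition is lex, as is every modality associated to a proposition's open/closed pair.

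The remaining point is that \cref{lem:u-closed} identifies $\U[\ClosedC{\A}]$ with $\ClosedV[\U{\A}]$ naturally in $\A$. Here the identification must be compatible with the maps $\ClosedC{f}$, not merely objectwise. I expect the main obstacle to be this naturality. I would establish it by observing that the equivalence is induced by the pushout's universal property: $\U{-}$ sends $\etaClC$ and $\starClC$ to $\etaClV$ and $\starClV$. Both sides are then functorial via the same universal property, so naturality follows by pushout induction.

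With naturality in hand, $\U[\ClosedC{-}]$ sends the pullback square to a pullback in $\VV$, and by creation of pullbacks the square $\ClosedC{-}$ is a pullback in $\CC$. The same argument handles $\OpenC$.
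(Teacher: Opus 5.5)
Your argument is correct and follows essentially the route the paper relies on. The text defers the proof to the mechanization, but the proof sketch of \cref{thm:cfracglue} uses the same transfer principle you do: $\U$ computes pullbacks and is conservative, so lexness of $\OpenC$ and $\ClosedC$ reduces, via \cref{lem:u-closed}, to lexness of the value-level open and closed modalities of \citet{rijke-shulman-spitters>2020}. The paper leaves implicit that the identification $\U[\ClosedC{\A}] = \ClosedV[\U{\A}]$ respects the units, and hence the action of $\ClosedC$ on maps; you address this point explicitly, and your justification is sound.
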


\subsubsection{Fracture and Gluing}

We now lift the fracture and gluing theorem on the universe of value types $\VV$ (\cref{thm:vfracglue}) to a fracture and gluing theorem on the universe of computation types $\CC$, which is revealed to be the universe of cost algebras.

For a concrete type $\A[\cl] : \CC[\cl]$ and an abstract type $\A[\op] : \CC[\op]$, an \emph{abstraction homomorphism} is a homomorphism $\alpha_{\cl} : \LolliV{\A[\cl]}{\ClosedC{\A[\op]}}$.
Beyond the abstraction capabilities of an abstraction function $\chi_\cl : \ExpV{\X[\cl]}{\ClosedV{\X[\op]}}$, an abstraction homomorphism may perform the cost effect, which serves the role of potential.
In this way, the following theorem explains why every computation type $\A : \CC$ contains not only an abstraction function but also a potential function.

\begin{theorem}[Fracture and Gluing of $\CC$, {\AgdaFormalized{Calf.Computation.Glue}}]\label{thm:cfracglue}
  Every computation type $\A : \CC$ contains exactly the data of a concrete type $\A[\cl]$, an abstract type $\A[\op]$, and an abstraction homomorphism between them:
  \[
    \CC = \DSumV{\A[\cl] : \CC[\cl]}{\DSumV{\A[\op] : \CC[\op]}{\LolliV{\A[\cl]}{\ClosedC{\A[\op]}}}}.
  \]
\end{theorem}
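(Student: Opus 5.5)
The plan is to mirror the proof of \cref{thm:vfracglue}, lifting each component to cost algebras via \cref{lem:u-open,lem:open-closed-lex} and then checking that the round trips are identities.

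\textbf{Forward direction (fracture).} Given $\A : \CC$, set $\A[\cl] \isdef \ClosedC{\A}$, $\A[\op] \isdef \OpenC{\A}$, and take the abstraction homomorphism $\ClosedC{\etaOpC[\A]} : \LolliV{\ClosedC{\A}}{\ClosedC{\OpenC{\A}}}$. Here $\ClosedC$ acts on homomorphisms by the universal property of its defining pushout. We must check two facts. First, $\ClosedC{\A}$ is concrete and $\OpenC{\A}$ is abstract. By \cref{lem:u-closed}, $\U[\ClosedC{\A}] = \ClosedV[\U{\A}]$, and the value-level modalities are idempotent, so the units become equivalences on underlying types. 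A homomorphism whose underlying map is an equivalence is an equivalence of cost algebras, because the inverse automatically commutes with $\charge{}$. The same argument handles $\OpenC$.

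\textbf{Reverse direction (gluing).} Given $\A[\cl]$, $\A[\op]$ and $\alpha_\cl$, define the glued cost algebra $\GlueV{\A[\cl]}{\A[\op]}{\alpha_\cl}$ as the pullback of $\alpha_\cl$ and $\etaClC[\A[\op]]$ computed in cost algebras. Its underlying type is the value-level pullback $\DSumV{(a_\cl,a_\op)}{\alpha_\cl(a_\cl) = \etaClV(a_\op)}$, and it charges cost componentwise. This charging is well defined precisely because $\alpha_\cl$ and $\etaClC$ are homomorphisms, so the equation is preserved. The unit and associativity laws are then inherited componentwise. Since the carrier is an h-set, the identity proofs cause no coherence issues.

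\textbf{Round trips.} For $\A \mapsto$ fracture $\mapsto$ glue, I would show that the canonical comparison map $\A \to \ClosedC{\A}\times_{\ClosedC{\OpenC{\A}}}\OpenC{\A}$ is a homomorphism whose underlying map is the value-level fracture comparison. Using \cref{lem:u-open,lem:u-closed} together with the fact that $\U{-}$ preserves the pullback, that underlying map is exactly the equivalence from \cref{thm:vfracglue}. Hence the comparison is an equivalence of cost algebras, and by univalence (equivalences of cost algebras correspond to equalities in $\CC$) it is an equality.

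For glue $\mapsto$ fracture, I need $\OpenC{G} = \A[\op]$, $\ClosedC{G} = \A[\cl]$, and that the induced homomorphism is $\alpha_\cl$. By \cref{lem:open-closed-lex} both modalities preserve the pullback $G$. Applying $\OpenC$ collapses $\ClosedC$-parts to $\UnitC$, so the pullback reduces to $\OpenC{\A[\op]} = \A[\op]$. Applying $\ClosedC$ makes $\ClosedC{\etaClC}$ an equivalence, so the pullback reduces to $\ClosedC{\A[\cl]} = \A[\cl]$. Again each identification can be checked on underlying types, where it is the value-level result.

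\textbf{Main obstacle.} I expect the hardest step to be identifying the homomorphism component, that is, transporting $\ClosedC{\etaOpC}$ along these equivalences and showing it equals $\alpha_\cl$, including its cost-preservation proof. Because homomorphism-hood is a proposition (the carriers are sets), it should suffice to compare underlying functions. That comparison is exactly the value-level statement, so my plan is to reduce everything through $\U{-}$ and invoke \cref{thm:vfracglue}.
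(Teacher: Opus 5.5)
Your proposal is correct and follows essentially the same route as the paper. You fracture via $(\ClosedC{\A}, \OpenC{\A}, \ClosedC{\etaOpC})$, glue via the cost-algebra pullback, show the comparison into glue-of-fracture is an equivalence by conservativity of $\U$ together with \cref{lem:u-open,lem:u-closed} and \cref{thm:vfracglue}, and handle the other round trip by lexness (\cref{lem:open-closed-lex}). Your extra checks that $\ClosedC{\A}$ is concrete and $\OpenC{\A}$ is abstract, and your recovery of the homomorphism component by comparing underlying functions (homomorphism-hood being a proposition over h-set carriers), spell out details the paper's sketch leaves implicit.
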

\begin{proof}[Proof Sketch]
  Fracture a type $\A$ into $(\ClosedC{\A}, \OpenC{\A}, \ClosedC{\etaOpC})$, and glue $(\A[\cl], \A[\op], \alpha_\cl)$ into the pullback
  \[ \GlueC{\A[\cl]}{\A[\op]}{\alpha_\cl} \isdef \ProdC{\A[\cl]}{\A[\op]}[\ClosedC{\A[\op]}], \]
  following \citet{rijke-shulman-spitters>2020}.
  To show that $\A = \GlueC{\ClosedC{\A}}{\OpenC{\A}}{\ClosedC{\etaOpC}}$, using univalence, it suffices to define a map $f : \LolliV{\A}{\GlueC{\ClosedC{\A}}{\OpenC{\A}}{\ClosedC{\etaOpC}}}$ and prove that it is an equivalence.
  Let
  \begin{align*}
    &f : \LolliV{\A}{\GlueC{\ClosedC{\A}}{\OpenC{\A}}{\ClosedC{\etaOpC}}} \\
    &f~a \isdef (\etaClC a, \etaOpC a)
  \end{align*}
  using the modal units $\etaClC : \LolliV{\A}{\ClosedC{\A}}$ and $\etaOpC : \LolliV{\A}{\OpenC{\A}}$.
  To show $f$ is an equivalence, it suffices to show $\U{f} : \ExpV{\U{\A}}{\U[\GlueC{\ClosedC{\A}}{\OpenC{\A}}{\ClosedC{\etaOpC}}]}$ is an equivalence because $\U : \ExpV{\CC}{\VV}$ is \emph{conservative}.
  As
  \begin{align*}
    \U[\GlueC{\ClosedC{\A}}{\OpenC{\A}}{\ClosedC{\etaOpC}}]
      &= \U[\ProdC{\ClosedC{\A}}{\OpenC{\A}}[\ClosedC{\OpenC{\A}}]] \\
      &= \ProdV{\U[\ClosedC{\A}]}{\U[\OpenC{\A}]}[\U[\ClosedC{\OpenC{\A}}]]  \tag*{(right adjoints preserve limits)} \\
      &= \ProdV{\ClosedV[\U{\A}]}{\OpenV[\U{\A}]}[\ClosedV{\OpenV[\U{\A}]}]  \tag*{(\cref{lem:u-closed,lem:u-open})}
  \end{align*}
  renders $\U{f}$ as the fracture-and-gluing of the value type $\U{\A}$, this map is an equivalence by the fracture and gluing theorem for $\VV$ (\cref{thm:vfracglue}).
  In the other direction, we have that
  \begin{align*}
    &\ClosedC[\GlueC{\A[\cl]}{\A[\op]}{\ClosedC{\etaOpC}}]  &&\OpenC[\GlueC{\A[\cl]}{\A[\op]}{\ClosedC{\etaOpC}}] \\
    &= \ClosedC[\ProdC{\A[\cl]}{\A[\op]}[\ClosedC{\A[\op]}]] &&= \OpenC[\ProdC{\A[\cl]}{\A[\op]}[\ClosedC{\A[\op]}]]  \\
    &= \ProdC{\ClosedC{\A[\cl]}}{\ClosedC{\A[\op]}}[\ClosedC{\ClosedC{\A[\op]}}] &&= \ProdC{\OpenC{\A[\cl]}}{\OpenC{\A[\op]}}[\OpenC{\ClosedC{\A[\op]}}]  \tag*{(\cref{lem:open-closed-lex})} \\
    &= \ProdC{\ClosedC{\A[\cl]}}{\ClosedC{\A[\op]}}[\ClosedC{\A[\op]}] &&= \ProdC{\UnitC}{\OpenC{\A[\op]}}[\UnitC] \\
    &= \ClosedC{\A[\cl]} &&= \OpenC{\A[\op]} \\
    &= \A[\cl] &&= \A[\op]
  \end{align*}
  using the fact that $\OpenC{\ClosedC{\A}} = \UnitC$.
\end{proof}
\begin{corollary}[\AgdaFormalized{Calf.Computation.Abstraction}]\label{cor:cfracglue-map}
  Every $f : \LolliV{\A}{\B}$ consists of a concrete homomorphism $f_\cl : \LolliV{\ClosedC{\A}}{\ClosedC{\B}}$ and an abstract homomorphism $f_\op : \LolliV{\OpenC{\A}}{\OpenC{\B}}$ that cohere (formally, $\ClosedC{\etaOpC[\B]} \circ f_\cl = \ClosedC[f_\op \circ \etaOpC[\A]]$).
\end{corollary}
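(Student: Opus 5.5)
We derive \cref{cor:cfracglue-map} from \cref{thm:cfracglue} in the same way \cref{cor:vfracglue-map} follows from \cref{thm:vfracglue}: we compute what a homomorphism between glued computation types amounts to. By \cref{thm:cfracglue}, we may replace $\A$ with $\GlueC{\ClosedC{\A}}{\OpenC{\A}}{\ClosedC{\etaOpC[\A]}}$ and $\B$ with $\GlueC{\ClosedC{\B}}{\OpenC{\B}}{\ClosedC{\etaOpC[\B]}}$, transporting along the univalence-induced identifications. The target is a pullback in $\CC$, so a homomorphism into it is a pair of homomorphisms into the two legs that agree after composing into $\ClosedC{\OpenC{\B}}$. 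This gives a map $g_\cl : \LolliV{\A}{\ClosedC{\B}}$, a map $g_\op : \LolliV{\A}{\OpenC{\B}}$, and a coherence $\ClosedC{\etaOpC[\B]} \circ g_\cl = \etaClC \circ g_\op$. Verifying this universal property in $\CC$ itself, rather than only in $\VV$, requires checking that the pullback constructed in $\CC$ has underlying type the value-level pullback with the induced charge. This holds because $\U$ creates limits: the charge acts componentwise, and equations between homomorphisms are checked on underlying functions.

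Next, we use the modal universal properties. Since $\ClosedC{\B}$ is concrete, $g_\cl$ factors uniquely through $\etaClC[\A]$ as $f_\cl : \LolliV{\ClosedC{\A}}{\ClosedC{\B}}$. Similarly, since $\OpenC{\B}$ is abstract, $g_\op$ factors uniquely through $\etaOpC[\A]$ as $f_\op : \LolliV{\OpenC{\A}}{\OpenC{\B}}$.

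We establish these reflection properties by reduction to $\VV$. For $\OpenC$, the homomorphisms $\LolliV{\A}{\OpenC{\B}}$ correspond to $\ExpV{\AbsV}{(\LolliV{\A}{\B})}$, since the power $\PowerC{\AbsV}{\B}$ is a right adjoint. This makes the factorization through $\etaOpC$ immediate. For $\ClosedC$, we use the pushout's universal property in $\CC$. A map out of $\ClosedC{\A}$ into a concrete $\C$ is a map from $\A$ together with a compatible map from $\CopowerC{\AbsV}{\UnitC}$. Concreteness of $\C$ gives $\C = \UnitC$ under $\AbsV$, so the second component and its compatibility are contractible. This is the step I expect to be the main obstacle. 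One must confirm that the computation-level pushout really is a pushout of cost algebras with the expected elimination into concrete algebras. I would first try to transfer this property from \cref{def:closed-modality}, using \cref{lem:u-closed} and the fact that $\U$ is conservative, as in the proof of \cref{thm:cfracglue}.

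Finally, we rewrite the coherence condition. Precomposing the desired equation $\ClosedC{\etaOpC[\B]} \circ f_\cl = \ClosedC[f_\op \circ \etaOpC[\A]]$ with $\etaClC[\A]$ yields $\ClosedC{\etaOpC[\B]} \circ g_\cl$ on the left. On the right it yields $\etaClC \circ f_\op \circ \etaOpC[\A] = \etaClC \circ g_\op$, by naturality of $\etaClC$. Precomposition with $\etaClC[\A]$ is an equivalence on maps into the concrete type $\ClosedC{\OpenC{\B}}$, so the two coherence conditions are equivalent. Assembling these equivalences of types gives the stated decomposition.
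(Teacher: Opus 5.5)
Your proposal is correct and follows the route the paper intends: the paper gives no separate proof, deriving the corollary from \cref{thm:cfracglue}, and your argument spells that derivation out (identify $\B$ with its glued form, use the universal property of the pullback in $\CC$, factor the two legs through the modal units, and rewrite the coherence using naturality of $\etaClC$). The step you flag as the main obstacle is not one, because $\ClosedC{\A}$ is defined as a pushout in $\CC$, so the pushout's universal property and the reflection onto concrete types you derive from it hold by definition; replacing $\A$ with its glued form is also unnecessary.
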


While the fracture and gluing result above depends centrally on modalities, much of the remainder of this work can be achieved without the direct use of modalities.
Analogous to \cref{def:AbstractionV}, it is possible to treat any homomorphism as an abstraction homomorphism.
\begin{definition}[\AgdaFormalized{Calf.Computation.Abstraction}]\label{def:AbstractionC}
  Define $\GlueeC{\A[\top]}{\A[\ABS]}{\alpha} \isdef \GlueC{\ClosedC{\A[\top]}}{\OpenC{\A[\ABS]}}{\ClosedC{(\etaOpC \circ \alpha)}}$.
\end{definition}
\begin{corollary}[\AgdaFormalized{Calf.Computation.Abstraction}]\label{cor:AbstractionC-id}
  By \cref{thm:cfracglue}, it is the case that $\A = \GlueeC{\A}{\A}{\idC[\A]}$ for all $\A$.
\end{corollary}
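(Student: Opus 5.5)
The plan is to reduce the claim to the data of \cref{thm:cfracglue}, exactly as \cref{cor:AbstractionV-id} follows from \cref{thm:vfracglue}. Unfolding \cref{def:AbstractionC}, we must show
\[ \A = \GlueC{\ClosedC{\A}}{\OpenC{\A}}{\ClosedC{(\etaOpC[\A] \circ \idC[\A])}}. \]
Since $\etaOpC[\A] \circ \idC[\A] = \etaOpC[\A]$ definitionally (or by the unit law for composition of homomorphisms), the right-hand side is literally $\GlueC{\ClosedC{\A}}{\OpenC{\A}}{\ClosedC{\etaOpC}}$. This is the glue of the fracture $(\ClosedC{\A}, \OpenC{\A}, \ClosedC{\etaOpC})$ of $\A$ in \cref{thm:cfracglue}.

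The theorem states an equivalence $\CC = \DSumV{\A[\cl]}{\DSumV{\A[\op]}{\LolliV{\A[\cl]}{\ClosedC{\A[\op]}}}}$ whose forward map is fracture and whose inverse is glue. The round-trip identity $\mathrm{glue} \circ \mathrm{fracture} = \idV$ therefore gives exactly $\A = \GlueC{\ClosedC{\A}}{\OpenC{\A}}{\ClosedC{\etaOpC}}$. Indeed, the proof sketch of \cref{thm:cfracglue} constructs this path directly: it takes the homomorphism $f\,a \isdef (\etaClC a, \etaOpC a)$ and shows that $\U{f}$ is the value-level fracture-and-gluing map for $\U{\A}$, hence an equivalence. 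Conservativity of $\U$ then makes $f$ an equivalence of cost algebras, and univalence turns it into the required path.

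The main point of care is confirming that the equivalence of \cref{thm:cfracglue} really has fracture as its forward map and glue as its inverse. If it were packaged abstractly instead, I would not use the round trip; I would reuse the explicit map $f$ from the proof sketch. The other subtlety is that the gluing data must match on the nose: the homomorphism $\ClosedC{(\etaOpC \circ \idC)}$ has to be identified with $\ClosedC{\etaOpC}$. This follows from the identity law for composition in $\LolliV{-}{-}$. Cost-algebra homomorphisms are a function together with a propositional preservation condition (the carriers are h-sets), so equality of homomorphisms reduces to equality of their underlying functions, which here holds definitionally.
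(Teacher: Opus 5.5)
Your proposal is correct and follows the paper's route. Unfolding \cref{def:AbstractionC}, the composite $\etaOpC \circ \idC[\A]$ reduces to $\etaOpC$, so the claim is exactly the glue-after-fracture round trip of \cref{thm:cfracglue}; the paper's proof sketch establishes this directly via the map $a \mapsto (\etaClC a, \etaOpC a)$, conservativity of the underlying-type functor, and univalence. Your extra step identifying $\ClosedC{(\etaOpC \circ \idC[\A])}$ with $\ClosedC{\etaOpC}$ is a sound detail that the paper leaves implicit.
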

\begin{lemma}[{\AgdaFormalized{Calf.Computation.Abstraction}}]\label{lem:AbstractionC-lolli}
  To define a map
  $f : \LolliV{\GlueeC{\A[\top]}{\A[\ABS]}{\alpha}}{\GlueeC{\B[\top]}{\B[\ABS]}{\beta}}$,
  it suffices by \cref{cor:cfracglue-map} to define a pair of maps $f_\top : \LolliV{\A[\top]}{\B[\top]}$ and $f_{\ABS} : \LolliV{\A[\ABS]}{\B[\ABS]}$ such that
  \[\begin{tikzcd}
    {\A[\top]} & {\B[\top]} \\
    {\A[\ABS]} & {\B[\ABS]}
    \arrow[mmv, "{f_\top}", from=1-1, to=1-2]
    \arrow[mmv, ""{name=0, anchor=center, inner sep=0}, "\alpha"', from=1-1, to=2-1]
    \arrow[mmv, ""{name=1, anchor=center, inner sep=0}, "\beta", from=1-2, to=2-2]
    \arrow[mmv, "{f_{\ABS}}"', from=2-1, to=2-2]
    \arrow["{=}"{description}, draw=none, from=1, to=0]
  \end{tikzcd}\]
  using $f_\cl = \ClosedC{f_\top}$ and $f_\op = \OpenC{f_{\ABS}}$.
\end{lemma}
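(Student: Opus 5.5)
The approach is to apply \cref{cor:cfracglue-map} to the glued types, after first computing their modal parts. Write $\A \isdef \GlueeC{\A[\top]}{\A[\ABS]}{\alpha}$ and $\B \isdef \GlueeC{\B[\top]}{\B[\ABS]}{\beta}$. By the reverse direction of \cref{thm:cfracglue} (the computation in its proof sketch, which uses \cref{lem:open-closed-lex}), we have $\ClosedC{\A} = \ClosedC{\A[\top]}$ and $\OpenC{\A} = \OpenC{\A[\ABS]}$. Moreover, the abstraction homomorphism recovered by fracturing, $\ClosedC{\etaOpC[\A]}$, is identified under these equations with $\ClosedC{(\etaOpC \circ \alpha)}$. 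The analogous facts hold for $\B$ with $\beta$.

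By \cref{cor:cfracglue-map}, a map $f : \LolliV{\A}{\B}$ is then the same as three pieces of data:
\begin{itemize}
  \item a concrete part $f_\cl : \LolliV{\ClosedC{\A[\top]}}{\ClosedC{\B[\top]}}$;
  \item an abstract part $f_\op : \LolliV{\OpenC{\A[\ABS]}}{\OpenC{\B[\ABS]}}$;
  \item a coherence, which after transport along the identifications above reads $\ClosedC{(\etaOpC \circ \beta)} \circ f_\cl = \ClosedC{(f_\op \circ \etaOpC \circ \alpha)}$ as maps $\ClosedC{\A[\top]} \to \ClosedC{\OpenC{\B[\ABS]}}$.
\end{itemize}
We take $f_\cl \isdef \ClosedC{f_\top}$ and $f_\op \isdef \OpenC{f_{\ABS}}$. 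These are homomorphisms because $\ClosedC$ and $\OpenC$ act functorially on homomorphisms; functoriality of the pushout and of the power follows from their universal properties.

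It remains to establish the coherence. Using naturality of $\etaOpC$, we have $\OpenC{f_{\ABS}} \circ \etaOpC[\A[\ABS]] = \etaOpC[\B[\ABS]] \circ f_{\ABS}$. Hence the right-hand side becomes $\ClosedC{(\etaOpC \circ f_{\ABS} \circ \alpha)}$. By functoriality of $\ClosedC$, the left-hand side is $\ClosedC{(\etaOpC \circ \beta \circ f_\top)}$. The given square $\beta \circ f_\top = f_{\ABS} \circ \alpha$ then identifies the two sides after applying $\ClosedC$ and postcomposing with $\etaOpC$.

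I expect the main obstacle to be bookkeeping rather than mathematics. Specifically, one must check that the transport along the equalities $\ClosedC{\A} = \ClosedC{\A[\top]}$ and $\OpenC{\A} = \OpenC{\A[\ABS]}$ carries the coherence condition of \cref{cor:cfracglue-map} to exactly the displayed equation. This requires that the identification in \cref{thm:cfracglue} sends $\ClosedC{\etaOpC[\A]}$ to the glued map $\ClosedC{(\etaOpC \circ \alpha)}$, rather than to some other map equal to it only up to an unrecorded automorphism.

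I would handle this by avoiding transport altogether. Instead, I would work directly with the pullback $\ProdC{\ClosedC{\A[\top]}}{\OpenC{\A[\ABS]}}[\ClosedC{\OpenC{\A[\ABS]}}]$ and define $f$ componentwise as $(\ClosedC{f_\top} \circ \pi_1, \OpenC{f_{\ABS}} \circ \pi_2)$. The proof that this pair lands in the pullback for $\B$ is the coherence computed above, combined with the pullback equation for the input. Since $\U$ preserves limits, the underlying function is a map of pullbacks of value types, and preservation of $\charge{}$ holds componentwise.
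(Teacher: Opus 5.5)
Your proposal is correct and follows the paper's argument: apply \cref{cor:cfracglue-map} with $f_\cl = \ClosedC{f_\top}$ and $f_\op = \OpenC{f_{\ABS}}$, and discharge the coherence from the given square using functoriality of $\ClosedC$ and naturality of $\etaOpC$. Your transport-free fallback, which maps the pullback componentwise, also works, though closing that square additionally needs naturality of $\etaClC$, i.e.\ $\ClosedC{\OpenC{f_{\ABS}}} \circ \etaClC = \etaClC \circ \OpenC{f_{\ABS}}$.
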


The payoff of this construction, building on the insight of \citet{grodin-harper>2024}, is that every type $\A$ contains an abstraction homomorphism indicating how much potential is stored within.
Moreover, every homomorphism $\LolliV{\A}{\B}$ contains proofs of both abstraction and the conservation of potential, \emph{synthetically} reconstructing the physicist's view of amortized analysis.

\begin{example}[{\AgdaFormalized{Examples.Queue}}]\label{ex:pfat-bq}
  Adapting \cref{ex:afat-bq}, we now build a cost-aware, ephemeral variant of the batched queue data structure that includes both abstraction and potential.
  The homomorphism that abstracts a batched queue $(l_1, l_2)$ as a single list $\chi~(l_1, l_2) \isdef \ListVAppend{l_1}{\DEFN{reverse}~l_2}$ and emits its potential $\Phi~(l_1, l_2) \isdef \ListVLength{l_2}$ can be constructed as follows:
  \begin{align*}
    &\alpha : \LolliV{\F[\LLV]}{\F[\LV]} \\
    &\alpha~(\ret[l_1, l_2]) \isdef \charge{\Phi~(l_1, l_2)}[\ret[\chi~(l_1, l_2)]]
  \end{align*}
  From this homomorphism $\alpha$, we may build a type that contains all of this data, including the pair-of-lists implementation type, the single-list specification type, the value-level function $\chi$, and the potential function $\Phi$:
  \[ \A \isdef \GlueeC{\F[\LLV]}{\F[\LV]}{\alpha}. \]
  Using this type $\A$, we may implement queue operations; each consisting of a concrete aspect (on the pair-of-lists type, annotated with realistic costs) and an abstract aspect (on the single-list type, annotated with amortized costs), linked by $\alpha$ in both behavior (up to $\chi$) and cost (up to $\Phi$).

  To define $\DEFN{enqueue} : \ExpV{\NatV}{\LolliV{\A}{\A}}$, it suffices by \cref{lem:AbstractionC-lolli} to give the true code $\DEFN{enqueue}_\top$ and an abstract mathematical model $\DEFN{enqueue}_{\ABS}$ that cohere up to $\alpha$, the proof of which includes both abstraction and amortization.
  Assuming a cost model that counts recursive calls, the true enqueue algorithm is defined as
  \begin{align*}
    &\DEFN{enqueue}_\top : \ExpV{\NatV}{\LolliV{\F[\LLV]}{\F[\LLV]}}
    \\
    &\DEFN{enqueue}_\top~n~(\ret[l_1, l_2]) \isdef \ret[l_1, \ListVCons{n}{l_2}],
  \end{align*}
  without any cost annotations.
  On the other hand, the abstract specification is defined as
  \begin{align*}
    &\DEFN{enqueue}_{\ABS} : \ExpV{\NatV}{\LolliV{\F[\LV]}{\F[\LV]}}
    \\
    &\DEFN{enqueue}_{\ABS}~n~(\ret{l}) \isdef \charge{1}[\ret[\ListVAppend{l}{\ListVSing{n}}]],
  \end{align*}
  describing a client-facing \emph{amortized cost interface} with one unit of cost to cohere with the potential function (and in anticipation of an impending linear-cost dequeue).
  The remainder of the enqueue implementation is the proof that the true code coheres with the abstract amortized specification:
\[\begin{tikzcd}
	{\F[\LLV]} && {\F[\LLV]} \\
	{\F[\LV]} && {\F[\LV]}
	\arrow[mm, "{\DEFN{enqueue}_\top~n}", from=1-1, to=1-3]
	\arrow[mm, ""{name=0, anchor=center, inner sep=0}, "\alpha"', from=1-1, to=2-1]
	\arrow[mm, ""{name=1, anchor=center, inner sep=0}, "\alpha", from=1-3, to=2-3]
	\arrow[mm, "{\DEFN{enqueue}_{\ABS}~n}"', from=2-1, to=2-3]
	\arrow["{=}"{description}, draw=none, from=0, to=1]
\end{tikzcd}\]
  Equationally, the proof obligation expressed by this square holds by the following reasoning:
  \begin{align*}
    \alpha~(\DEFN{enqueue}_\top~n~(\ret[l_1,l_2]))
      &= \alpha~(\ret[l_1, \ListVCons{n}{l_2}]) \\
      &= \charge{\ListVLength{\ListVCons{n}{l_2}}}[\ret[\ListVAppend{l_1}{\DEFN{reverse}~(\ListVCons{n}{l_2})}]] \\
      &= \charge{\ListVLength{l_2} + 1}[\ret[\ListVAppend{l_1}{\ListVAppend{\DEFN{reverse}~l_2}{\ListVSing{n}}}]] \\
      &= \charge{\ListVLength{l_2}}[\charge{1}[\ret[\ListVAppend{l_1}{\ListVAppend{\DEFN{reverse}~l_2}{\ListVSing{n}}}]]] \\
      &= \charge{\ListVLength{l_2}}[\DEFN{enqueue}_{\ABS}~n~(\ret[\ListVAppend{l_1}{\DEFN{reverse}~l_2}])] \\
      &= \DEFN{enqueue}_{\ABS}~n~(\charge{\ListVLength{l_2}}[\ret[\ListVAppend{l_1}{\DEFN{reverse}~l_2}]]) \tag*{$(\star)$} \\
      &= \DEFN{enqueue}_{\ABS}~n~(\alpha~(\ret[l_1,l_2]))
  \end{align*}
  The indicated step $(\star)$ holds because $\DEFN{enqueue}_{\ABS}$ is a homomorphism of cost algebras.
  Notice that in addition to showing that $\chi$ is preserved for abstraction \citep{grodin-li-harper>2026}, this derivation verifies that potential is conserved \citep{grodin-harper>2024}: specifically, it includes the fact that
  \[ 0 + \Phi(l_1, \ListVCons{n}{l_2}) = \Phi(l_1, l_2) + 1, \]
  where $0$ is the true cost and $1$ is the amortized cost.
  The $\DEFN{empty} : \U{\A}$ and $\DEFN{dequeue} : \LolliV{\A}{\CopowerC{\NatV}{\A}}$ operations can be similarly defined, following \citet{grodin-li-harper>2026}.
\end{example}

\subsection{Potential and Abstraction, Independently}\label{sec:physicist:PotentialC}

Although the abstraction homomorphisms built into types generally include both abstraction and potential, these concerns need not be considered simultaneously as above.
As an additional convenience, it is possible to build a potential function $\Phi : \ExpV{\X}{\Cost}$ into the type $\F{\X}$.
\begin{definition}[{\AgdaFormalized{Calf.Computation.Potential}}]\label{def:PotentialC}
  To render a potential function $\Phi : \ExpV{\X}{\Cost}$ as a type, define
  \[ \PotentialC{\X}{\Phi} \isdef \GlueeC{\F{\X}}{\F{\X}}{\varphi(\Phi)} \]
  where $\varphi(\Phi) \isdef \clam{(\ret{x})}{\charge{\Phi(x)}[\ret{x}]}$.
\end{definition}
\begin{lemma}[{\AgdaFormalized{Calf.Computation.Potential}}]\label{lem:hom-pot}
  Let $\A \isdef \PotentialC{\X}{\Phi_{\X}}$ and $\B \isdef \PotentialC{\Y}{\Phi_{\Y}}$.
  To define a homomorphism of type $\LolliV{\A}{\B}$, it suffices to provide a function $f : \ExpV{\X}{\Y}$, a true cost function $c_\top : \ExpV{\X}{\Cost}$, and an amortized cost function $c_{\ABS} : \ExpV{\X}{\Cost}$ such that potential is conserved:
  \[ c_{\top}(x) + \Phi_{\Y}(f(x)) = \Phi_{\X}(x) + c_{\ABS}(x). \]
\end{lemma}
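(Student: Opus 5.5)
We apply \cref{lem:AbstractionC-lolli}. Since $\A = \GlueeC{\F{\X}}{\F{\X}}{\varphi(\Phi_{\X})}$ and $\B = \GlueeC{\F{\Y}}{\F{\Y}}{\varphi(\Phi_{\Y})}$, it suffices to give homomorphisms $f_\top, f_{\ABS} : \LolliV{\F{\X}}{\F{\Y}}$ satisfying $\varphi(\Phi_{\Y}) \circ f_\top = f_{\ABS} \circ \varphi(\Phi_{\X})$.

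We define them by the universal property of the free cost algebra $\F{\X}$: a homomorphism out of $\F{\X}$ is determined by its action on $\ret{x}$, extended along $\charge{c}$. So we set
\[ f_\top~(\ret{x}) \isdef \charge{c_\top(x)}[\ret{f(x)}] \qquad\text{and}\qquad f_{\ABS}~(\ret{x}) \isdef \charge{c_{\ABS}(x)}[\ret{f(x)}]. \]
These are the same shape of maps as $g_\top$, $g_{\ABS}$ in \cref{sec:introduction:calf}.

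To verify the square, note that both composites are homomorphisms out of $\F{\X}$, so it suffices to check equality on $\ret{x}$. On the left,
\[ \varphi(\Phi_{\Y})(\charge{c_\top(x)}[\ret{f(x)}]) = \charge{c_\top(x)}[\charge{\Phi_{\Y}(f(x))}[\ret{f(x)}]] = \charge{c_\top(x)+\Phi_{\Y}(f(x))}[\ret{f(x)}], \]
using that $\varphi(\Phi_{\Y})$ is a homomorphism and the monoid law for $\charge{}$. On the right, the same reasoning (step $(\star)$ of \cref{ex:pfat-bq}) gives
\[ f_{\ABS}(\charge{\Phi_{\X}(x)}[\ret{x}]) = \charge{\Phi_{\X}(x)+c_{\ABS}(x)}[\ret{f(x)}]. \]
The hypothesis $c_\top(x) + \Phi_{\Y}(f(x)) = \Phi_{\X}(x) + c_{\ABS}(x)$ makes the two sides equal.

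The main obstacle is making the freeness of $\F{\X}$ rigorous, since $\CC$ is defined here as the universe of cost algebras. We need to exhibit $\F{\X}$ (for instance as $\ProdV{\Cost}{\X}$ with $\charge{c}[(c',x)] = (c+c',x)$, or as the appropriate set-truncated quotient). We then need the equivalence between $\LolliV{\F{\X}}{\B}$ and $\ExpV{\X}{\U{\B}}$, including the uniqueness part used to reduce the square to checking on $\ret{x}$. We also need that $\varphi(\Phi)$ is well defined as a homomorphism, which is the same extension principle. Commutativity of $\Cost$ is not needed here, because the charges are composed in matching order on both sides.
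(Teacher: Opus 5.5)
Your proposal is correct and follows the paper's proof. The paper likewise invokes \cref{lem:AbstractionC-lolli} with $f_i~(\ret{x}) \isdef \charge{c_i(x)}[\ret[f(x)]]$ for $i \in \{\top, \ABS\}$; your explicit check of the square on $\ret{x}$, and your remarks on the freeness of $\F{\X}$, fill in details the paper leaves implicit.
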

\begin{proof}
  By \cref{lem:AbstractionC-lolli}, with $f_i~(\ret{x}) \isdef \charge{c_i(x)}[\ret[f(x)]]$.
\end{proof}

\begin{corollary}[{\AgdaFormalized{Calf.Computation.Abstraction}}]\label{cor:lolli-gluee}
  Let $\B \isdef \GlueeC{\B[\top]}{\B[\ABS]}{\beta}$ and let $\A : \CC$ be arbitrary.
  \begin{enumerate}
    \item To define a homomorphism of type $f : \LolliV{\A}{\B}$, it suffices to provide $f_\top : \LolliV{\A}{\B[\top]}$.
    \item To define a homomorphism of type $g : \LolliV{\B}{\A}$, it suffices to provide $g_{\ABS} : \LolliV{\B[\ABS]}{\A}$.
  \end{enumerate}
\end{corollary}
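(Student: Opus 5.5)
The plan is to reduce both parts to \cref{lem:AbstractionC-lolli} by first rewriting the arbitrary type $\A$ via \cref{cor:AbstractionC-id} as $\A = \GlueeC{\A}{\A}{\idC[\A]}$. Once $\A$ is in glued form, each part amounts to choosing the missing side of a square and checking that it commutes. Since the statement only asks us to produce \emph{some} homomorphism from the given data, we are free to choose the other component.

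For part (1), a map $\LolliV{\GlueeC{\A}{\A}{\idC[\A]}}{\GlueeC{\B[\top]}{\B[\ABS]}{\beta}}$ requires, by \cref{lem:AbstractionC-lolli}, a concrete component $\LolliV{\A}{\B[\top]}$ and an abstract component $\LolliV{\A}{\B[\ABS]}$. These must satisfy $\beta \circ (\text{concrete}) = (\text{abstract}) \circ \idC[\A]$. I take the concrete component to be the given $f_\top$ and the abstract component to be $\beta \circ f_\top$. The square then commutes definitionally, up to the unit law for composition of homomorphisms. The composite $\beta \circ f_\top$ is again a cost algebra homomorphism because homomorphisms are closed under composition: the charge-commutation conditions of \cref{def:cost-algebra-hom} compose.

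Part (2) is dual. A map $\LolliV{\GlueeC{\B[\top]}{\B[\ABS]}{\beta}}{\GlueeC{\A}{\A}{\idC[\A]}}$ requires a concrete component $\LolliV{\B[\top]}{\A}$ and an abstract component $\LolliV{\B[\ABS]}{\A}$. These must satisfy $\idC[\A] \circ (\text{concrete}) = (\text{abstract}) \circ \beta$. I choose the abstract component to be the given $g_{\ABS}$ and the concrete component to be $g_{\ABS} \circ \beta$, which again makes the square commute trivially.

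The only step needing care is transporting along the equation $\A = \GlueeC{\A}{\A}{\idC[\A]}$ of \cref{cor:AbstractionC-id}. I will phrase the resulting maps as the composites of the homomorphisms above with the equivalence $\A \simeq \GlueeC{\A}{\A}{\idC[\A]}$ coming from univalence; this equivalence is itself a homomorphism, since it arises as an identification in $\CC$. I do not anticipate any real obstacle, as the statement only asks for the existence of such a construction and not for any computation rule.

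If desired, one can additionally observe that the concrete restriction of the map in part (1) agrees with $f_\top$, which justifies the subscript notation. Indeed, $\GlueeC{\B[\top]}{\B[\ABS]}{\beta}$ is a pullback, so a map into it is determined by its two legs, and the leg $\beta \circ f_\top$ is forced by coherence. This forcing is what makes the single piece of data $f_\top$ sufficient, and dually for $g_{\ABS}$ in part (2).
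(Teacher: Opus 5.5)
Your proposal is correct and is essentially the paper's proof. Like the paper, you rewrite $\A$ as $\GlueeC{\A}{\A}{\idC[\A]}$ via \cref{cor:AbstractionC-id} and then apply \cref{lem:AbstractionC-lolli}, choosing the missing component as $f_{\ABS} \isdef \beta \circ f_\top$ in part (1) and $g_\top \isdef g_{\ABS} \circ \beta$ in part (2), so that each square commutes trivially.
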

\begin{proof}
  By \cref{cor:AbstractionC-id}, it is the case that $\A = \GlueeC{\A}{\A}{\idC[\A]}$.
  Then, the results follow by \cref{lem:AbstractionC-lolli}, with $f_{\ABS} \isdef \beta \circ f_\top$ and $g_\top \isdef g_{\ABS} \circ \beta$.
\end{proof}

Using this construction, the amortized analysis and abstraction of \cref{ex:pfat-bq} may be achieved sequentially rather than simultaneously.

\begin{example}\label{ex:sequential-bq}
  First, define a type to incorporate only the potential function for batched queues:
  \[ \B \isdef \PotentialC{\LLV}{\vlam{(l_1, l_2)}{\ListVLength{l_2}}}. \]
  To define $\DEFN{enqueue}_\top : \ExpV{\NatV}{\LolliV{\B}{\B}}$, it suffices by \cref{lem:hom-pot} to define
  \begin{align*}
    f~n~(l_1,l_2) \isdef (l_1, \ListVCons{n}{l_2})
    &&
    c_\top~n~(l_1,l_2) \isdef 0
    &&
    c_{\ABS}~n~(l_1,l_2) \isdef 1
  \end{align*}
  and prove that the conservation of potential equation holds.
  This definition exports an amortized cost specification saying that
  the enqueue operation takes $1$ amortized cost; however, the behavioral aspect of the specification still reveals that the data structure is implemented as a pair of lists.
  To remedy this, we define a type whose concrete part is inherited from $\B$ itself, but whose abstract part exports the list-based representation (and maintains the amortized cost specification).
  Let
  \[ \A \isdef \GlueeC{\B}{\F[\LV]}{\alpha}, \]
  where $\alpha$ can be defined using \cref{cor:lolli-gluee} given only the lifting of the abstraction function
  \[ \vlam{(l_1,l_2)}{\ListVAppend{l_1}{\DEFN{reverse}~l_2}} : \ExpV{\LLV}{\LV}. \]
  To define $\DEFN{enqueue} : \ExpV{\NatV}{\LolliV{\A}{\A}}$, it suffices to combine $\DEFN{enqueue}_\top$ from above with $\DEFN{enqueue}_{\ABS}$ from \cref{ex:pfat-bq}.
  Thus, we were able to first perform the amortized analysis of batched queues and subsequently overlay a coherent abstract mathematical data specification.
\end{example}

\section{Loss of Energy Due to Abstraction}\label{sec:modularity}

As developed by \citet[\S 3]{grodin-li-harper>2026}, the abstract phase facilitates modularity, guaranteeing that client verifications do not depend on library implementation details.
In this section, we apply such reasoning to the present setting to facilitate modular verification of amortized cost.
Then, building on the approach of \citet[\S 4.5]{grodin-li-harper>2026}, we show how to accommodate inequality in the conservation of potential condition, which improves modularity in the cost-aware setting.

\subsection{Modularity and Amortized Cost Interfaces}

The abstract phase makes it possible to uniformly isolate the public-facing specification associated with any type or program, allowing correctness theorems to be stated under the explicit assumption that concrete implementation details have been ignored.
This facilitates modular verification: if clients only prove theorems under the assumption of $\AbsV$, there is no trouble modifying a library so long as it continues to maintain the same public-facing specification.
Phase-sensitive restrictions can be made using the notion of a \emph{specification type}.

\NewDocumentCommand{\PV}{}{\VAL*{\P}}
\begin{definition}[Specification Type, \citealp{grodin-li-harper>2026}]\label{def:spec-type}
  Let $\PV : \VV$ be a proposition.
  The \emph{$\PV$-phase specification type} for a specification $x_\circ : \ExpV{\PV}{\X}$ is the type
  \[ \SPEC{\PV}{\X}{x_\circ} \isdef \DSumV{x : \X}{(\DProdV{\abs : \PV}{x = x_\circ(\abs)})} \]
  describing all the inhabitants $x : \X$ that cohere with $x_\circ$ in the phase $\PV$.%
  \footnote{Per \citet{grodin-li-harper>2026}, the notation is inspired by extension types~\citep{riehl-shulman>2017}, but the equality is not definitional.}
\end{definition}

Using a specification type, we will define an interface of ephemeral queues.
Although implementations may perform the cost effect, the interface for ephemeral queues should only restrict abstract \emph{behavior}.
This can be achieved using the \emph{behavioral phase} proposition $\BehV$ of Calf~\citep{niu-sterling-grodin-harper>2022}, which isolates the behavior of a program from its cost by erasing the cost effect:
\[ \ExpV{\BehV}{\charge{c}[e] = e}. \]
When combined with abstraction, $\BehV$ is axiomatized to imply $\AbsV$~\citep{grodin-li-harper>2026}.

\begin{example}[Ephemeral Queue Interface]\label{ex:ephemeral-queue}
  Say an ephemeral pre-queue is a computation type $\A : \CC$ equipped with standard queue operations:
  \[ \VTY{PreQueue}_{\CC} \isdef \DSumV{\A : \CC}{\ProdV{(\LABEL{empty} : \U{\A})}{\ProdV{(\LABEL{enqueue} : \ExpV{\NatV}{\LolliV{\A}{\A}})}{(\LABEL{dequeue} : \LolliV{\A}{\CopowerC{\NatV}{\A}})}}}. \]
  The interface of ephemeral queues, $\VTY{Queue}_{\CC}$, refines $\VTY{PreQueue}_{\CC}$ with a \emph{behavioral} specification:
  \[ \SPEC{\BehV}{\VTY{PreQueue}_{\CC}}{(\F[\LV], \ret{\ListVNil}, (\vlam{n}{\clam{(\ret{l})}{\ret[\ListVAppend{l}{\ListVSing{n}}]}}), \DEFN{uncons})}. \]
  This restriction completely determines the abstract aspect of an implementation, aside from cost.
  The batched queues of \cref{ex:pfat-bq} are a valid implementation of this interface, by construction: erasing the costs of the interface-level components recovers precisely the given specification.
\end{example}

Although this behavioral restriction entirely pins down the behavior of an implementation, it makes no claims about the cost of an implementation.
In order to reveal information about the cost, it suffices to \emph{refine} the interface $\VTY{Queue}_{\CC}$ with stronger guarantees, such as an $\AbsV$-phase restriction revealing an \emph{amortized cost specification}.

\begin{example}[Ephemeral Queue Cost Interface \AgdaFormalized{Examples.Queue}]\label{ex:ephemeral-queue-cost}
  Building on the ephemeral queue interface of \cref{ex:ephemeral-queue}, consider the following refinement, noting the use of $\AbsV$ instead of $\BehV$:
  \[ \SPEC{\AbsV}{\VTY{PreQueue}}{(\F[\LV], \ret{\ListVNil}, (\vlam{n}{\clam{(\ret{l})}{\charge{1}[\ret[\ListVAppend{l}{\ListVSing{n}}]]}}), \DEFN{uncons})}. \]
  Beyond the behavioral guarantees consistent with $\VTY{Queue}_{\CC}$, this interface exports amortized costs of the operations, classifying queues whose enqueue operation has amortized cost of $1$ and whose empty and dequeue operations (implicitly, by lack of cost annotation) have amortized cost of $0$.
  By construction, the batched queue implementation of \cref{ex:pfat-bq} inhabits this refinement.
\end{example}

\subsection{Amortized Upper Bounds as Lax Commutative Squares}

The batched queue data structure has the property that its amortized cost model exactly matches the implementation.
In particular, its analysis satisfies a strict conservation principle:
\[ c_\top(x) + \Phi(f(x)) = \Phi(x) + c_{\ABS}(x). \]
However, such exactness is rare: the true cost of an algorithm often depends on private implementation details, and the amortized cost is thus merely an upper bound of the true cost.
To this end, it is necessary to relax the strict equality-based commutativity of the squares built into homomorphisms to merely a lax inequality.
This representing the idea of energy not being perfectly conserved:
\[ c_\top(x) + \Phi(f(x)) \le \Phi(x) + c_{\ABS}(x). \]
In physics, energy can be lost due to sources such as friction; in this setting, amortized cost bounds can be weakened due to the demands of \emph{abstraction}.

In order to support weakening of cost bounds via lax commutative squares, we follow \citet{grodin-li-harper>2026} and use the \emph{sealing monad} $\SealC$, defined here on computation types as a comma object:
\begin{center}
  \begin{minipage}{0.5\textwidth}
\begin{align*}
  &\SealC : \CC \to \CC \\
  &\SealC{\A} = \CommaC{\ClosedC{\A}}{\OpenC{\A}}{\ClosedC{\OpenC{\A}}}
\end{align*}
  \end{minipage}%
  \begin{minipage}{0.5\textwidth}
\[\begin{tikzcd}
	{\SealC{\A}} & {\ClosedC{\A}} \\
	{\OpenC{\A}} & {\ClosedC{\OpenC{\A}}}
	\arrow[mmv, from=1-1, to=1-2]
	\arrow[mmv, ""{name=0, anchor=center, inner sep=0}, from=1-1, to=2-1]
	\arrow[mmv, ""{name=1, anchor=center, inner sep=0}, "{\ClosedC{\etaOpC}}", from=1-2, to=2-2]
	\arrow[mmv, "\etaClC"', from=2-1, to=2-2]
	\arrow["\lrcorner"{anchor=center, pos=0.125}, draw=none, from=1-1, to=2-2]
	\arrow["\ge"{description}, draw=none, from=1, to=0]
\end{tikzcd}\]
  \end{minipage}%
\end{center}
Semantically%
\footnote{In the presheaf semantics, the synthetic sealing monad is interpreted as the free opfibration 2-monad~\citep{street>1974}.}%
, a Kleisli map $f : \LolliV{\A}{\SealC{\B}}$ is exactly a lax commutative square:
\[\begin{tikzcd}
	{\Interp[\top]{\A}} & {\Interp[\top]{\B}} \\
	{\Interp[\ABS]{\A}} & {\Interp[\ABS]{\B}}
	\arrow[mm, "{\Interp[\top]{f}}", from=1-1, to=1-2]
	\arrow[mm, ""{name=0, anchor=center, inner sep=0}, "{\Interp[\vdash]{\A}}"', from=1-1, to=2-1]
	\arrow[mm, ""{name=1, anchor=center, inner sep=0}, "{\Interp[\vdash]{\B}}", from=1-2, to=2-2]
	\arrow[mm, "{\Interp[\ABS]{f}}"', from=2-1, to=2-2]
	\arrow[mm, "{\ge}"{description}, draw=none, from=0, to=1]
\end{tikzcd}\]
Thus, we define a type of lax homomorphisms%
\footnote{Note that the laxity here is of the square induced by the abstract phase, \emph{not} of the cost algebra homomorphism.}
that allows weakening of cost and potential:
\[ \LolliV*{\A}{\B} \isdef \LolliV{\A}{\SealC{\B}}. \]

\begin{remark}
  \citet{grodin-li-harper>2026} added a sealing effect to computation types (at the same level as the cost effect) and used the sealing monad as a semantics.
  However, this interferes with our proof of fracture and gluing (\cref{thm:cfracglue}); hence, we make the sealing monad user-specified.
\end{remark}

\begin{lemma}\label{lem:seal}
  In the abstract phase (\ie, assuming $\AbsV$), it is the case that $\SealC{\A} = \A$.
\end{lemma}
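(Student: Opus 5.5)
Assume $\abs : \AbsV$ throughout. The plan is to compute each corner of the comma object defining $\SealC{\A}$ under this assumption. Once $\ClosedC{\A}$ and $\ClosedC{\OpenC{\A}}$ become terminal, the comma object should collapse to $\OpenC{\A}$, and that in turn is $\A$.

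First, the closed corners. The definition of the concrete modality on computation types already notes that $\ClosedC{\A} = \UnitC$ given $\abs : \AbsV$. The reason is that the copower $\CopowerC{\AbsV}{-}$ becomes the identity, so the pushout leg $\starClC$ out of $\UnitC$ is an equivalence. The same fact gives $\ClosedC{\OpenC{\A}} = \UnitC$.

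Second, the open corner. Under $\abs$, the proposition $\AbsV$ is contractible, so $\OpenC{\A} = \PowerC{\AbsV}{\A} = \A$. Concretely, the inverse of $\etaOpC[\A]$ evaluates at $\abs$. This shows $\etaOpC[\A]$ is an equivalence, and it is a homomorphism because cost acts pointwise on powers.

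Third, with these identifications, the comma object $\CommaC{\ClosedC{\A}}{\OpenC{\A}}{\ClosedC{\OpenC{\A}}}$ has the form $\CommaC{\UnitC}{\A}{\UnitC}$. It consists of triples $(u, a, p)$ with $u : \U{\UnitC}$, $a : \U{\A}$, and $p$ a witness of $\etaClC(a) \le \ClosedC{\etaOpC}(u)$ in $\U{\UnitC}$. Since $\U{\UnitC}$ is contractible and its preorder is trivial, the first and third components are contractible. The projection to $\A$ is therefore an equivalence of underlying value types.

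This projection is a cost algebra homomorphism, because the cost structure on the comma object is computed componentwise. Since $\U$ is conservative (as used in the proof of \cref{thm:cfracglue}), the underlying equivalence lifts to an equivalence of computation types. Univalence then gives $\SealC{\A} = \A$. The unit of $\SealC$ should be exactly the inverse of this map.

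The main obstacle is the third step: making precise that the comma object is formed in cost algebras and that its $\U$ is the value-level comma object (a lax limit), so that $\U$ commutes with it. I would argue this the same way the proof of \cref{thm:cfracglue} does for pullbacks. The comma object is a weighted limit, its carrier is the comma of synthetic preorders with cost acting componentwise, and right adjoints preserve it. If that is awkward to state, a fallback is to build the homomorphism $\LolliV{\A}{\SealC{\A}}$ directly as $a \mapsto (\star, \etaOpC a, \mathsf{refl})$ and check by hand that it is inverse to the projection, using contractibility of $\U{\UnitC}$ and of the order relation on it.
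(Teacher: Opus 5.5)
The paper states this lemma without proof, and your argument is correct and is the evident intended one. Assuming $\AbsV$, both closed corners collapse to $\UnitC$ and $\OpenC{\A} = \A$, so the comma object over the trivially ordered terminal algebra is just $\A$; conservativity of $\U$ and univalence then finish the proof exactly as in \cref{thm:cfracglue}. One nit: your comparison is oriented opposite to the paper's square, which puts the $\ClosedC{\etaOpC}$ leg on the smaller side, but this is immaterial here because the order on $\U{\UnitC}$ is trivial.
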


\begin{example}
  The splay tree data structure~\citep{sleator-tarjan>1985} is a purely functional implementation of arrays with an amortized logarithmic-time lookup operation.
  The precise cost depends on internal implementation details and is not exactly representable in the abstract phase; to accommodate this laxity, one may use a lax homomorphism, as verified analytically in Calf by \citet{kebuladze>2025}.
  In the present synthetic setting, the implementation type of splay trees can be
  \[ \A \isdef \SealC[\GlueeC{\F[\VTY{Tree}~\NatV]}{\F[\LV]}{\alpha}], \]
  where $\alpha$ is the abstraction homomorphism described by \citeauthor{kebuladze>2025}.
  By \cref{lem:seal}, this implementation type is revealed as $\F[\LV]$ in the abstract phase, facilitating modularity.
\end{example}

\section{The Banker's View}\label{sec:banker}

In contrast to the physicist's potential, the banker imagines \emph{credits} and \emph{debits} being stored within data structures.
In the original work on amortized analysis, \citet{tarjan>1985} notes that credits and debits will not appear in programs themselves.
However, in the present setting of cost verification, credits and debits will indeed appear both in types and terms.
To render the banker's view within type theory, we will define notions of credit and debit internally in terms of abstraction.

\subsection{Credits and Debits}

We now implement credit and debit within the type theory using abstraction homomorphisms.

\subsubsection{The Credit Operator}

Because potential is analogous to credit, we may store $c$ credits on a data structure by emitting $c$ additional units of cost in the included abstraction homomorphism.

\begin{definition}[{\AgdaFormalized{Calf.Computation.Credit}}]
  The \emph{credit operator} annotates a type $\A$ with $c$ additional credits by constructing a variant of $\A$ with an abstraction homomorphism that emits $c$ additional units of cost:
  \begin{align*}
    &\CMP*{\vartriangleright} : \ExpV{\Cost}{\ExpV{\CC}{\CC}} \\
    &\Credit{c}{\A} \isdef \GlueeC{\A}{\A}{\charge[\A]{c}}
  \end{align*}
\end{definition}

\begin{remark}
  By \cref{def:cost-algebra}, we have $\charge[\A]{c} : \ExpV{\U{\A}}{\U{\A}}$.
  The ability to lift the cost effect to a homomorphism $\LolliV{\A}{\A}$ is precisely commutativity of the cost monoid $(\Cost, 0, +)$.
\end{remark}

\begin{remark}
  The credit operator has the effect not of replacing the existing abstraction homomorphism contained within $\A$, but rather increasing its cost by $c$.
  In the Kripke semantics:
  \begin{align*}
    \Credit{c}{\PresheafC[\alpha]{\Interp[\top]{\A}}{\Interp[\ABS]{\A}}} \isdef \PresheafC[\charge[\A[\top]]{c} \;\circ\; \alpha]{\Interp[\top]{\A}}{\Interp[\ABS]{\A}}
  \end{align*}
\end{remark}

\begin{lemma}[{\AgdaFormalized{Calf.Computation.Credit}}]\label{lem:credit-modalities}
  Credits are \emph{ghost data}, invisible in both the concrete program and the abstract specification, only appearing to mediate between the two:
  $\ClosedC[\Credit{c}{\A}] = \ClosedC{\A}$ and $\OpenC[\Credit{c}{\A}] = \OpenC{\A}$.
\end{lemma}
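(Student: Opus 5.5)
The plan is to reduce both equations to the computation of $\ClosedC$ and $\OpenC$ on a glued type, which was already carried out in the second half of the proof of \cref{thm:cfracglue}. Unfolding definitions, we have
\[ \Credit{c}{\A} = \GlueeC{\A}{\A}{\charge[\A]{c}} = \GlueC{\ClosedC{\A}}{\OpenC{\A}}{\ClosedC{(\etaOpC \circ \charge[\A]{c})}}. \]
That proof shows that for any glued type $\GlueC{\A[\cl]}{\A[\op]}{\alpha_\cl}$ with $\A[\cl]$ concrete and $\A[\op]$ abstract, we get $\ClosedC$ of it equal to $\A[\cl]$ and $\OpenC$ of it equal to $\A[\op]$.

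The argument for this does not depend on the particular abstraction homomorphism. Since both modalities are lex (\cref{lem:open-closed-lex}), they commute with the pullback. On the concrete side, $\ClosedC{\ClosedC{\A[\op]}} = \ClosedC{\A[\op]}$, so the pullback collapses to $\ClosedC{\A[\cl]} = \A[\cl]$. On the abstract side, $\OpenC{\ClosedC{-}} = \UnitC$ and $\OpenC{\A[\cl]} = \UnitC$ (as $\A[\cl]$ is concrete), so the pullback collapses to $\OpenC{\A[\op]} = \A[\op]$.

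Instantiating with $\A[\cl] \isdef \ClosedC{\A}$ and $\A[\op] \isdef \OpenC{\A}$ gives the result. Both are concrete and abstract respectively, since the modalities are idempotent. Hence $\ClosedC[\Credit{c}{\A}] = \ClosedC{\A}$ and $\OpenC[\Credit{c}{\A}] = \OpenC{\A}$.

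I would therefore state a small general lemma first: for every $\alpha : \LolliV{\A[\top]}{\A[\ABS]}$, we have $\ClosedC[\GlueeC{\A[\top]}{\A[\ABS]}{\alpha}] = \ClosedC{\A[\top]}$ and $\OpenC[\GlueeC{\A[\top]}{\A[\ABS]}{\alpha}] = \OpenC{\A[\ABS]}$. The present lemma is then the instance $\A[\top] = \A[\ABS] = \A$ with $\alpha = \charge[\A]{c}$. The same lemma also covers \cref{cor:AbstractionC-id}-style reasoning, so it is worth isolating.

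The main obstacle is a technical point: the identification $\OpenC{\ClosedC{\A[\op]}} = \UnitC$, along with idempotence of $\ClosedC$ and $\OpenC$ on computation types. These are used in the proof of \cref{thm:cfracglue} but are not separately stated. I would establish them by passing through $\U$. \Cref{lem:u-open,lem:u-closed} transport them to the corresponding value-level facts, which hold by \cite{rijke-shulman-spitters>2020}. Conservativity of $\U$, together with the fact that the canonical maps are homomorphisms, then lifts the value-level equivalences back to equalities of cost algebras by univalence.
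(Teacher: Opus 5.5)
Your proposal is correct and is essentially the argument the paper relies on: $\Credit{c}{\A} = \GlueeC{\A}{\A}{\charge[\A]{c}}$ is a glued type with concrete part $\ClosedC{\A}$ and abstract part $\OpenC{\A}$, so the lemma is an instance of the ``other direction'' computation in the proof of \cref{thm:cfracglue}, which, as you note, never depends on the gluing homomorphism. Isolating the general statement for $\GlueeC{\A[\top]}{\A[\ABS]}{\alpha}$ is the right move, and so is discharging idempotence and $\OpenC{\ClosedC{\A}} = \UnitC$ via \cref{lem:u-closed} and conservativity of the underlying-type functor.
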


\begin{lemma}\label{lem:credit-functorial}
  For $c' \ge c$, the credit operator admits a weakening map $\LolliV*{\Credit{c'}{\A}}{\Credit{c}{\A}}$ that wastes credits.
\end{lemma}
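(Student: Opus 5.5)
We construct a lax homomorphism $w : \LolliV{\Credit{c'}{\A}}{\SealC[\Credit{c}{\A}]}$ by building a lax square whose top and bottom edges are both $\idC[\A]$. Unfolding the definitions, $\Credit{c'}{\A} = \GlueeC{\A}{\A}{\charge[\A]{c'}}$ and $\Credit{c}{\A} = \GlueeC{\A}{\A}{\charge[\A]{c}}$. With identities on top and bottom, the required cell reduces to the inequality $\charge[\A]{c} \le \charge[\A]{c'}$ between the two abstraction homomorphisms. Here the concrete output is charged less than the abstract side expects, which is exactly the direction ($\ge$) recorded by the comma object defining $\SealC$.

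First we establish this inequality. Since $c' \ge c$ in the ordered monoid, we may write $c' = c + d$; if $\Cost$ does not have subtraction, we instead use monotonicity of $\charge{-}$ in the cost argument. For cost algebras, the preorder on $\U{\A}$ is the synthetic cost preorder of \citet{grodin-niu-sterling-harper>2024}, and that framework gives $a \le_{\A} \charge[\A]{d}[a]$. Applying $\charge[\A]{c}$, which is monotone, yields $\charge[\A]{c}[a] \le \charge[\A]{c'}[a]$ pointwise. This pointwise inequality is the lax cell.

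Second, we produce the Kleisli map into $\SealC[\Credit{c}{\A}]$. By the fracture lemmas (\cref{lem:open-closed-lex}, \cref{lem:credit-modalities}), $\ClosedC[\Credit{c}{\A}] = \ClosedC{\A}$ and $\OpenC[\Credit{c}{\A}] = \OpenC{\A}$. An element of the comma object $\SealC[\Credit{c}{\A}]$ therefore consists of a concrete component in $\ClosedC{\A}$, an abstract component in $\OpenC{\A}$, and an inequality between their images in $\ClosedC{\OpenC{\A}}$. Given an input of $\Credit{c'}{\A}$, which is a glued triple $(a_\cl, a_\op, p)$ with $p$ witnessing $\ClosedC{(\etaOpC\circ\charge{c'})}\,a_\cl = \etaClC\,a_\op$, we keep $a_\cl$ and $a_\op$ unchanged. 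For the witness, we compose $p$ with the inequality from the previous step, lifted through $\ClosedC$ and $\etaOpC$, to obtain $\ClosedC{(\etaOpC\circ\charge{c})}\,a_\cl \le \etaClC\,a_\op$. Homomorphism laws then follow componentwise, because both the identity and $\charge{}$ commute with charging.

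The main obstacle is this second step. We must show that the lifting of an inequality through $\ClosedC$ is well defined, that is, that $\ClosedC$ and $\etaOpC$ preserve the synthetic order. We must also check that the comma object admits this description on the underlying value type, namely that $\U$ commutes with the comma object, in analogy with the limit argument in the proof of \cref{thm:cfracglue}. Monotonicity of $\ClosedC$ should follow by induction on the pushout: on the $\starClC$ side everything is the unit, so the order is trivial there. If this proves awkward, a fallback is to work directly in the presheaf semantics, where the map is evidently the identity pair together with the lax cell $\charge{c} \le \charge{c'}$.
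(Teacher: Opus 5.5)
Your proposal is correct and matches the evident argument behind this lemma, which the paper states without proof. That argument is a lax analogue of \cref{lem:AbstractionC-lolli} with identities on top and bottom, whose cell $\charge{c} \le \charge{c'}$ comes from monotonicity of charging in the cost argument; cite this monotonicity directly rather than going through $c' = c + d$, which also needs $0 \le d$. Your worry about monotonicity of $\ClosedC$ is unnecessary: in the synthetic setting the pointwise inequality is a path in the function space, and applying the functorial action of $\ClosedV$ (via \cref{lem:u-closed}) to that path gives the required inequality at every point of $\ClosedC{\A}$, with no induction on the pushout.
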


\begin{lemma}[{\AgdaFormalized{Calf.Computation.Credit}}]\label{lem:credit-monoidal}
  The credit operator
  satisfies
  $\Credit{0}{\A} = \A$ and $\Credit{c_1 + c_2}{\A} = \Credit{c_1}{\Credit{c_2}{\A}}$.
\end{lemma}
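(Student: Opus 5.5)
The plan is to prove both equations by univalence in $\CC$, exhibiting equivalences of cost algebras. Each equivalence is built from the gluing data of \cref{def:AbstractionC}, and the key tool is the characterization of homomorphisms between glued types in \cref{lem:AbstractionC-lolli}.

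For $\Credit{0}{\A} = \A$, we unfold $\Credit{0}{\A} = \GlueeC{\A}{\A}{\charge[\A]{0}}$. Since $\charge[\A]{0}[a] = a$ by \cref{def:cost-algebra}, and homomorphisms are determined by their underlying functions (being a homomorphism is a proposition, as $\U{\A}$ is an h-set), we get $\charge[\A]{0} = \idC[\A]$ as homomorphisms. Hence $\Credit{0}{\A} = \GlueeC{\A}{\A}{\idC[\A]}$, and this equals $\A$ by \cref{cor:AbstractionC-id}.

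For the second equation, we first establish a general composition lemma:
\[ \GlueeC{\GlueeC{\A}{\B}{\beta}}{\C}{\gamma'} = \GlueeC{\A}{\C}{\gamma \circ \beta} \]
for homomorphisms $\beta : \LolliV{\A}{\B}$ and $\gamma : \LolliV{\B}{\C}$. Here $\gamma'$ is the homomorphism $\LolliV{\GlueeC{\A}{\B}{\beta}}{\C}$ obtained from $\gamma$ via \cref{cor:lolli-gluee}(2), so that its concrete part is $\gamma \circ \beta$.

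The proof of this lemma goes through fracture and gluing (\cref{thm:cfracglue}). Two glued types are equal once their concrete parts, abstract parts and gluing maps agree. For the concrete parts, \cref{lem:open-closed-lex} gives
\[ \ClosedC[\GlueeC{\A}{\B}{\beta}] = \ClosedC{\A}, \]
as computed in the proof of \cref{thm:cfracglue}. The abstract parts are $\OpenC{\C}$ on both sides. It remains to check that, under the identification $\ClosedC[\GlueeC{\A}{\B}{\beta}] = \ClosedC{\A}$, the gluing map $\ClosedC{(\etaOpC \circ \gamma')}$ becomes $\ClosedC{(\etaOpC \circ \gamma \circ \beta)}$. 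This holds by functoriality of $\ClosedC$ together with the definition of $\gamma'$.

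We then instantiate the lemma with $\B \isdef \Credit{c_2}{\A}$. Unfolding, $\Credit{c_1}{\Credit{c_2}{\A}} = \GlueeC{\Credit{c_2}{\A}}{\Credit{c_2}{\A}}{\charge{c_1}}$. The abstract component $\Credit{c_2}{\A}$ can be replaced by $\A$ because only $\OpenC$ of it matters, and $\OpenC[\Credit{c_2}{\A}] = \OpenC{\A}$ by \cref{lem:credit-modalities}. The charge on $\Credit{c_2}{\A}$ is, on each component, the charge of $\A$, so the outer gluing map corresponds to $\charge[\A]{c_1}$ applied after the inner map. The composition lemma then yields
\[ \GlueeC{\A}{\A}{\charge[\A]{c_1} \circ \charge[\A]{c_2}}, \]
which equals $\Credit{c_1 + c_2}{\A}$ by the cost algebra law $\charge[\A]{c_1 + c_2} = \charge[\A]{c_1} \circ \charge[\A]{c_2}$.

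The main obstacle is the coherence step: transporting the gluing map along the identifications supplied by lexness, and checking that the charge operation on a glued type $\Credit{c_2}{\A}$ is computed componentwise. To handle this, I would argue at the level of underlying value types using $\U$. Because $\U$ is conservative, it suffices to verify the equivalence on $\U$. There, \cref{lem:u-closed} and the value-level pullback description reduce everything to pairs $(x_\cl, x_\op)$ with a path, on which charge acts componentwise. The final equality then follows by a direct path computation, rather than by manipulating the modal identifications abstractly.
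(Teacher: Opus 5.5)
Your argument is correct, but it takes a longer road than the lemma requires. The paper gives no written proof, only the mechanization. Its remark on the Kripke semantics, that the credit operator post-composes $\charge{c}$ onto the existing abstraction homomorphism, internalizes directly. The fracture of a glued type is the triple it was glued from. Since $\etaOpC$ is a homomorphism and $\ClosedC$ commutes with charge maps, the fracture of $\Credit{c}{\B}$ is $(\ClosedC{\B}, \OpenC{\B}, \charge{c} \circ \ClosedC{\etaOpC})$. That is, it is the fracture of $\B$ with $\charge{c}$ post-composed onto its gluing map.

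Write $P_c$ for this operation on the $\Sigma$-type of \cref{thm:cfracglue}. The cost-algebra laws, together with your observation that homomorphisms are determined by their underlying functions, give $P_0 = \mathrm{id}$ and $P_{c_1} \circ P_{c_2} = P_{c_1 + c_2}$. Apply this with $\B = \A$ and with $\B = \Credit{c_2}{\A}$, and use that fracture is an equivalence: both equations follow at once, with \cref{lem:credit-modalities} as a by-product. Because $P_c$ leaves the first two components literally unchanged, you never need to transport along $\ClosedC[\Credit{c_2}{\A}] = \ClosedC{\A}$ or $\OpenC[\Credit{c_2}{\A}] = \OpenC{\A}$. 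So three steps of your route disappear: the coherence step you flag as the main obstacle, the lexness and idempotence computation, and the abstract-side replacement (which implicitly routes through the spend map of \cref{def:spend}).

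What your route buys in exchange is the general law $\GlueeC{\GlueeC{\A}{\B}{\beta}}{\C}{\gamma'} = \GlueeC{\A}{\C}{\gamma \circ \beta}$, which is of independent interest. For example, it shows that the type assembled sequentially in \cref{ex:sequential-bq} coincides with the one assembled simultaneously in \cref{ex:pfat-bq}.
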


\NewDocumentCommand{\CreditSpend}{m o}{\LABEL{spend}\langle #1 \rangle\IfValueT{#2}{({#2})}}
\NewDocumentCommand{\CreditSave}{m o}{\LABEL{save}\langle #1 \rangle\IfValueT{#2}{({#2})}}

\begin{definition}[{\AgdaFormalized{Calf.Computation.Credit}}]\label{def:spend}
  Using \cref{cor:lolli-gluee} with $f_\top, g_{\ABS} : \LolliV{\A}{\A}$ being the identity function, define
  \begin{align*}
    \CreditSave{c} : \LolliV{\A}{\Credit{c}{\A}}  && \CreditSpend{c} : \LolliV{\Credit{c}{\A}}{\A}.
  \end{align*}
\end{definition}

\begin{samepage}
The $\CreditSave{c}$ operation incurs no true cost, but it instead reports $c$ units of abstract cost to clients in the abstract phase.
Dually, the $\CreditSpend{c}$ operation truly incurs $c$ cost, but it reports no cost to clients in the abstract phase.
This can be visualized in the Kripke semantics, where the $\CreditSave{c}$ and $\CreditSpend{c}$ operations are interpreted as the following vertical composites of squares, respectively:
\begin{center}
  \begin{minipage}{0.5\textwidth}
\[\begin{tikzcd}
	{\Interp[\top]{\A}} & {\Interp[\top]{\A}} \\
	{\Interp[\ABS]{\A}} & {\Interp[\ABS]{\A}} \\
	{\Interp[\ABS]{\A}} & {\Interp[\ABS]{\A}}
	\arrow[mm, r,-,double equal sign distance,double, from=1-1, to=1-2]
	\arrow[mm, ""{name=0, anchor=center, inner sep=0}, "\alpha"', from=1-1, to=2-1]
	\arrow[mm, ""{name=1, anchor=center, inner sep=0}, "\alpha", from=1-2, to=2-2]
	\arrow[mm, r,-,double equal sign distance,double,, from=2-1, to=2-2]
	\arrow[mm, r,-,double equal sign distance,double,, ""{name=2, anchor=center, inner sep=0}, from=2-1, to=3-1]
	\arrow[mm, ""{name=3, anchor=center, inner sep=0}, "{\charge{c}}", from=2-2, to=3-2]
	\arrow[mm, "{\charge{c}}"', from=3-1, to=3-2]
	\arrow["{=}"{description}, draw=none, from=1, to=0]
	\arrow["{=}"{description}, draw=none, from=2, to=3]
\end{tikzcd}\]
  \end{minipage}%
  \begin{minipage}{0.5\textwidth}
\[\begin{tikzcd}
	{\Interp[\top]{\A}} & {\Interp[\top]{\A}} \\
	{\Interp[\ABS]{\A}} & {\Interp[\ABS]{\A}} \\
	{\Interp[\ABS]{\A}} & {\Interp[\ABS]{\A}}
	\arrow[mm, "{\charge{c}}", from=1-1, to=1-2]
	\arrow[mm, ""{name=0, anchor=center, inner sep=0}, "\alpha"', from=1-1, to=2-1]
	\arrow[mm, ""{name=1, anchor=center, inner sep=0}, "\alpha", from=1-2, to=2-2]
	\arrow[mm, "{\charge{c}}", from=2-1, to=2-2]
	\arrow[mm, ""{name=2, anchor=center, inner sep=0}, "{\charge{c}}"', from=2-1, to=3-1]
	\arrow[mm, r,-,double equal sign distance,double, ""{name=3, anchor=center, inner sep=0}, from=2-2, to=3-2]
	\arrow[mm, r,-,double equal sign distance,double, from=3-1, to=3-2]
	\arrow["{=}"{description}, draw=none, from=1, to=0]
	\arrow["{=}"{description}, draw=none, from=2, to=3]
\end{tikzcd}\]
  \end{minipage}%
\end{center}
Reading the diagrams from left to the right, saving credits moves cost from the abstract specification into the output type, and spending credits moves cost from the input type to the true code.
\end{samepage}

\begin{lemma}[{\AgdaFormalized{Calf.Computation.Credit}}]\label{lem:save-spend}
  For all $c$, $\CreditSpend{c} \circ \CreditSave{c} = \charge[\A]{c}$.  %
\end{lemma}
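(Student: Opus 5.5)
The composite $\CreditSpend{c} \circ \CreditSave{c}$ is a homomorphism $\LolliV{\A}{\A}$, so by \cref{cor:AbstractionC-id} we can view it as a map $\LolliV{\GlueeC{\A}{\A}{\idC[\A]}}{\GlueeC{\A}{\A}{\idC[\A]}}$. By the fracture of maps (\cref{cor:cfracglue-map}), it is determined by its concrete and abstract components. Since coherence is a proposition (homomorphisms between h-set-carried cost algebras form an h-set), it therefore suffices to show that the concrete and abstract components of both sides agree.

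We compute each component through the gluing in \cref{lem:AbstractionC-lolli}, using the data supplied by \cref{cor:lolli-gluee}.

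For $\CreditSave{c}$, with target $\Credit{c}{\A} = \GlueeC{\A}{\A}{\charge[\A]{c}}$, the construction sets $f_\top = \idC$. The abstract part is then $f_{\ABS} = \charge[\A]{c} \circ \idC = \charge[\A]{c}$.

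For $\CreditSpend{c}$, the construction sets $g_{\ABS} = \idC$. The concrete part is then $g_\top = \idC \circ \charge[\A]{c} = \charge[\A]{c}$, and the abstract part is $\idC$.

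Composing componentwise gives:
\begin{itemize}
  \item concrete component $\charge[\A]{c} \circ \idC = \charge[\A]{c}$;
  \item abstract component $\idC \circ \charge[\A]{c} = \charge[\A]{c}$.
\end{itemize}
This agrees with the decomposition of $\charge[\A]{c}$ under \cref{cor:AbstractionC-id}, whose components are $\charge[\A]{c}$ on both sides. Here $\charge[\A]{c}$ is a homomorphism by commutativity of the cost monoid.

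Composition commutes with this decomposition: composing glued maps composes the components. This holds because the $f_\cl = \ClosedC{f_\top}$ and $f_\op = \OpenC{f_{\ABS}}$ assignments are functorial, since $\ClosedC$ and $\OpenC$ are functors.

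The main obstacle is bookkeeping the identifications $\A = \GlueeC{\A}{\A}{\idC}$ and checking that the squares provided by \cref{cor:lolli-gluee} compose correctly. If this gets unwieldy, there is a more direct route: compute the underlying function on $\U{\A}$ through \cref{lem:u-closed} and the value-level fracture and gluing (\cref{thm:vfracglue}). Then check, separately after applying $\ClosedV$ and $\OpenV$, that the composite sends $a$ to $\charge[\A]{c}[a]$. Equality of homomorphisms then follows from equality of underlying functions.
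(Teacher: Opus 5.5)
Your proposal is correct and follows essentially the paper's own reasoning, which appears as the Kripke-semantic squares depicting saving and spending and is backed by the mechanization. By \cref{cor:lolli-gluee}, saving has concrete/abstract components (identity, cost effect $c$) and spending has components (cost effect $c$, identity), so the composite carries the cost effect $c$ in both components, exactly the decomposition of the cost effect itself under \cref{cor:AbstractionC-id}; your reduction of equality of glued maps to componentwise equality, justified because coherence is a proposition over h-set carriers, is the right bookkeeping to make this rigorous.
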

\begin{remark}
  In the language of double category theory, the bottom halves of the above composites with \cref{lem:save-spend} render cost and potential/credits as \emph{companions}~\citep{grandis-pare>2004}, representing the idea that cost and potential/credits are the same idea but in different ``dimensions''.
\end{remark}

\begin{lemma}\label{lem:PotentialC}
  Recalling \cref{def:PotentialC}, the following types are equivalent:
  \[ \PotentialC{\X}{\Phi} = \CopowerC{(x : \X)}{\Credit{\Phi(x)}{\TopC}}. \]
  This formally connects the physicist's view and the banker's view: the type containing the potential function $\Phi$ is equivalent to the type representing a value $x : \X$ stored alongside $\Phi(x) : \Cost$ credits.
\end{lemma}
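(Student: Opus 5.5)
The plan is to compare the two sides through the fracture and gluing theorem for $\CC$ (\cref{thm:cfracglue}). Two computation types are equal once their concrete parts, abstract parts, and abstraction homomorphisms agree. I read $\TopC$ as the unit of the LNL tensor, so that $\TopC = \F{\UnitV}$, and the copower $\CopowerC{(x : \X)}{\B(x)}$ as the dependent copower of cost algebras. This is a left adjoint, and it satisfies $\CopowerC{(x:\X)}{\F{\UnitV}} = \F{\X}$ because $\F$ is free.

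First I compute the left-hand side. By \cref{def:PotentialC}, its concrete part is $\ClosedC{\F{\X}}$, its abstract part is $\OpenC{\F{\X}}$, and its abstraction homomorphism is $\ClosedC{(\etaOpC \circ \varphi(\Phi))}$, where $\varphi(\Phi)(\ret{x}) = \charge{\Phi(x)}[\ret{x}]$.

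Next I compute the right-hand side fiberwise. By \cref{lem:credit-modalities}, $\ClosedC[\Credit{\Phi(x)}{\TopC}] = \ClosedC{\TopC}$ and $\OpenC[\Credit{\Phi(x)}{\TopC}] = \OpenC{\TopC}$. The abstraction homomorphism of $\Credit{\Phi(x)}{\TopC}$ is induced by $\charge{\Phi(x)}$ on $\F{\UnitV}$. Passing these through the copower then needs two commutation facts.
\begin{enumerate}
  \item $\ClosedC$ commutes with the copower. Both are colimit constructions: $\ClosedC$ is a pushout, the copower is a left adjoint, and colimits commute with colimits.
  \item $\OpenC$ commutes with the dependent copower. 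I would establish this by applying $\U$ and using \cref{lem:u-open}: $\OpenC = \PowerC{\AbsV}{-}$ with $\AbsV$ a proposition, so on underlying value types this is the fact that $\OpenV$ commutes with $\Sigma$-types, since $\OpenV$ is a lex modality.
\end{enumerate}
With these, the concrete and abstract parts of the right-hand side are $\ClosedC{(\CopowerC{(x:\X)}{\F{\UnitV}})} = \ClosedC{\F{\X}}$ and $\OpenC{\F{\X}}$. Its abstraction homomorphism is determined, by the universal property of the copower (equivalently, the freeness of $\F{\X}$), by its value on $\ret{x}$, which is $\charge{\Phi(x)}[\ret{x}]$. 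This matches $\varphi(\Phi)$.

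As a fallback, or as a cross-check, I would construct the equivalence explicitly. The map out of the right-hand side is defined fiberwise using the universal property of the copower. For each $x$, \cref{lem:AbstractionC-lolli} gives a homomorphism $\Credit{\Phi(x)}{\F{\UnitV}} \multimap \PotentialC{\X}{\Phi}$ whose concrete and abstract components are both $\ret{\star} \mapsto \ret{x}$. Its square commutes because each path emits exactly $\Phi(x)$ cost. The inverse and the round-trip equations would then be checked on generators $\ret{x}$, using conservativity of $\U$ as in the proof of \cref{thm:cfracglue}.

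The main obstacle is the second commutation fact: that the copower preserves glue types, in particular that $\OpenC$ commutes with a copower indexed by an arbitrary $\X$. Concretely, this amounts to showing that $\CopowerC{(x:\X)}{\GlueC{\A[\cl](x)}{\A[\op](x)}{\alpha_x}}$ is again a pullback over the closed part, which fails for general colimits. I would try first to prove it on underlying value types, using that $\U$ of a copower of free algebras is a $\Sigma$-type and that $\Sigma$-types are stable under the lex modalities $\OpenV$ and $\ClosedV$, before lifting back to $\CC$ by conservativity.
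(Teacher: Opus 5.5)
The paper states this lemma without proof, so your plan has to stand on its own. The overall strategy (compare concrete parts, abstract parts, and gluing data) is sound, and your fiberwise forward map is correct. However, both commutation facts you rely on are false for arbitrary $\X$.

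Take $\Phi = 0$, so every fiber is $\TopC$ and the copower is $\F{\X}$. Compute carriers as $\Sigma$-types, as you propose.
\begin{itemize}
\item The comparison map from $\CopowerC{(x : \X)}{\OpenC{\TopC}}$ to $\OpenC{\F{\X}}$ is $\etaOpV[\X] \times \mathrm{id} : \X \times \OpenV{\Cost} \to \OpenV{\X} \times \OpenV{\Cost}$. This is an equivalence only when $\X$ is abstract.
\item Symmetrically, the comparison for $\ClosedC$ is $\etaClV[\X] \times \mathrm{id}$, an equivalence only when $\X$ is concrete. For instance, under $\AbsV$ we have $\ClosedC{\F{\X}} = \UnitC$, while $\CopowerC{\X}{\UnitC}$ has carrier $\X$.
\end{itemize}
Your justifications fail for matching reasons. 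The pushout defining $\ClosedC{\A}$ has the corner $\CopowerC{\AbsV}{\UnitC}$, which is constant in $\A$; when you copower the whole diagram it becomes $\CopowerC{\X}{\CopowerC{\AbsV}{\UnitC}}$, so ``colimits commute with colimits'' does not apply. Lexness of $\OpenV$ gives preservation of pullbacks, not commutation with $\Sigma$ over a base that is not abstract. So the step you single out as the main obstacle, proving these commutations on underlying value types, cannot be carried out.

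What you actually need is weaker and true. $\ClosedC$ and $\OpenC$ are reflections onto $\CC[\cl]$ and $\CC[\op]$. For $\ClosedC$, this follows from the pushout's universal property and the fact that a concrete type is $\UnitC$ under $\AbsV$. Hence, for $\bigcirc$ either modality, $\bigcirc(\CopowerC{(x : \X)}{\B(x)}) = \bigcirc(\CopowerC{(x : \X)}{\bigcirc\,\B(x)})$: homomorphisms from either side into a modal $\C$ are just families of homomorphisms out of the $\B(x)$. The value-level analogue $\bigcirc(\DSumV{x : \X}{\Y(x)}) \simeq \bigcirc(\DSumV{x : \X}{\bigcirc\,\Y(x)})$ holds for any modality; dropping the outer $\bigcirc$ is exactly the error. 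Combined with \cref{lem:credit-modalities}, this yields the concrete and abstract parts you wrote down. Each false commutation is used once forwards and once backwards, which is why your answer came out right.

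The cleanest completion then skips comparing abstraction homomorphisms.
\begin{itemize}
\item Your map $g$ from $\CopowerC{(x : \X)}{\Credit{\Phi(x)}{\TopC}}$ to $\PotentialC{\X}{\Phi}$ has fiber components $\ClosedC{\iota_x}$ and $\OpenC{\iota_x}$, with $\iota_x : \LolliV{\TopC}{\F{\X}}$ (\cref{lem:AbstractionC-lolli}).
\item Under the identifications above, $\ClosedC{g}$ and $\OpenC{g}$ are therefore $\ClosedC$ and $\OpenC$ applied to the canonical isomorphism $\CopowerC{(x : \X)}{\TopC} \cong \F{\X}$.
\item Every type is naturally the pullback of its fracture square (\cref{thm:cfracglue}), so $g$ is an equivalence.
\end{itemize}
This also replaces your inverse ``checked on generators $\ret{x}$'', which does not work because glue types are not free. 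Already for $\X = \UnitV$, where $\PotentialC{\X}{\Phi}$ is just $\Credit{c}{\TopC}$, a homomorphism from it to $\C$ in the Kripke model is not a single element. It is a pair of elements $a_\top$ of $\Interp[\top]{\C}$ and $a_{\ABS}$ of $\Interp[\ABS]{\C}$ with $\Interp[\vdash]{\C}(a_\top) = \charge{c}[a_{\ABS}]$.
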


\subsubsection{The Debit Operator}

Using the credit operator, we may define its dual, the debit operator, which as an input indicates that some credits are owed.

\begin{definition}[{\AgdaFormalized{Calf.Computation.Debit}}]
  The \emph{debit operator} annotates a type $\A$ with an opportunity to make use of $c$ credits, achieved via a lax homomorphism assuming credits:
  \begin{align*}
    &\CMP*{\vartriangleleft} : \ExpV{\Cost}{\ExpV{\CC}{\CC}} \\
    &\Debit{c}{\A} \isdef \LolliC*{\Credit{c}{\TopC}}{\A}
  \end{align*}
  We choose to use a lax homomorphism here in order to make debits provide an opportunity rather than a burden; with a strict homomorphism $\LolliC{\Credit{c}{\TopC}}{\A}$, all assumed credits must be spent.
\end{definition}

\begin{lemma}[{\AgdaFormalized{Calf.Computation.Debit}}]\label{lem:credit-debit-adjoint}
  For all $c : \Cost$,
  the debit and credit operators are adjoint: $\Credit{c} \dashv \Debit{c}$.
  The unit $\LolliV{\A}{\Debit{c}{\Credit{c}{\A}}}$ takes out a loan,
  and the counit $\LolliV*{\Credit{c}{\Debit{c}{\A}}}{\A}$ pays off a loan.
\end{lemma}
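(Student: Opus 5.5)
We will establish the adjunction $\Credit{c} \dashv \Debit{c}$ in the lax (Kleisli-for-$\SealC$) sense by exhibiting a natural correspondence
\[ \LolliV*{\Credit{c}{\A}}{\B} \;\simeq\; \LolliV{\A}{\Debit{c}{\B}}, \]
from which the unit and counit are obtained by instantiating at identities. The plan is to reduce both sides to a common description using the tensor–hom structure of LNL together with a characterization of $\Credit{c}{\A}$ as $\TensorC{\Credit{c}{\TopC}}{\A}$.

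The first step is the lemma $\Credit{c}{\A} = \TensorC{\Credit{c}{\TopC}}{\A}$. We prove it via fracture and gluing (\cref{thm:cfracglue}) by comparing the three components. By \cref{lem:credit-modalities}, $\ClosedC$ and $\OpenC$ of the left side are $\ClosedC{\A}$ and $\OpenC{\A}$. On the right side, $\TopC$ is the unit for $\TensorC{}{}$, so it suffices that the modalities commute with $\TensorC{}{}$ up to the unit. For the abstraction homomorphism we compare $\charge{c}$ on $\A$ with $\charge{c}\otimes\idC$, which agree because the tensor of cost algebras identifies $\charge{c}$ on either factor; this uses commutativity of $\Cost$. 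If commuting the modalities with $\TensorC{}{}$ proves awkward, a fallback is to build the maps $\LolliV{\Credit{c}{\A}}{\TensorC{\Credit{c}{\TopC}}{\A}}$ and its inverse directly using \cref{lem:AbstractionC-lolli}, which reduces everything to checking commuting squares over $\A$.

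With that lemma, we use the currying adjunction $\TensorC{\mathsf{X}}{-} \dashv \LolliC{\mathsf{X}}{-}$ at $\mathsf{X} = \Credit{c}{\TopC}$, applied with codomain $\SealC{\B}$:
\[ \LolliV{\TensorC{\Credit{c}{\TopC}}{\A}}{\SealC{\B}} \simeq \LolliV{\A}{\LolliC{\Credit{c}{\TopC}}{\SealC{\B}}} = \LolliV{\A}{\Debit{c}{\B}}, \]
reading $\LolliC*{\Credit{c}{\TopC}}{\B}$ as $\LolliC{\Credit{c}{\TopC}}{\SealC{\B}}$. Because $\Debit{c}$ already absorbs $\SealC$, the right side is a strict homomorphism. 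This explains the asymmetry in the statement: the unit $\LolliV{\A}{\Debit{c}{\Credit{c}{\A}}}$ is strict and corresponds to $\eta^{\SealC}$ of $\Credit{c}{\A}$, while the counit $\LolliV*{\Credit{c}{\Debit{c}{\A}}}{\A}$ is lax and corresponds to evaluation followed by the $\SealC$-multiplication.

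The triangle identities then follow from the monad laws of $\SealC$ together with the triangle laws of the tensor–hom adjunction. Concretely, the unit is currying of $\eta^{\SealC}$ and so moves the credit into the argument. The counit uses the stored credits to discharge the assumed credits.

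The main obstacle is the first step: showing that $\TensorC{}{}$ interacts well with the glue construction. One route is to establish that $\ClosedC$ and $\OpenC$ preserve $\TensorC{}{}$. The other is to rely on the direct construction via \cref{lem:AbstractionC-lolli}, which is my planned fallback.
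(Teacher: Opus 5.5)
Your reduction is sound, and it fits how the paper defines debits ($\Debit{c}{\A} \isdef \LolliC*{\Credit{c}{\TopC}}{\A}$). Reading the lemma as a natural equivalence $\LolliV*{\Credit{c}{\A}}{\B} \simeq \LolliV{\A}{\Debit{c}{\B}}$ correctly explains why the unit is strict and the counit lax. Once $\Credit{c}{\A} \simeq \TensorC{\Credit{c}{\TopC}}{\A}$ is known, the rest is currying into $\SealC{\B}$; the counit is then plain evaluation, with no multiplication of $\SealC$ involved.

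The gap is that this equivalence carries all of the content, and neither of your routes establishes it as written.

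\textbf{The fallback cannot work.} \cref{lem:AbstractionC-lolli} only applies between types already presented as $\GlueeC{-}{-}{-}$, and $\TensorC{\Credit{c}{\TopC}}{\A}$ is not so presented; that is exactly the claim. If you present it trivially as $\GlueeC{T}{T}{\mathrm{id}}$ via \cref{cor:AbstractionC-id}, you are in the situation of \cref{cor:lolli-gluee}. There a map $\Credit{c}{\A} \to T$ has concrete part $f_{\ABS} \circ \charge{c}$, and a map $T \to \Credit{c}{\A}$ has abstract part $\charge{c} \circ f_\top$. Such maps necessarily spend or save the $c$ credits, so they are never the cost-neutral comparison you need. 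In the two-world model with $\Cost = \NatV$ and $c \neq 0$, they are not even equivalences, already at $\A = \TopC$.

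\textbf{Route 1 asks for too much and assumes the key step.} First, $\OpenC = \PowerC{\AbsV}{-}$ need not commute with the quotient defining $\otimes$ when $\AbsV$ is an arbitrary proposition. It does in the two-world model, but you should not expect it internally. Second, ``compare $\charge{c}$ with $\charge{c} \otimes \mathrm{id}$'' presupposes that the gluing map of a tensor is the tensor of the gluing maps. That is precisely the missing lemma.

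You can close the gap along route 1 by proving only what you use.
\begin{enumerate}
  \item Assuming $\AbsV$, the glued type $\Credit{c}{\TopC}$ collapses to $\TopC$. Hence $\OpenC[\TensorC{\Credit{c}{\TopC}}{\A}] \simeq \OpenC{\A}$ directly.
  \item Concrete types are closed under $\LolliC{\A}{-}$, since assuming $\AbsV$ a concrete codomain is terminal. So by currying, every map from $\TensorC{\C}{\A}$ into a concrete type factors uniquely through $\TensorC{\ClosedC{\C}}{\ClosedC{\A}}$. Combined with \cref{lem:credit-modalities}, this gives $\ClosedC[\TensorC{\Credit{c}{\TopC}}{\A}] \simeq \ClosedC{\A}$. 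This identification is realized by $\ClosedC{s}$, where $s : \LolliV{\A}{\TensorC{\Credit{c}{\TopC}}{\A}}$ is the inverse unitor followed by $\CreditSave{c} \otimes \mathrm{id}$.
  \item $\CreditSave{c}$ is $\charge{c}$ in the abstract phase, so $\OpenC{s}$ corresponds to $\OpenC{\charge{c}}$ under the identification in step 1. Naturality of $\etaOpC$ then shows that the gluing map $\ClosedC{\etaOpC}$ of the tensor, transported along these identifications, is $\ClosedC{(\etaOpC \circ \charge{c})}$.
\end{enumerate}
\cref{thm:cfracglue} now yields $\Credit{c}{\A} \simeq \TensorC{\Credit{c}{\TopC}}{\A}$, and your currying argument finishes the proof.
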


  Using debits, it is possible to implement advanced amortized data structures, such as the implicit queues of \citet[\S11]{okasaki>1999} as shown by \citet{danielsson>2008} and \citet{rajani>thesis}.

\subsection{Credit-Carrying Lists}\label{sec:banker:plisti}

Using the credit operator, we may build inductive data structures that contain credits, as is standard in the banker's method.
First, we develop two variants of lists: one that stores credits linear in the length of the list, and one that stores credits quadratic in the length of the list.

\begin{samepage}
\begin{definition}[{\AgdaFormalized{Calf.Computation.PList1}}]\label{lem:plisti}
  Let $\PListI{c}{\A} \isdef \ListC[\Credit{c}{\A}]$ classify lists of elements of type $\A$ in which each element is accompanied by $c$ credits.
\end{definition}
\end{samepage}
\begin{corollary}\label{cor:clisti}
  In the abstract phase, $\PListI{c}[\F{\X}] = \ListC[\F{\X}] = \F[\ListV{\X}]$.
\end{corollary}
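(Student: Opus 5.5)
The argument has two parts, one equality at a time, and works entirely under the assumption $\abs : \AbsV$. First, for $\PListI{c}[\F{\X}] = \ListC[\F{\X}]$, unfold \cref{lem:plisti} to $\PListI{c}[\F{\X}] = \ListC[\Credit{c}{\F{\X}}]$. It then suffices to show that $\Credit{c}{\A} = \A$ for any $\A$ in the abstract phase, and to transport this identification along the type former $\ListC$ by congruence (equal types in a universe give equal lists; no functoriality argument is needed). The fact that $\Credit{c}{\A}$ is abstract in the phase follows from \cref{lem:credit-modalities}, which gives $\OpenC[\Credit{c}{\A}] = \OpenC{\A}$. Under $\abs$, the proposition $\AbsV$ is contractible, so $\OpenC{\B} = \PowerC{\AbsV}{\B} = \B$ for every $\B$; equivalently, $\etaOpC[\B]$ is an equivalence. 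Chaining these gives $\Credit{c}{\A} = \OpenC[\Credit{c}{\A}] = \OpenC{\A} = \A$.

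A second, more direct route unfolds $\Credit{c}{\A} = \GlueeC{\A}{\A}{\charge[\A]{c}}$, a pullback of $\ClosedC{\A}$ and $\OpenC{\A}$ over $\ClosedC{\OpenC{\A}}$. Under $\abs$, both closed factors are $\UnitC$, so the pullback collapses to $\OpenC{\A} = \A$. I would present the first route and mention this one as a sanity check.

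Second, for $\ListC[\F{\X}] = \F[\ListV{\X}]$, this is the standard fact that lists of computations of type $\F{\X}$, as a cost algebra, coincide with $\F$ applied to lists of values. It holds regardless of phase, presumably by the definition of $\ListC$ as the free/inductive structure whose underlying carrier is built from $\F$. I would invoke this as the definition (or defining property) of $\ListC$ on free algebras: a list of returned values with accumulated cost is just a returned list with summed cost, since $\charge{}$ commutes through list formation by commutativity of $\Cost$.

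The main obstacle I expect is this second equation. It depends on how $\ListC$ is actually defined, which the paper has not fixed at this point. If $\ListC{\B}$ is defined with underlying type $\ListV[\U{\B}]$, the equation is false as stated without some quotienting. So the plan is to take $\ListC[\F{\X}] \isdef \F[\ListV{\X}]$, or a definition from which this follows immediately, and to treat the phase-dependent content, $\Credit{c}{\A} = \A$ under $\AbsV$, as the real substance.
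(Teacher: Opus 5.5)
Your treatment of the first equation is correct and is the intended argument. \cref{lem:credit-modalities} gives $\OpenC[\Credit{c}{\A}] = \OpenC{\A}$, and under $\AbsV$ the modality $\OpenC$ is trivial, since $\PowerC{\AbsV}{\B} = \B$. So $\Credit{c}{\A} = \A$, and applying $\ListC[-]$ yields $\PListI{c}[\F{\X}] = \ListC[\F{\X}]$. Your pullback-collapse check is also sound.

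The gap is the second equation, which you do not prove but propose to make true by definition. That option is not available. $\ListC[-]$ is a single type former on $\CC$: the inductive computation type with $\ListCNil : \LolliV{\TopC}{\ListC{\A}}$ and $\ListCCons{\_}{\_} : \LolliV{\TensorC{\A}{\ListC{\A}}}{\ListC{\A}}$. This is exactly what \cref{lem:plistii} is contrasted with, and it is the recursion \cref{ex:reverse} uses. So $\ListC[\F{\X}]$ is already determined, and its identification with $\F[\ListV{\X}]$ must be derived.

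The missing idea is the one the paper invokes in the proof of \cref{thm:calf-aara-types}: $\F{}$, as a left adjoint, preserves positive types. Concretely, $\F{\UnitV} = \TopC$, $\F[\ProdV{\X}{\Y}] = \TensorC{\F{\X}}{\F{\Y}}$, and sums go to sums. Hence $\F{}$ sends the initial list algebra on $\ListV{\X}$ to an initial algebra for the $\ListC[-]$ signature at $\F{\X}$. To check initiality directly:
\begin{itemize}
\item An algebra on $\B$ amounts to a point $b_0 : \U{\B}$ together with a map $\LolliV{\TensorC{\F{\X}}{\B}}{\B}$, which is the same as $k : \ExpV{\X}{\LolliV{\B}{\B}}$.
\item By the adjunction, a homomorphism $\LolliV{\F[\ListV{\X}]}{\B}$ is just a function $h : \ExpV{\ListV{\X}}{\U{\B}}$.
\item The algebra-map conditions become $h(\ListVNil) = b_0$ and $h(\ListVCons{x}{l}) = k~x~(h~l)$, which ordinary list recursion solves uniquely.
\end{itemize}
Uniqueness of initial algebras, with univalence for $\CC$, then gives $\ListC[\F{\X}] = \F[\ListV{\X}]$ in every phase.

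Commutativity of $\Cost$ enters only in making $\otimes$ exist. Your worry about a carrier $\ListV{\U{\B}}$ does not arise, because the carrier of $\ListC[\F{\X}]$ is, by this argument, that of $\F[\ListV{\X}]$.
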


\begin{example}\label{ex:reverse}
  Using the induction principle for the type of lists with linear credits, we may define a linear-time list reverse algorithm that uses the stored credits as follows.
  \begin{center}
    \begin{minipage}[t]{0.55\textwidth}
      \iblock{
        \mrow{\DEFN{revAppend} : \LolliV{\PListI{1}[\F{X}]}{\PowerC{\ListV{\X}}{\F[\ListV{\X}]}}}
        \mrow{\DEFN{revAppend}~\ListCNil \isdef \plam{acc}{\ret{acc}}}
        \mhang{\DEFN{revAppend}~(\ListCCons{a}{l}) \isdef \plam{acc}{}}{
          \mrow{\bind{\CreditSpend{1}[a]}{x}}
          \mrow{\DEFN{revAppend}~l~(\ListVCons{x}{acc})}
        }
      }
    \end{minipage}%
    \begin{minipage}[t]{0.45\textwidth}
      \iblock{
        \mrow{\DEFN{reverse} : \LolliV{\PListI{1}[\F{\X}]}{\F[\ListV{\X}]}}
        \mrow{\DEFN{reverse}~l \isdef \DEFN{revAppend}~l~\ListVNil}
      }
    \end{minipage}%
  \end{center}
  Note that the only cost effect occurs within $\CreditSpend{1}$;
  this means that abstractly, both $\DEFN{revAppend}$ and $\DEFN{reverse}$ are zero-cost, because the credits required are pre-paid into the input list.
\end{example}

Using this credit-assuming reverse function, we may implement batched queues using the banker's view, as the potential function assigned one unit of potential per element of the back list.

\begin{example}\label{ex:banker-bq}
  Let $\B \isdef \CopowerC{\LV}{\PListI{1}[\F{\NatV}]}$ describe pairs of lists of natural numbers where each element of the second list is equipped with one credit.
  To define $\DEFN{enqueue} : \ExpV{\NatV}{\LolliV{\B}{\B}}$, we first define an auxiliary function to accept a credit, to be stored alongside the new list element:
  \begin{align*}
    &\DEFN{enqueue'} : \LolliV{\Credit{1}[\CopowerC{\NatV}{\B}]}{\B}
  \end{align*}
  Then, $\DEFN{enqueue} \isdef \vlam{n}{\clam{b}{\DEFN{enqueue'}~(\CreditSave{1}[n, b])}}$, using the $\CreditSave{1}$ derived form to pre-pay for the credit, revealed as amortized cost in the abstract phase.
  The other operations are similar, where $\DEFN{dequeue}$ makes use of $\DEFN{reverse}$ from \cref{ex:reverse}.
\end{example}

The type $\B$ does not meet the queue interface of \cref{ex:ephemeral-queue}, as it does not perform abstraction to masquerade as a single list.
Following \cref{ex:sequential-bq}, it is possible to induce this abstraction using a function $\alpha : \LolliV{\B}{\F[\LV]}$.
Alternatively, we may use a \emph{phased quotient}~\citep[\S 2.3]{grodin-li-harper>2026} to abstract implicitly while maintaining the locality of the banker's method.

\begin{example}
  The representation type $\B$ of \cref{ex:banker-bq} may be augmented to be suitably abstract as an implementation of queues by applying a quotient in the abstract phase:
  \par\iblock{
    \mhang{\KW{data}~\A : \CC~\KW{where}}{
      \mrow{\LABEL{inj} : \LolliV{\B}{\A}}
      \mrow{\LABEL{tilt} : \ExpV{\AbsV}{\DProdV{x : \NatV}{\DProdV{l_1~l_2 : \ListV{\NatV}}{\LABEL{inj}~(\ListVAppend{l_1}{\ListVSing{x}}, \ret{l_2}) = \LABEL{inj}~(l_1, \ret[\ListVAppend{l_2}{\ListVSing{x}}])}}}}
    }
  }\noindent
  Implicitly, we use \cref{cor:clisti}.
  This type $\A$ is thus equivalent in the abstract phase to $\F[\LV]$.
  When the operations sketched in \cref{ex:banker-bq} are shown to preserve this abstract quotient, this structure implements the ephemeral queue interfaces of \cref{ex:ephemeral-queue,ex:ephemeral-queue-cost}.
\end{example}

Generalizing linear-credit lists of the previous section, we demonstrate the case of lists carrying linear and triangular (\ie, $c_1n + c_2\binom{n}{2}$) credits.

\begin{samepage}
\begin{definition}[{\AgdaFormalized{Calf.Computation.PList2}}]\label{lem:plistii}
  Define $\PListII{c_1}{c_2}{\A}$ to be the following inductive type family:
  \par\iblock{
    \mhang{\KW{data}~\PListII{c_1}{c_2}{\A} : \CC~\KW{where}}{
      \mrow{\ListCNil : \LolliV{\TopC}{\PListII{c_1}{c_2}{\A}}}
      \mrow{\ListCCons{\_}{\_} : \LolliV{\TensorC{\Credit{c_1}{\A}}{\PListII{c_2 + c_1}{c_2}{\A}}}{\PListII{c_1}{c_2}{\A}}}
    }
  }\noindent
  Note that this inductive family varies $c_1$ and is thus \emph{not} representable via $\ListC[-]$.
\end{definition}
\end{samepage}

\begin{corollary}\label{cor:clistii}
  In the abstract phase, $\PListII{c_1}{c_2}[\F{\X}] = \ListC[\F{\X}] = \F[\ListV{\X}]$.
\end{corollary}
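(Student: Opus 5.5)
We prove \cref{cor:clistii} by working entirely under an assumption $\abs : \AbsV$ and showing by induction that the family $\PListII{c_1}{c_2}[\F{\X}]$, for every $c_1$ simultaneously, collapses to $\ListC[\F{\X}]$. The second equation $\ListC[\F{\X}] = \F[\ListV{\X}]$ then follows exactly as in \cref{cor:clisti}. Concretely, we use that $\F$ preserves the relevant colimit structure, so that $\ListC[\F{\X}]$, built from $\TopC$ and $\TensorC{\F{\X}}{-}$, agrees with $\F[\ListV{\X}]$ via $\TensorC{\F{\X}}{\F{\Y}} = \F[\ProdV{\X}{\Y}]$ and $\TopC = \F{\UnitV}$. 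So the real content is the first equation.

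The key local fact is that credits vanish in the abstract phase. By \cref{lem:credit-modalities}, $\OpenC[\Credit{c}{\A}] = \OpenC{\A}$. Under $\abs$, the unit $\etaOpC$ is an equivalence for every type, since $\PowerC{\AbsV}{\A} = \A$ when $\AbsV$ is inhabited. Together these give $\Credit{c}{\A} = \A$ for every $c$ and $\A$. Alternatively, unfolding $\Credit{c}{\A} = \GlueeC{\A}{\A}{\charge{c}}$ and using $\ClosedC{-} = \UnitC$ under $\abs$, the pullback becomes $\ProdC{\UnitC}{\OpenC{\A}}[\UnitC] = \A$. With this, the constructors of $\PListII{c_1}{c_2}[\F{\X}]$ become, in the phase, $\ListCNil : \LolliV{\TopC}{-}$ and $\ListCCons{\_}{\_} : \LolliV{\TensorC{\F{\X}}{\PListII{c_2+c_1}{c_2}[\F{\X}]}}{\PListII{c_1}{c_2}[\F{\X}]}$.

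Next we define mutually inverse homomorphisms between the indexed family $c_1 \mapsto \PListII{c_1}{c_2}[\F{\X}]$ and the constant family $c_1 \mapsto \ListC[\F{\X}]$, by the respective induction principles.
\begin{itemize}
  \item The forward map sends $\ListCNil$ to $\ListCNil$. It sends $\ListCCons{a}{l}$ to $\ListCCons{\CreditSpend{c_1}[a]}{\cdot}$, where the recursive call is at index $c_2 + c_1$. Alternatively we transport along the equation $\Credit{c_1}{\F{\X}} = \F{\X}$, which avoids introducing cost.
  \item The backward map is defined by induction on $\ListC$, generalized over $c_1$, so that the recursive call lands at index $c_2+c_1$.
\end{itemize}
Both round-trips are the identity by induction. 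A cleaner route is to observe that both types are initial algebras for the same (index-constant, in the phase) signature. We then conclude by univalence that the types are equal, first for each $c_1$ and in particular for the given one.

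The main obstacle is that the index varies across the recursion, so $\PListII{c_1}{c_2}[\A]$ is not of the form $\ListC[-]$ and \cref{cor:clisti} cannot simply be reused. The induction must be stated uniformly over all $c_1$, matching an index-varying initial algebra against a constant one. A secondary subtlety is using the equivalence $\Credit{c_1}{\F{\X}} = \F{\X}$ rather than $\CreditSpend{c_1}$, so that the maps are equal, and not merely related up to $\charge{c_1}$, in the phase where costs are still present. This matters because only $\AbsV$ is assumed, not $\BehV$.
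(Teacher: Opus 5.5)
Your proposal is correct and is essentially the argument the paper leaves implicit. In the abstract phase the credit operator collapses ($\Credit{c}{\A} = \OpenC{\A} = \A$, via \cref{lem:credit-modalities}), so the constructors of $\PListII{c_1}{c_2}[\F{\X}]$ become those of $\ListC[\F{\X}]$; recursion generalized over the shifting index $c_1$, which you rightly single out as the only real subtlety, identifies the two; and $\F{}$, as a left adjoint, carries $\ListV{\X}$ to $\ListC[\F{\X}]$.

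One small correction to your closing remark: $\CreditSpend{c_1}$ is built with identity abstract component. Under $\AbsV$ it therefore already \emph{is} the canonical equivalence $\Credit{c_1}{\F{\X}} = \F{\X}$ and introduces no cost there. The map you must avoid is $\CreditSave{c_1}$ in the backward direction, which becomes $\charge{c_1}$ in the abstract phase and would spoil the round trip.
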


Using this variety of credit-carrying list, it is possible to implement pre-paid versions of quadratic-cost algorithms, such as insertion sort.
We return to this line of development in \cref{sec:giralf}.

\subsection{Credit-Carrying Trees}

\NewDocumentCommand{\RBTreeC}{m m m}{\CTY{RBTree}~{#1}~{#2}~{#3}}

\begin{example}
  \citet{tarjan>1985} established an amortized analysis of red-black trees~\citep{guibas-sedgewick>1978} in which each black node contains credits computed based on the color of its child nodes.
  We may represent this construction in the banker's view as follows, where $0 \le c(x_1, x_2) \le 2$ is the number of credits to be stored at a black node with child nodes of colors $x_1$ and $x_2$.
\begin{samepage}
  \par\iblock{
    \mhang{\KW{data}~\RBTreeC{(\A : \CC)}{(x : \VTY{Color})}{(h : \NatV)} : \CC~\KW{where}}{
      \mrow{\LABEL{empty} : \LolliV{\TopC}{\RBTreeC{\A}{\LABEL{black}}{0}}}
      \mrow{\LABEL{red} : \LolliV{\TensorC{\RBTreeC{\A}{\LABEL{black}}{h}}{\TensorC{\A}{\RBTreeC{\A}{\LABEL{black}}{h}}}}{\RBTreeC{\A}{\LABEL{red}}{h}}}
      \mrow{\LABEL{black} : \LolliV{\Credit{c(x_1, x_2)}[\TensorC{\RBTreeC{\A}{x_1}{h}}{\TensorC{\A}{\RBTreeC{\A}{x_2}{h}}}]}{\RBTreeC{\A}{\LABEL{black}}{(1 + h)}}}
    }
  }
\end{samepage}\noindent
  This construction can be adapted to support the appropriate abstraction using the techniques used on batched queues, following \citet{grodin-li-harper>2026}.
\end{example}

\NewDocumentCommand{\SplayTreeC}{m m}{\CTY{SplayTree}~{#1}~{#2}}

\begin{example}
  The amortized splay tree~\citep{sleator-tarjan>1985} data structure stores $\ceil{\lg n}$ credits at each node whose subtree is of size $n$.
  Such credits can be represented in the following inductive type, where $c(n_1, n_2) \isdef \ceil{\lg(n_1 + 1 + n_2)}$.
\begin{samepage}
  \par\iblock{
    \mhang{\KW{data}~\SplayTreeC{(\A : \CC)}{(n : \NatV)} : \CC~\KW{where}}{
      \mrow{\LABEL{leaf} : \LolliV{\TopC}{\SplayTreeC{\A}{0}}}
      \mrow{\LABEL{node} : \LolliV{\Credit{c(n_1, n_2)}[\TensorC{\SplayTreeC{\A}{n_1}}{\TensorC{\A}{\SplayTreeC{\A}{n_2}}}]}{\SplayTreeC{\A}{(n_1 + 1 + n_2)}}}
    }
  }
\end{samepage}\noindent
  The credits associated to splay trees are notoriously difficult to annotate (and analyze automatically) in AARA~\citep{hofmann-leutgeb-obwaller-moser-zuleger>2022}.
  However, in the dependent setting of the present work, it is straightforward to include credits just as originally described by \citeauthor{sleator-tarjan>1985}.
\end{example}

\section{Giralf: A Graded, Inferential, Resource-Aware Logical Framework}\label{sec:giralf}

In order to streamline the development of programs involving credits and debits, we now define a graded substructural type theory called \emph{Giralf} overlaid upon the existing dependent type theory of Calf; drawing inspiration from AARA~\citep{hofmann-jost>2003,hoffmann-jost>2022}, the credit and debit type operators internalize the graded judgmental structure of Giralf.
The types of Giralf are taken from Calf, and thus Giralf programs semantically constitute a well-behaved subclass of Calf programs, demonstrating that AARA-like programs exist as a sub-language of the computations of Calf.
Furthermore, by adapting the techniques of AARA, programs written in the Giralf language are amenable to a form of automated cost inference; thus, as a semantic sub-language of Calf, Giralf facilitates the integration of manual and automatic cost verification.

\subsection{A Graded Syntax (\AgdaFormalized{Calf.Giralf})}\label{sec:giralf:syntax}

The syntax of Giralf is similar to that of \citet{das-balzer-hoffmann-pfenning-santurkar>2021}, but with two major differences:
\begin{enumerate}
  \item \emph{Dependency.}
    Giralf admits dependency of substructural resources on structural values, like in the dependent linear/non-linear type theory of \citet{krishnaswami-pradic-benton>2015}.
    For example, credit-carrying types refer to structural values of type $\Cost$.
    This is important for handling indexed inductive types, such as the quadratic-credit lists of \cref{sec:banker:plisti}.

  \item \emph{Recursion.}
        It is typical for AARA-like languages to admit unbounded recursion.
        In Giralf, we only consider the structural recursion principles induced by inductive types.%
        \footnote{It is possible to treat unbounded recursion in Calf as an effect \citep{niu-harper>2022}; however, it is not clear that this effect is compatible with the fracture and gluing principle central to this work.}

\end{enumerate}

\NewDocumentCommand{\SplitZ}{m}{\strut}
\NewDocumentCommand{\SplitII}{m m m}{{#1} \ge {#2} + {#3}}

\NewDocumentCommand{\CtxEmpty}{}{\cdot}

\NewDocumentCommand{\GammaV}{}{\VAL*{\Gamma}}
\NewDocumentCommand{\DeltaC}{o}{\CMP*{\Delta\IfValueT{#1}{_{#1}}}}
\NewDocumentCommand{\Giralf}{o m m m m}{\IfValueT{#1}{{#1} \mid} {#2} \vdash^{#3} {#4} : {#5}}

We now define Giralf, whose types are inherited from Calf (including those of \cref{sec:banker}).
The typing judgment $\Giralf[\GammaV]{\DeltaC}{q}{e}{\A}$ means that given a structural context $\GammaV$, a linear context $\DeltaC = \A[1], \cdots, \A[n]$ dependent on $\GammaV$, and additional credits $q : \Cost$ also dependent on $\GammaV$, the program $e$ has type $\A$.
For readability, we leave $\GammaV$ implicit, only notating the extension beyond the ambient $\GammaV$.

\NewDocumentCommand{\LetG}{m m m}{\KW{let}\, {#2} = {#1} \, \KW{in}\, {#3}}
\NewDocumentCommand{\ChargeG}{m o}{\KW{spend}\langle{#1}\rangle\IfValueT{#2}{({#2})}}

We first define the rules that apply uniformly over types.
The variable (identity) and let-binding (cut) rules are standard from graded type theory:
\begin{mathpar}
  \infer
  {\SplitZ{q}}
  {\Giralf{a : \A}{q}{a}{\A}}

  \infer
  {
    \SplitII{q}{q_1}{q_2} \\
    \Giralf{\DeltaC[1]}{q_1}{e_1}{\A} \\
    \Giralf{\DeltaC[2], a : \A}{q_2}{e_2}{\B}
  }
  {\Giralf{\DeltaC[1], \DeltaC[2]}{q}{\LetG{e_1}{a}{e_2}}{\B}}
\end{mathpar}
The rule for spending credits follows AARA~\citep{hofmann-jost>2003,hoffmann-jost>2022,das-balzer-hoffmann-pfenning-santurkar>2021}.
Justified by having sufficient credits in the context, this construct incurs cost, to be annotated on programs as a cost model.
\begin{mathpar}
  \infer
  {
    \SplitII{q}{p}{q'} \\
    \Giralf{\DeltaC}{q'}{e}{\A}
  }
  {\Giralf{\DeltaC}{q}{\ChargeG{p}[e]}{\A}}
\end{mathpar}

\subsubsection{Standard Linear Types}

\NewDocumentCommand{\InjG}{m o}{\KW{inj}_{#1}~\IfValueT{#2}{({#2})}}
\NewDocumentCommand{\ProjG}{m m}{\KW{proj}_{#1}~{#2}}
\NewDocumentCommand{\CaseG}{m m m m m}{\KW{case}({#1};{#2}.{#3};{#4}.{#5})}

The rules for standard linear types are as usual.
Of note, Giralf includes negative types, such as lazy products (like $\lambda$-amor \citep{rajani-gaboardi-garg-hoffmann>2021} but unlike AARA):
\begin{mathpar}
  \infer
  {
    \Giralf{\DeltaC}{q}{e_1}{\A[1]} \\
    \Giralf{\DeltaC}{q}{e_2}{\A[2]}
  }
  {\Giralf{\DeltaC}{q}{(e_1, e_2)}{\ProdC{\A[1]}{\A[2]}}}

  \infer
  {\Giralf{\DeltaC}{q}{e}{\ProdC{\A[1]}{\A[2]}}}
  {\Giralf{\DeltaC}{q}{\ProjG{i}{e}}{\A[i]}}
\end{mathpar}
We now turn our attention to types that interact with the credit context.

\subsubsection{Credit and Debit}\label{sec:giralf:syntax:credit}

\NewDocumentCommand{\StoreG}{o m}{\KW{store}\IfValueT{#1}{\langle {#1} \rangle} ({#2})}
\NewDocumentCommand{\ReleaseG}{m m m}{\KW{let}\, \StoreG{#2} = {#1} \, \KW{in}\, {#3}}
\NewDocumentCommand{\GetG}{m m}{\KW{get}\langle {#1} \rangle ({#2})}
\NewDocumentCommand{\PayG}{m}{\KW{pay}({#1})}

The rules for the credit operator are analogous to those given by \citet{das-balzer-hoffmann-pfenning-santurkar>2021} and \citet{rajani-gaboardi-garg-hoffmann>2021}, internalizing credits from the credit context as a type former.
\begin{mathpar}
  \infer
  {
    \SplitII{q}{p}{q'} \\
    \Giralf{\DeltaC}{q'}{e}{\A}
  }
  {\Giralf{\DeltaC}{q}{\StoreG[p]{e}}{\Credit{p}{\A}}}

  \infer
  {
    \SplitII{q}{q_1}{q_2} \\
    \Giralf{\DeltaC[1]}{q_1}{e_1}{\Credit{p}{\A}} \\
    \Giralf{\DeltaC[2], a : \A}{p + q_2}{e_2}{\B}
  }
  {\Giralf{\DeltaC[1], \DeltaC[2]}{q}{\ReleaseG{e_1}{a}{e_2}}{\B}}
\end{mathpar}
The rules for the debit operator are analogous to those given by \citet{das-balzer-hoffmann-pfenning-santurkar>2021}.
\begin{mathpar}
  \infer
  {
    \SplitII{q'}{p}{q} \\
    \Giralf{\DeltaC}{q'}{e}{\A}
  }
  {\Giralf{\DeltaC}{q}{\GetG{p}{e}}{\Debit{p}{\A}}}

  \infer
  {
    \SplitII{q}{p}{q'} \\
    \Giralf{\DeltaC}{q'}{e}{\Debit{p}{\A}}
  }
  {\Giralf{\DeltaC}{q}{\PayG{e}}{\A}}
\end{mathpar}

\subsubsection{Linear-Credit Lists}\label{sec:giralf:syntax:plist}

\NewDocumentCommand{\NilG}{}{\ListVNil}
\NewDocumentCommand{\ConsG}{m m}{\ListVCons{#1}{#2}}
\NewDocumentCommand{\FoldrIG}{m m m m m}{\KW{foldr}[{#1}; {#2}.{#3}.{#4}]({#5})}
\NewDocumentCommand{\ParaIG}{m m m m m}{\KW{para}[{#1}; {#2}.{#3}.{#4}]{#5}}
\NewDocumentCommand{\FoldrIIG}{m m m m m m m}{\KW{foldr}\{{#1}.{#2}\}[{#1}.{#3}; {#1}.{#4}.{#5}.{#6}]({#7})}

The AARA-like lists with linear credit described in \cref{sec:banker:plisti} can be smoothly incorporated into Giralf, providing an elimination form in terms of structural recursion:
\begin{mathpar}
  \infer
  {\SplitZ{q}}
  {\Giralf{\CtxEmpty}{q}{\NilG}{\PListI{p}{\A}}}

  \infer
  {
    \SplitII{q}{p + q_1}{q_2} \\
    \Giralf{\DeltaC[1]}{q_1}{a}{\A} \\
    \Giralf{\DeltaC[2]}{q_2}{l}{\PListI{p}{\A}}
  }
  {\Giralf{\DeltaC[1], \DeltaC[2]}{q}{\ConsG{a}{l}}{\PListI{p}{\A}}}

  \infer
  {
    \Giralf{\CtxEmpty}{0}{e_0}{\B} \\
    \Giralf{a : \A, b : \B}{p}{e_1}{\B} \\
    \Giralf{\DeltaC}{q}{e}{\PListI{p}{\A}}
  }
  {\Giralf{\DeltaC}{q}{\FoldrIG{e_0}{a}{b}{e_1}{e}}{\B}}
\end{mathpar}

Using this recursion principle, it is possible to implement a variety of linear-time algorithms whose cost is pre-paid for by the available credits.

\begin{lemma}\label{lem:paramorphism}
  The following credit-aware \emph{paramorphism}~\citep{meertens>1992} is derivable:
  \begin{mathpar}
    \infer
      {
        \SplitII{q}{q_1}{q_2} \\
        \Giralf{\CtxEmpty}{q_2}{e_0}{\B} \\
        \Giralf{a : \A, b : \ProdC{\Debit{q_2}{\B}}{\PListI{p}{\A}}}{p + q_2}{e_1}{\B} \\
        \Giralf{\DeltaC}{q_1}{e}{\PListI{p}{\A}}
      }
      {\Giralf{\DeltaC}{q}{\ParaIG{e_0}{a}{b}{e_1}{(e)}}{\B}}
  \end{mathpar}
  This construction implicitly threads through credits available at the top level using the debit operator; moreover, using lazy products, it offers the choice between the recursive result and the current sublist.
\end{lemma}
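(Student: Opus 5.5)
We derive the paramorphism from the primitive $\KW{foldr}$ rule by instantiating it at the product type $\B' \isdef \ProdC{\Debit{q_2}{\B}}{\PListI{p}{\A}}$, so that each step of the fold carries both a suspended recursive result and the current sublist. The main idea is that the base case, which the fold rule requires to be typed with $0$ credits, cannot use the $q_2$ credits that $e_0$ needs. We therefore store the $q_2$ credits in the \emph{type} through the debit operator: the fold returns a $\Debit{q_2}{\B}$, which is paid only once, at the top level.

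The steps, in order, are as follows.

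First, the base case $\Giralf{\CtxEmpty}{0}{(\GetG{q_2}{e_0}, \NilG)}{\B'}$. The $\KW{get}$ rule with $q' = q_2 \ge q_2 + 0$ turns the hypothesis on $e_0$ into a term of $\Debit{q_2}{\B}$ at $0$ credits. The $\NilG$ rule types the second component at $0$ credits. The lazy-product rule then combines the two components, since they share the same context and credit grade.

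Second, the step case $\Giralf{a : \A, b : \B'}{p}{e_1'}{\B'}$. Define
\[ e_1' \isdef (\GetG{q_2}{e_1}, \ConsG{\StoreG[0]{a}}{\ProjG{2}{b}}). \]
The first component uses the $\KW{get}$ rule with the hypothesis on $e_1$ at grade $p + q_2 \ge q_2 + p$, giving grade $p$. The second component needs care. The cons rule consumes $p$ credits for the new head and uses $\ProjG{2}{b}$ at $0$ credits. This rebuilds the sublist $\ConsG{a}{l}$ from the current element $a$ (already released by the fold) together with fresh $p$ credits, and the step case has exactly those $p$ credits available. Both components live in the same context $a, b$ at grade $p$, which is precisely where lazy products are needed: the linear resources $a$ and $b$ may be shared across the two components because only one will ever be projected.

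Third, the top level. The primitive $\KW{foldr}$ rule gives $\Giralf{\DeltaC}{q_1}{\FoldrIG{\ldots}{a}{b}{e_1'}{e}}{\B'}$. Projecting with $\ProjG{1}{-}$ and applying $\KW{pay}$ (with $q \ge q_2 + q_1$) yields the conclusion at grade $q$.

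The main obstacle is the second component of the step case. In the rule as stated, the hypothesis $e_1$ receives the current element $a : \A$ without its credits, since the fold has already spent them. Rebuilding the sublist therefore requires paying $p$ credits a second time, while the step case only has $p$ available, and the first component also needs them. Lazy products resolve this contention, because both components may use the same $p$ credits.

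A second subtlety is that the $\KW{get}$ rule's hypothesis is stated with the extra $q_2$ credits. I would double-check that the intended semantics (\cref{lem:credit-debit-adjoint}) justifies deferring the payment across the entire fold. If it did not, I would instead generalise $\B'$ to carry the debit inside the list recursion. Finally, soundness is checked against the Calf interpretation only through the derived rules, so no new semantic argument is needed beyond the rules already given.
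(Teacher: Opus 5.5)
Your derivation is correct and is essentially the intended one. The paper gives no written proof, but the type of $b$ in the rule and the analogous $\KW{foldr}$ translation in \cref{sec:giralf:inference} (wrap both cases in $\KW{get}$, then $\KW{pay}$ once at the top) point to exactly this fold at carrier $\ProdC{\Debit{q_2}{\B}}{\PListI{p}{\A}}$, and all the grade side conditions check as you say.

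Two cosmetic points. First, the head of the rebuilt sublist should simply be $a$: the cons rule already charges the $p$ credits, and $\KW{store}\langle 0\rangle(a)$ has type $\Credit{0}{\A}$, which equals $\A$ only via \cref{lem:credit-monoidal}. Second, your hedge about deferring payment is unnecessary, because the construction uses only primitive Giralf rules, whose soundness is already supplied by the Calf semantics.
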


\NewDocumentCommand{\BoolVIf}{m m m}{\KW{if}~{#1}~\KW{then}~{#2}~\KW{else}~{#3}}

\begin{example}[\AgdaFormalized{Examples.Giralf.InsertionSort}]\label{ex:insert}
  The main subroutine of insertion sort is definable in Giralf as a term
  \[ \Giralf[p : \Cost, x : \NatV]{l : \PListI{1 + p}[\F{\NatV}]}{p}{\DEFN{insert}~x~l}{\PListI{p}[\F{\NatV}]} \]
  using a paramorphism $\DEFN{insert}~x~l \isdef \ParaIG{\ConsG{\ret{x}}{\NilG}}{(\ret{y})}{b}{e_1}{}$ where
  \[ e_1 \isdef \ChargeG{1}[\BoolVIf{x \le y}{(\ConsG{\ret{x}}{\ConsG{\ret{y}}{\ProjG{2}{b}}})}{(\ConsG{\ret{y}}{\PayG{\ProjG{1}{b}}})}]. \]
  In the base case, a singleton list containing only $x$ is created.
  In the inductive case, if the new element $x$ is smaller than the head element $y$, then $x$ and $y$ are placed on the front of the original list, recovered via $\ProjG{2}{b}$; if $x$ is larger than $y$, then $y$ is placed on the front of the recursive call, passing down the credits that will eventually be attached to the new list node.
\end{example}

\subsubsection{Quadratic-Credit Lists}\label{sec:giralf:syntax:plistii}
The introduction rules for quadratic-credit lists are standard, following \citet{hoffmann-hofmann>2010,hoffmann-hofmann>2010-aplas}:
\begin{mathpar}
  \infer
  {\SplitZ{q}}
  {\Giralf{\CtxEmpty}{q}{\NilG}{\PListII{p_1}{p_2}{\A}}}

  \infer
  {
    \SplitII{q}{p_1 + q_1}{q_2} \\
    \Giralf{\DeltaC[1]}{q_1}{a}{\A} \\
    \Giralf{\DeltaC[2]}{q_2}{l}{\PListII{p_2 + p_1}{p_2}{\A}}
  }
  {\Giralf{\DeltaC}{q}{\ConsG{a}{l}}{\PListII{p_1}{p_2}{\A}}}
\end{mathpar}
The elimination form, however, diverges from existing presentations due to the requirement here to use only bounded recursion.
In particular, due to the status of quadratic-credit lists as an inductive family, their recursion principle makes nontrivial use of the structural context and dependency:
\begin{mathpar}
  \infer
  {
    \Giralf[r : \Cost]{\CtxEmpty}{0}{e_0}{\B(r)} \\
    \Giralf[r : \Cost]{a : \A, b : \B(p_2 + r)}{r}{e_1}{\B(r)} \\
    \Giralf{\DeltaC}{q}{e}{\PListII{p_1}{p_2}{\A}}
  }
  {\Giralf{\DeltaC}{q}{\FoldrIIG{r}{\B(r)}{e_0}{a}{b}{e_1}{e}}{\B(p_1)}}
\end{mathpar}
This rule eliminates into a \emph{family} of types $\B : \ExpV{\Cost}{\CC}$ indexed by the linear credit coefficient; the family of outputs is required precisely because $\PListII{-}{p_2}{\A} : \ExpV{\Cost}{\CC}$ is not an inductively-defined type, but rather an inductively-defined \emph{family}.
At the top level, the result type is $\B(p_1)$, where $p_1$ is the linear coefficient of the list being eliminated. Inductively, both the base case and the inductive case must construct an element of type $\B(r)$, where $r : \Cost$ is a freshly bound variable, required because the amount of linear credit changes inductively.
This rule is in place of an ad-hoc rule for resource-polymorphic recursion \citep{hoffmann-hofmann>2010-aplas}.

\begin{example}[\AgdaFormalized{Examples.Giralf.InsertionSort}]\label{ex:isort}
  The insertion sort algorithm, which in the worst case costs $\binom{n}{2}$ on a list of length $n$ when counting comparisons, can be implemented as a term
  \[ \Giralf{l : \PListII{0}{1}[\F{\NatV}]}{0}{\DEFN{isort}~l}{\PListI{0}[\F{\NatV}]} \]
  by iterating the $\DEFN{insert}$ algorithm of \cref{ex:insert}:
  \[ \DEFN{isort} \isdef \FoldrIIG{r}{\PListI{r}[\F{\NatV}]}{\NilG}{(\ret{x})}{l}{~\DEFN{insert}~x~l}{}. \]
  The family $\B(r) \isdef \PListI{r}[\F{\NatV}]$ relies on the specification of $\DEFN{insert}$, which is parametric in the linear credits ($p$ in \cref{ex:insert}).
  The fact that this program only takes a triangular credit annotation of $1$ on the input list guarantees implicitly that insertion sort has the desired cost upper bound.
\end{example}

\subsection{A Resource-Aware Semantics}\label{sec:giralf:semantics}

Semantically, Giralf can be defined as a \emph{sub-language} of Calf.
Every Giralf type is already present in Calf, and
a program typing judgment
$\Giralf{\DeltaC}{q}{e}{\A}$
in Giralf is interpreted as a Calf program
$e : \LolliV*{\Credit{q}[\CMP*{\otimes}\DeltaC]}{\A}$,
moving credits via \cref{lem:credit-functorial,lem:credit-monoidal,lem:credit-debit-adjoint}.
As an invariant, this interpretation has zero cost in the abstract phase.

\begin{example}[{\AgdaFormalized{Calf.Giralf}}]\label{sem:spend}
  The term $\StoreG[p]{e}$ is interpreted as the composite
\[\begin{tikzcd}
  {\Credit{q}[\CMP*{\otimes}\DeltaC]} && {\Credit{p + q'}[\CMP*{\otimes}\DeltaC]} && {\Credit{p}[\Credit{q'}[\CMP*{\otimes}\DeltaC]]} & {\Credit{p}{\A}}
  \arrow[lmmv, "{\text{\cref{lem:credit-functorial}}}", from=1-1, to=1-3]
  \arrow[mmv, "{\text{\cref{lem:credit-monoidal}}}", r,-,double equal sign distance,double, from=1-3, to=1-5]
  \arrow[lmmv, "{\Credit{p}[e]}"{inner sep=.8ex}, from=1-5, to=1-6]
\end{tikzcd}\]
  credits as output.
  Then, the term $\ChargeG{p}[e]$ is interpreted as post-composition of the above chain with $\CreditSpend{p} : \LolliV{\Credit{p}{\A}}{\A}$.
  The use of $\CreditSpend{p}$, \emph{not} $\charge{p}$, highlights a fundamental difference between Calf and of Giralf: although the cost effect $\charge{p} : \LolliV{\A}{\A}$ may be performed arbitrarily in Calf, cost may only be incurred in Giralf via the operation $\CreditSpend{p} : \LolliV{\Credit{p}{\A}}{\A}$ when the cost has already been paid for upfront.
\end{example}

\begin{remark}\label{rem:prepay}
  Given a Giralf program interpreted as $e : \LolliV*{\Credit{q}[\CMP*{\otimes}\DeltaC]}{\A}$,
  pre-paying for the $q$ credits via the Calf program
  $e \circ \CreditSave{q} : \LolliV*{\CMP*{\otimes}\DeltaC}{\A}$
  reframes the credit context $q$ as the amortized cost publicized in the abstract phase rather than as credits attached to the input.
\end{remark}

The Giralf language presented here is directly inspired by AARA~\citep{hoffmann-hofmann>2010} and, when considering terminating functions, generalizes it.
In AARA, adapting to the notation of this work, it is common to consider only positive%
\footnote{Due to the call-by-value nature of AARA, the function types of AARA correspond to the type $\F[\LolliV{\A}{\B}]$.}
types, defined by a grammar such as
\[ \A,\B,\C \coloncolonequals \F{\X} \mid \TopC \mid \TensorC{\A}{\B} \mid \SumC{\A}{\B} \mid \Credit{p}{\A} \mid \PListI{p}{\A} \mid \PListII{p_1}{p_2}{\A}. \]
Note that this is an inductively-defined subset of the computation types available in Calf (and thus Giralf).
In the semantics of AARA~\citep{hoffmann-jost>2022}, each of these types is assigned a set of values $\AARAValues{\A} : \VV$, and then a potential function $\Phi_{\A}$ is defined by induction on types in \cref{fig:aara-semantics}.
\begin{figure}
  \begin{align*}
    \AARAValues{\A} &: \VV &
    \Phi_{\A} &: \ExpV{\AARAValues{\A}}{\Cost} \\
    \AARAValues{\F{\X}} &\isdef \X &
    \Phi_{\F{\X}}(x) &\isdef 0 \\
    \AARAValues{\TopC} &\isdef \UnitV &
    \Phi_{\TopC}() &\isdef 0 \\
    \AARAValues{\TensorC{\A}{\B}} &\isdef \ProdV{\AARAValues{\A}}{\AARAValues{\B}} &
    \Phi_{\TensorC{\A}{\B}}(a, b) &\isdef \Phi_{\A}(a) + \Phi_{\B}(b) \\
    \AARAValues{\SumC{\A[1]}{\A[2]}} &\isdef \SumV{\AARAValues{\A[1]}}{\AARAValues{\A[2]}} &
    \Phi_{\SumC{\A[1]}{\A[2]}}(\InjG{i}{a_i}) &\isdef \Phi_{\A[i]}(a_i) \\
    \AARAValues{\Credit{p}{\A}} &\isdef \AARAValues{\A} &
    \Phi_{\Credit{p}{\A}}(a) &\isdef p + \Phi_{\A}(a) \\
    \AARAValues{\PListI{p}{\A}} &\isdef \ListV{\AARAValues{\A}} &
    \Phi_{\PListI{p}{\A}}(l) &\isdef \ListVLength{l} \cdot p + \sum_{a \in l} \Phi_{\A}(a) \\
    \AARAValues{\PListII{p_1}{p_2}{\A}} &\isdef \ListV{\AARAValues{\A}} &
    \Phi_{\PListII{p_1}{p_2}{\A}}(l) &\isdef \ListVLength{l} \cdot p_1 + \binom{\ListVLength{l}}{2} \cdot p_2 + \sum_{a \in l} \Phi_{\A}(a)
  \end{align*}
  \caption{The potential-based semantics of AARA types~\citep{hoffmann-hofmann>2010}.}\label{fig:aara-semantics}
\end{figure}

\begin{remark}
  In such a semantics, it is not obvious how to incorporate negative types.
  For example, what potential (of type $\Cost$) would a value of type $\ProdC{\Credit{1}{\TopC}}{\Credit{2}{\TopC}}$ have---or worse, $\DPowerC{n : \NatV}{\Credit{n}{\TopC}}$?
\end{remark}

We may build such a potential function $\Phi_{\A}$ into a type, $\PotentialC{\AARAValues{\A}}{\Phi_{\A}}$, using \cref{def:PotentialC}.
In fact, the potential function $\Phi_{\A}$ is already contained within the Calf type $\A$.

\begin{theorem}\label{thm:calf-aara-types}
  Let $\A$ be a type in the AARA grammar given above.
  Viewed as a Calf type, $\A$ is equivalent to the type $\PotentialC{\AARAValues{\A}}{\Phi_{\A}}$.
\end{theorem}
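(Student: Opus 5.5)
We argue by induction on the AARA grammar, using \cref{lem:PotentialC} as the main tool. By that lemma it suffices to show that each $\A$ is equivalent to $\CopowerC{(x : \AARAValues{\A})}{\Credit{\Phi_{\A}(x)}{\TopC}}$. Call types of this shape \emph{potential types}. The plan is to show that potential types are closed under each AARA type former, with the expected value set and potential function.

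The base cases are direct. For $\F{\X}$, note that $\F{\X} = \CopowerC{\X}{\TopC}$ and $\Credit{0}{\TopC} = \TopC$ by \cref{lem:credit-monoidal}. For $\TopC$, we have $\TopC = \CopowerC{\UnitV}{\TopC}$.

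The structural cases reduce to algebraic identities, each checked after applying the induction hypothesis.
\begin{enumerate}
  \item \emph{Tensor.} Copowers distribute over $\otimes$, giving $\TensorC{(\CopowerC{\X}{\A'})}{(\CopowerC{\Y}{\B'})} = \CopowerC{(\X\times\Y)}{(\TensorC{\A'}{\B'})}$. It then remains to show that $\TensorC{\Credit{c_1}{\TopC}}{\Credit{c_2}{\TopC}} = \Credit{c_1+c_2}{\TopC}$, i.e.\ that the credit operator is strong monoidal with respect to $\otimes$.
  \item \emph{Sum.} Copowers over a coproduct split as $\CopowerC{(\X+\Y)}{-} = \SumC{\CopowerC{\X}{-}}{\CopowerC{\Y}{-}}$. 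We then use the dependent version of this identity, indexed by the summand.
  \item \emph{Credit.} The credit operator commutes with copowers, $\Credit{p}{(\CopowerC{(x:\X)}{\B(x)})} = \CopowerC{(x:\X)}{\Credit{p}{\B(x)}}$. Combining this with \cref{lem:credit-monoidal} gives the potential $p+\Phi_{\A}(x)$.
\end{enumerate}
Each of these commutation facts should follow from \cref{lem:AbstractionC-lolli} together with the fracture and gluing theorem (\cref{thm:cfracglue}). By \cref{lem:credit-modalities} the concrete and abstract parts agree on both sides, so the only thing to compare is the abstraction homomorphism. On both sides it is $\charge{c}$ applied after the existing one, and the two agree because charges commute with the copower and tensor structure. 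In the Kripke reading, this is just the fact that $\charge{c_1}$ and $\charge{c_2}$ compose to $\charge{c_1+c_2}$.

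The list cases $\PListI{p}{\A} = \ListC[\Credit{p}{\A}]$ and $\PListII{p_1}{p_2}{\A}$ need induction on lists. We show that an inductive type whose constructors are built from potential types is itself a potential type. Concretely, we construct maps in both directions by structural recursion and check that they are mutually inverse by list induction. The data $(l, \text{credits})$ are unfolded one constructor at a time, using the tensor and credit cases above.

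For $\PListII{p_1}{p_2}{\A}$ we generalize the claim over $p_1$, since the indices vary. The invariant to prove is
\[
  \Phi(\ListVCons{a}{l}) = p_1 + \Phi_{\A}(a) + \Phi_{p_2+p_1,\,p_2}(l),
\]
which holds because $(n+1)p_1 + \binom{n+1}{2}p_2 = p_1 + n(p_1+p_2) + \binom{n}{2}p_2$.

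I expect the main obstacle to be this inductive-family case, together with making the commutation of credits with $\otimes$ precise. The tensor is defined via a quotient or coend on cost algebras, so identifying the glued abstraction homomorphism of a tensor requires knowing that $\otimes$ interacts well with the concrete and abstract modalities. I would first attempt to prove $\ClosedC$ and $\OpenC$ preserve $\otimes$ and copowers, and then conclude via \cref{cor:cfracglue-map}.
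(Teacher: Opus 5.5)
Your proposal is essentially the paper's own argument. It matches the banker's-view route of the paper's proof sketch, which likewise proceeds by induction on the grammar using \cref{lem:PotentialC}, \cref{lem:credit-monoidal}, and the commutation of credits with positive type formers; the paper asserts that last step without further detail, so the obstacle you flag is exactly what it leaves implicit. The paper also offers an alternative physicist's-view route: since $\PotentialC{\AARAValues{\A}}{\Phi_{\A}}$ is glued from $\F{\AARAValues{\A}}$, the cases follow because $\F{}$, as a left adjoint, preserves positive types.
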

\begin{proof}[Proof Sketch]
  By induction on the AARA type grammar.
  From the physicist's view, as the type $\PotentialC{\AARAValues{\A}}{\Phi_{\A}}$ is given in terms of $\F{\AARAValues{\A}}$ (by \cref{def:PotentialC}), the cases follow by the fact that $\F{}$ (as a left adjoint) preserves positive types.
  Or, from the banker's view, the cases follow from \cref{lem:PotentialC,lem:credit-monoidal} and credits commuting with positive types.
\end{proof}
This shows not only that the potential functions $\Phi_{\A}$ of AARA are present in Calf types, but also that the connectives used here generalize AARA in a compatible way.
Then, framed in terms of $\Phi_{\A}$, the soundness theorem for AARA follows immediately from the Kripke semantics of Calf.
\begin{theorem}[Soundness of AARA]
  Let $\Giralf{\DeltaC}{q}{e}{\A}$ be a typing judgment in AARA, and suppose the program $e$ incurs $p : \Cost$ cost upon evaluation in some environment $\delta : \AARAValues{\DeltaC}$.
  Then, the specification $q$ is a sound upper bound for $p$, up to the following conservation condition:
  \[ p + \Phi_{\A}(e(\delta)) \le \Phi_{\CMP*{\otimes}\DeltaC}(\delta) + q. \]
\end{theorem}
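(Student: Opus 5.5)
The plan is to read the soundness statement off the semantic interpretation of Giralf (and hence AARA) programs in Calf, and then unfold the Kripke semantics of the resulting map. First, the typing derivation $\Giralf{\DeltaC}{q}{e}{\A}$ is interpreted, as in \cref{sec:giralf:semantics}, as a Calf lax homomorphism $e : \LolliV*{\Credit{q}[\CMP*{\otimes}\DeltaC]}{\A}$. This is by induction on the derivation, one case per rule, using \cref{lem:credit-functorial,lem:credit-monoidal,lem:credit-debit-adjoint} to move credits; \cref{sem:spend} is the template for the spend and store cases. Since every AARA type belongs to the positive grammar, \cref{thm:calf-aara-types} identifies $\A$ with $\PotentialC{\AARAValues{\A}}{\Phi_{\A}}$. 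Likewise $\CMP*{\otimes}\DeltaC$ is identified with $\PotentialC{\AARAValues{\DeltaC}}{\Phi_{\CMP*{\otimes}\DeltaC}}$, because $\otimes$ is in the grammar. Combining this with \cref{lem:PotentialC,lem:credit-monoidal}, the type $\Credit{q}[\CMP*{\otimes}\DeltaC]$ becomes $\PotentialC{\AARAValues{\DeltaC}}{\vlam{\delta}{q + \Phi_{\CMP*{\otimes}\DeltaC}(\delta)}}$.

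Second, I would characterize lax homomorphisms between potential types, as the lax analogue of \cref{lem:hom-pot}. In the Kripke semantics a map into $\SealC{\B}$ is a lax square. Here the concrete component is $f_\top(\ret{\delta}) = \charge{p}[\ret{e(\delta)}]$, where $p$ is the true cost of evaluation, and both vertical maps are $\varphi(\Phi)$. By the invariant in \cref{sec:giralf:semantics}, the abstract component has zero cost. Comparing the two paths $\varphi(\Phi_{\A}) \circ f_\top$ and $f_{\ABS} \circ \varphi(\Phi_{\Credit{q}[\CMP*{\otimes}\DeltaC]})$ at $\ret{\delta}$ gives two things. On behavior, both return $e(\delta)$. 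On cost, the inequality reads $p + \Phi_{\A}(e(\delta)) \le q + \Phi_{\CMP*{\otimes}\DeltaC}(\delta)$, which is the claim.

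The main obstacle is the last step: justifying that the cost preorder on $\F{\X}$ in the lax square compares exactly the charged costs, so that $\charge{c}[\ret{x}] \le \charge{c'}[\ret{x}]$ iff $c \le c'$. A secondary issue is identifying the concrete component of the interpreted term with AARA's operational cost $p$. For the first, I would work in the presheaf model and use the footnoted description of $\SealC$ as the free opfibration, so that a Kleisli map is literally a lax square in the ordering of cost algebras. For the second, I would rely on the adequacy of Calf's cost semantics for $\F{}$, arguing by the same induction on the derivation that each $\ChargeG{p}$ step contributes its $\CreditSpend{p}$, and hence exactly $p$, to the concrete cost.
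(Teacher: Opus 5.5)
Your proposal is correct and follows the paper's argument. You interpret $e$ as a lax homomorphism, identify both sides with potential types via \cref{thm:calf-aara-types}, and read the conservation inequality off the cost component of the resulting lax square in the Kripke semantics. The differences are only bookkeeping. The paper pre-pays the $q$ credits as in \cref{rem:prepay}, so they appear as $\charge{q}$ on the abstract edge, whereas you fold them into the input potential $q + \Phi_{\CMP*{\otimes}\DeltaC}$ and keep a zero-cost abstract edge; the two composites coincide. The two obstacles you flag, that the order on $\F{\X}$ reflects cost and that the concrete component is adequate for the operational cost $p$, are left implicit in the paper's proof as well.
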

\begin{proof}
  View the AARA program $e$ as a Giralf program
  $e : \LolliV*{\CMP*{\otimes}\DeltaC}{\A}$.
  By \cref{thm:calf-aara-types}, we have
  \begin{align*}
    \DeltaC = \PotentialC{\AARAValues{\CMP*{\otimes}\DeltaC}}{\Phi_{\CMP*{\otimes}\DeltaC}}  &&\text{ and }  && \A = \PotentialC{\AARAValues{\A}}{\Phi_{\A}}.
  \end{align*}
  Then, in the Kripke semantics of Calf, $e$ is interpreted as the following lax commutative square:
\[\begin{tikzcd}
  {\Interp{\F{\AARAValues{\CMP*{\otimes}\DeltaC}}}} &&& {\Interp{\F{\AARAValues{\A}}}} \\
  {\Interp{\F{\AARAValues{\CMP*{\otimes}\DeltaC}}}} &&& {\Interp{\F{\AARAValues{\A}}}}
  \arrow[mm, "{\Interp[\top]{e}}", from=1-1, to=1-4]
  \arrow[mm, ""{name=0, anchor=center, inner sep=0}, "{\varphi(\Phi_{\CMP*{\otimes}\DeltaC})}"', from=1-1, to=2-1]
  \arrow[mm, ""{name=1, anchor=center, inner sep=0}, "{\varphi(\Phi_{\A}(a))}", from=1-4, to=2-4]
  \arrow[mm, "{\Interp[\ABS]{e} \;\circ\; \charge{q}}"', from=2-1, to=2-4]
  \arrow["\ge"{description}, draw=none, from=1, to=0]
\end{tikzcd}\]
  The cost portion of this square is precisely the desired conservation condition.
\end{proof}
Thus, Calf can be viewed as a conservative extension of AARA.
When restricting attention to the AARA-like types in Giralf, the language behaves just like AARA, but Giralf supports additional types, data abstraction, and manual verification.
Conversely, when Calf programs happen to lie in the Giralf sub-language, it is possible to automatically infer cost bounds.

\subsection{An Inference Algorithm}\label{sec:giralf:inference}

\NewDocumentCommand{\Inf}{m}{\underline{#1}}

In Giralf, credits are placed within data structures to ensure the availability of a credit whenever cost is incurred.
Beyond streamlining the manual development of programs involving credits, this realization of the banker's view
enables automated \emph{cost inference} by linear programming as in AARA and RaML~\citep{hoffmann-aehlig-hofmann>2012-raml}.
Cost inference takes a Calf program and finds an analogous program in Giralf which, by construction, assumes all cost is prepaid for upfront.

\begin{definition}[Cost Inference]
  Let $e : \LolliV{\A}{\B}$ be a Calf program.
  An \emph{inference for $e$} is a Giralf term $\Giralf{a : \Inf{\A}}{q}{\Inf{e}}{\Inf{\B}}$ such that
  \begin{enumerate}
    \item assuming $\NAbsV$, it is the case that $\Inf{\A} = \A$, $\Inf{\B} = \B$, and $\Inf{e} = e$; and
    \item assuming $\AbsV$, it is the case that $\Inf{e}$ is the cost erasure of $e$.
  \end{enumerate}
\end{definition}

The assumption of $\NAbsV$ isolates only the true execution behavior of the program;
in particular, this assumption erases credit annotations.

\begin{lemma}
  Assuming $\NAbsV$, it is the case that $\GlueeC{\A[\top]}{\A[\ABS]}{\alpha} = \A[\top]$.
\end{lemma}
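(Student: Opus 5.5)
We need to show that, under $\NAbsV$, gluing collapses to its concrete component. Recall $\GlueeC{\A[\top]}{\A[\ABS]}{\alpha} \isdef \GlueC{\ClosedC{\A[\top]}}{\OpenC{\A[\ABS]}}{\ClosedC{(\etaOpC \circ \alpha)}}$, and $\GlueC{\A[\cl]}{\A[\op]}{\alpha_\cl}$ is the pullback $\ProdC{\A[\cl]}{\A[\op]}[\ClosedC{\A[\op]}]$. The plan is to compute each of the three corners of this pullback under the hypothesis $\NAbsV$, i.e.\ $\neg\AbsV$.

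\textbf{The abstract corner.} Since $\AbsV$ is empty, $\OpenV{\X} = \ExpV{\AbsV}{\X}$ is contractible. Hence $\OpenC{\A[\ABS]} = \PowerC{\AbsV}{\A[\ABS]}$ is the terminal cost algebra $\UnitC$. By \cref{lem:u-open} its carrier is $\OpenV[\U{\A[\ABS]}] = \UnitV$. Because $\U$ is conservative, this identifies the cost algebra itself with $\UnitC$.

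\textbf{The concrete corners.} Under $\NAbsV$ the copowers $\CopowerC{\AbsV}{\A}$ and $\CopowerC{\AbsV}{\UnitC}$ are initial. The pushout defining $\ClosedC{\A}$ therefore degenerates, and $\etaClC[\A]$ becomes an equivalence. Again it suffices to check this on carriers via \cref{lem:u-closed}: $\ClosedV{\X}$ is the pushout of $\X \leftarrow \ProdV{\AbsV}{\X} \to \AbsV$, which with $\AbsV = \ZeroV$ is $\SumV{\X}{\ZeroV} = \X$. So $\ClosedC{\A[\top]} = \A[\top]$. Likewise $\ClosedC{\OpenC{\A[\ABS]}} = \ClosedC{\UnitC} = \UnitC$.

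\textbf{Assembling the pullback.} The glued type becomes $\ProdC{\A[\top]}{\UnitC}[\UnitC]$, and a pullback over the terminal object with a terminal factor is just $\A[\top]$. To make this precise, I would use the first projection $\LolliV{\GlueeC{\A[\top]}{\A[\ABS]}{\alpha}}{\ClosedC{\A[\top]}}$ composed with the inverse of $\etaClC$. I would then check that its underlying function is an equivalence, since $\U$ preserves limits and reflects equivalences, exactly as in the proof of \cref{thm:cfracglue}. The remaining identity then follows by univalence.

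The main obstacle is transporting the facts along the hypothesis $\NAbsV$ inside the type theory: all of this happens under an assumption in context, so the equalities must be stated fiberwise. The cleanest route is probably to verify only that the carrier map $\U{f}$ is an equivalence, reducing everything to the value-level statements about $\OpenV$ and $\ClosedV$ with $\AbsV$ empty, which are elementary.
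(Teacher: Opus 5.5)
Your proposal is correct. Under $\NAbsV$, the corners $\OpenC{\A[\ABS]}$ and $\ClosedC{\OpenC{\A[\ABS]}}$ collapse to $\UnitC$, and $\etaClC$ becomes an equivalence because its defining pushout is taken along an isomorphism between initial copowers; hence the pullback reduces to $\ClosedC{\A[\top]} = \A[\top]$, certified on carriers via conservativity of $\U$ and univalence. The paper states this lemma without an explicit proof, but your reduction to carriers is the same technique it uses in the proof of \cref{thm:cfracglue}, so there is nothing to add.
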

\begin{samepage}
  \begin{corollary}\label{cor:nabs}
    Assuming $\NAbsV$, it is the case that:
    \begin{enumerate}
      \item $\Credit{p}{\A} = \A$ and $\ChargeG{c} = \charge{c}$;
      \item $\Debit{p}{\A} = \A$; and
      \item $\PListI{c}{\A} = \PListII{c_1}{c_2}{\A} = \ListC{\A}$.
    \end{enumerate}
  \end{corollary}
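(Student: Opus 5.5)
The plan is to derive each clause of \cref{cor:nabs} from the preceding lemma, which says that assuming $\NAbsV$ we have $\GlueeC{\A[\top]}{\A[\ABS]}{\alpha} = \A[\top]$. Under $\NAbsV$ the abstract phase is refuted, so $\OpenC$ collapses everything to $\UnitC$ and $\ClosedC$ becomes the identity. Every construction of \cref{sec:banker} is ultimately built from $\GlueeC{-}{-}{-}$, so each should reduce to its concrete component once we unfold the definitions.

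For clause (1), the credit operator is defined as $\Credit{p}{\A} \isdef \GlueeC{\A}{\A}{\charge[\A]{p}}$, so the lemma gives $\Credit{p}{\A} = \A$ immediately. The interpretation of $\ChargeG{c}$ from \cref{sem:spend} is a post-composition with $\CreditSpend{c}$, and $\CreditSpend{c}$ was built in \cref{def:spend} via \cref{cor:lolli-gluee} with $g_\top \isdef g_{\ABS} \circ \beta = \idC \circ \charge{c}$. Under the identification of $\Credit{c}{\A}$ with its concrete part, the underlying map of $\CreditSpend{c}$ is exactly its concrete component $g_\top = \charge{c}$. The surrounding weakening and monoidality maps (\cref{lem:credit-functorial,lem:credit-monoidal}) likewise reduce to identities. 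We also need the sealing monad to be trivial under $\NAbsV$: its comma object over $\OpenC{\A} = \UnitC$ and $\ClosedC{\A} = \A$ collapses to $\A$. Hence $\ChargeG{c} = \charge{c}$.

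For clause (2), we unfold $\Debit{p}{\A} = \LolliC*{\Credit{p}{\TopC}}{\A} = \LolliC{\Credit{p}{\TopC}}{\SealC{\A}}$. By clause (1) and the collapse of $\SealC$ just noted, this becomes $\LolliC{\TopC}{\A}$, which is equivalent to $\A$ because $\TopC$ is the monoidal unit of LNL.

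For clause (3), $\PListI{c}{\A} = \ListC[\Credit{c}{\A}] = \ListC{\A}$ follows directly from clause (1). For $\PListII{c_1}{c_2}{\A}$ we argue that its constructors, after rewriting $\Credit{c_1}{\A} = \A$, no longer depend on the index. The family $\lambda c_1.\,\PListII{c_1}{c_2}{\A}$ and the constant family $\lambda c_1.\,\ListC{\A}$ then carry equivalent algebra structures, so by initiality they agree. I expect this step to be the main obstacle. It needs an induction over the indexed family, and it also relies on the fact that $\ListC{\A}$ is itself the initial algebra of $\TopC \oplus (\A \otimes -)$ in cost algebras; that should hold because $\U$ is conservative and the index-free constructor signature matches.
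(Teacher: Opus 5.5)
Your proposal is correct and follows the paper's intended argument. The corollary is stated right after the lemma that $\GlueeC{\A[\top]}{\A[\ABS]}{\alpha} = \A[\top]$ under $\NAbsV$, and like the paper you reduce each clause to that lemma, supplying the right supporting facts: $\SealC$ collapses, the concrete component of the spending map is $\charge{c}$, $\TopC$ is the tensor unit, and initiality handles the quadratic-credit family.

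One small correction. Initiality of $\ListC{\A}$ is simply its defining recursion principle; it does not follow from conservativity of $\U$. The comparison with the indexed family then goes through by transposing maps out of the constant family into a power of the target family over $\Cost$.
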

\end{samepage}

\begin{example}
  The Giralf insertion sort program of \cref{ex:isort}, annotated with $\ChargeG{-}$, is an inference for the Calf insertion sort program of \citet{niu-sterling-grodin-harper>2022}, annotated with $\charge{-}$.
\end{example}

\NewDocumentCommand{\Q}{}{{?}}
\NewDocumentCommand{\VQ}{}{{\VAL*{?}}}

An \emph{inference algorithm} is licensed to alter the types in a program to include credits.
Note that inference may fail, as the costs included in an arbitrary Calf program can be arbitrarily complex~\citep{niu-sterling-grodin-harper>2022}.
Building on the linear programming-based cost inference techniques of AARA~\citep{hofmann-jost>2003}, we now describe a cost inference algorithm for a sub-language of Calf.

\subsubsection{Skeletal Translation}

For inference, we restrict attention to types in the following grammar:
\[
  \A,\B,\C \coloncolonequals \TopC \mid \SumC{\A}{\B} \mid \ProdC{\A}{\B} \mid \ListC[\F{\X}].
\]
We choose these as representative cases, but it is straightforward to accommodate similar types of AARA.
Inference for the simple positive types $\TopC$ and $\SumC{\A}{\B}$ is well-understood~\citep{hofmann-jost>2003}; inference for the lazy product type $\ProdC{\A}{\B}$ is novel; and inference for the list type $\ListC[\F{\X}]$ is involved when quadratic credits are considered due to the requirement of structural recursion.

\NewDocumentCommand{\TransTy}{m}{\Inf{#1}} %
\NewDocumentCommand{\TransExp}{m}{\TransTy{#1}}

Inference begins by inductively defining $\TransTy{\A} : \CC$, an augmentation of $\A$ with the structure of credits and debits that leaves the precise numbers yet unspecified (indicated by a $\Q$ symbol).
\begin{align*}
  \TransTy{\A}                   & : \CC                                                                    \\
  \TransTy{\TopC}                & \isdef \TopC                                                             \\
  \TransTy{\SumC{\A[1]}{\A[2]}}  & \isdef \SumC{\Credit{\Q}{\TransTy{\A[1]}}}{\Credit{\Q}{\TransTy{\A[2]}}} \\
  \TransTy{\ProdC{\A[1]}{\A[2]}} & \isdef \ProdC{\Debit{\Q}{\TransTy{\A[1]}}}{\Debit{\Q}{\TransTy{\A[2]}}}  \\
  \TransTy{\ListC[\F{\X}]}       & \isdef \PListII{\Q}{\Q}[\F{\X}]
\end{align*}
Note that $\ExpV{\NAbsV}{(\TransTy{\A} = \A)}$ by \cref{cor:nabs}.
The duality of sums and products appears via the duality of credits and debits: when eliminating from a sum/introducing a product, each case/component may require a different amount of credits.
We annotate lists with linear and quadratic credits.

On terms, inference augments Calf programs $e : \LolliV{\A}{\B}$ to Giralf programs $\Inf{e}$ such that
$\Giralf{\TransTy{\A}}{\Q}{\TransExp{e}}{\TransTy{\B}}$.
For example, the cost effect is reframed as spending credits, and the sum and product cases follow routinely from the type translations.
\begin{align*}
  \TransExp{\charge{c}[e]}                 & \isdef \ChargeG{c}[\TransExp{e}]                                                                         \\
  \TransExp{\InjG{i}{e}}                   & \isdef \InjG{i}{(\StoreG[\Q]{\TransExp{e}})}                                                             \\
  \TransExp{\CaseG{e}{a_1}{e_1}{a_2}{e_2}} & \isdef \CaseG{e}{a_1'}{\ReleaseG{a_1'}{a_1}{\TransExp{e_1}}}{a_2'}{\ReleaseG{a_2'}{a_2}{\TransExp{e_2}}} \\
  \TransExp{(e_1, e_2)}                    & \isdef (\GetG{\Q}{\TransExp{e_1}}, \GetG{\Q}{\TransExp{e_2}})                                            \\
  \TransExp{\ProjG{i}{e}}                  & \isdef \PayG{\ProjG{i}{\TransExp{e}}}
\end{align*}
The recursion principle for the list recursor is more involved, due to the translation of lists as quadratic-credit lists.
Consider the recursor
$\FoldrIG{e_0}{a}{b}{e_1}{e} : \B[0]$.
Although we could naively translate to a recursor for quadratic-credit lists at the constant family $\Inf{\B}(r) \isdef \Inf{\B[0]}$,
there are two issues that arise in common examples.
Recall the typing judgment for quadratic-credit lists (\cref{sec:giralf:syntax:plistii}).
\begin{enumerate}
  \item
        In the base case $e_0$, no credits are made available, even though cost could be incurred in the original Calf program.
        To mitigate this issue, it is common to thread credits through using the debit operator, similar to the credit-passing aspect of \cref{lem:paramorphism}.

  \item
        In the inductive case $e_1$, some unknown quantity $r$ of credits is made available.
        However, if any of that is to be spent, it must be known that $r$ is large enough.
        For this reason, it is important to build an \emph{invariant} into the output family that restricts the possible values of $r$, representable using a power type $\PowerC{\VQ}{\CMP*{\cdots}}$.
\end{enumerate}
We solve both of these issues by choosing a family $\Inf{\B}(r) \isdef \PowerC{\VQ}{\Debit{\Q}{\Inf{\B[0]}}}$ (where both the unknown invariant $\VQ$ and the unknown debit annotation $\Q$ may depend on $r$) and define
\[ \TransExp{\FoldrIG{e_0}{x}{a}{e_1}{e}} \isdef \PayG{\FoldrIIG{r}{\Inf{\B}(r)}{\TransExp{e_0}'}{a}{b}{\TransExp{e_1}'}{\TransExp{e}}~\Q} \]
where
$\TransExp{e_0}' \isdef \plam{(h : \VQ)}{\GetG{\Q}{\TransExp{e_0}}}$ and
$\TransExp{e_1}' \isdef \plam{(h : \VQ)}{\GetG{\Q}{[\PayG{b~\Q}/b] \TransExp{e_1}}}$ thread the debits and invariant recursively through the code.
To correctly close the loop, the complementary elimination forms are then be applied to the recursive result $b$ and the entire $\KW{foldr}$ expression.

\begin{example}
  As a basic example, consider $\DEFN{snoc}$, a worst-case simplification of
  $\DEFN{insert}$ (\cref{ex:insert}) that appends an element $x : \NatV$ to the end
  of a list: %
  \par\iblock{
    \mrow{\DEFN{snoc} : \LolliV{\ListC[\F{\NatV}]}{\ListC[\F{\NatV}]}}
    \mrow{\DEFN{snoc}~l \isdef \FoldrIG{\ConsG{\ret{x}}{\NilG}}{(\ret{y})}{b}{\charge{1}[{\ConsG{\ret{y}}{b}}]}{l}}
  }\noindent
  This program translates to the following skeletal Giralf program:
  \par\iblock{
    \mrow{\Inf{\DEFN{snoc}} : \LolliV{\PListII{\Q}{\Q}[\F{\NatV}]}{\PListII{\Q}{\Q}[\F{\NatV}]}}
    \mhang{\Inf{\DEFN{snoc}}~l \isdef \PayG{\Inf{\DEFN{snoc'}}~?}~\KW{where}}{
        \mrow{e_0 \isdef \plam{(h : \VQ)}{\GetG{\Q}{\ConsG{\ret{x}}{\NilG}}}}
        \mrow{e_1 \isdef \plam{(h : \VQ)}{\GetG{\Q}{\ChargeG{1}[{\ConsG{\ret{y}}{(\PayG{b~\Q})}}]}}}
        \mrow{\Inf{\DEFN{snoc'}} \isdef \FoldrIIG{r}{\PowerC{\VQ}{\Debit{\Q}{\PListII{\Q}{\Q}[\F{\NatV}]}}}{e_0}{(\ret{y})}{b}{e_1}{l}}
    }
  }\noindent
  This skeletal Giralf code faithfully reconstructs $\DEFN{snoc}$ with the assumption that all cost is prepaid.
\end{example}

\subsubsection{Constraint Solving}

The crux of cost inference is determining the unknown amounts of credits, which will induce the final cost bound.
We achieve this by adapting the LP-based approach of RaML~\citep{hoffmann-aehlig-hofmann>2012-raml,hoffmann-hofmann>2010-aplas}, which performs cost inference on programs with unbounded recursion, to the present setting with structural recursion.

Because RaML is implemented using an LP solver, it always outputs resource bounds with concrete numbers rather than symbolic expressions.
This causes the precise technique to be incompatible with the quadratic-credit list recursor, which uses an inductive family $\B(r)$ due to the fact that the linear coefficient $r$ changes inductively.
In particular, the requisite invariant must be a symbolic expression involving $r$.
To surmount this problem, we introduce the notion of \emph{cost-free} and
\emph{cost-aware} annotations~\citep{hoffmann-aehlig-hofmann>2012}.
For a skeletal Giralf program $e$,
\begin{enumerate}
  \item a \emph{cost-aware annotation} is a valid numerical assignment of credit values to unknowns; and
  \item a \emph{cost-free annotation} is a cost-aware annotation for $\widehat{e}$, constructed by erasing all cost annotations in $e$, describing only how credits move from input to output.
\end{enumerate}

\begin{example}\label{ex:snoc-annotations}
  Let us continue the running example of $\DEFN{snoc}$. To solve for its unknown
  credit values, we run the RaML algorithm on the recursor
  $\Inf{\DEFN{snoc'}}$:
  \begin{enumerate}
    \item One \emph{cost-aware} annotation demonstrates that $\Inf{\DEFN{snoc'}}$ spends at most $|l|$ credits:
          \[
            \Giralf{l : \PListII{1}{0}[\F{\NatV}]}{0}{\Inf{\DEFN{snoc'}}}{\PowerC{\VQ}{\Debit{0}{\PListII{0}{0}[\F{\NatV}]}}}.
          \]

    \item One \emph{cost-free} annotation describes the movement of linear credits:
          \[
            \Giralf{l : \PListII{1}{0}[\F{\NatV}]}{0}{\Inf{\widehat{\DEFN{snoc'}}}}{\PowerC{\VQ}{\Debit{1}{\PListII{1}{0}[\F{\NatV}]}}}.
          \]
          Given input list $l$ carrying $\ListVLength{l}$ credits, $\Inf{\DEFN{snoc'}}$
          outputs a list $l'$ carrying $\ListVLength{l'}$ credits, but at the additional
          cost of 1 more credit (indicated by the debit operator) because $\ListVLength{l'} = \ListVLength{l} + 1$.

    \item Another \emph{cost-free} annotation describes the movement of quadratic credits:
          \[
            \Giralf{l : \PListII{1}{1}[\F{\NatV}]}{0}{\Inf{\widehat{\DEFN{snoc'}}}}{\PowerC{\VQ}{\Debit{0}{\PListII{0}{1}[\F{\NatV}]}}}.
          \]
  \end{enumerate}
  RaML is capable of automating all three of the inferences shown above.
\end{example}

Numerical cost-aware and cost-free types can be combined into a symbolic invariant based on the following two lemmas~\citep{hoffmann-hofmann>2010-aplas}.

\begin{lemma}
  \emph{Cost-free annotations} are closed under addition and scalar
  multiplication. That is, if $\Giralf{\DeltaC[1]}{q_1}{\Inf{\widehat{e}}}{\A[1]}$
  and $\Giralf{\DeltaC[2]}{q_2}{\Inf{\widehat{e}}}{\A[2]}$ are credit assignments on the same skeletal Giralf types, then
  \begin{align*}
    \Giralf{\DeltaC[1] \oplus \DeltaC[2]}{q_1 + q_2}{\Inf{\widehat{e}}}{\A[1] \oplus \A[2]}
    &&
    \text{ and }
    &&
    \Giralf{k \odot \DeltaC[1]}{k \odot q_1}{\Inf{\widehat{e}}}{k \odot \A[1]}
  \end{align*}
  where $\oplus$ and $\odot$ are pointwise addition and scalar multiplication of the assigned credit values. %
\end{lemma}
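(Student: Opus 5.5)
We prove both closure properties by induction on the typing derivation of the cost-free skeletal term $\Inf{\widehat{e}}$, exploiting the fact that the two derivations share the same skeleton. Because $\Inf{\widehat{e}}$ is a fixed skeletal program, its typing derivations under the two annotations have identical shape. The same rule is applied at every node, and only the numerical credit values (the instantiations of each $\Q$, together with the context and output annotations) differ.

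Every Giralf rule is governed by constraints of two kinds:
\begin{itemize}
  \item linear inequalities over credit values, of the form $\SplitII{q}{q_1}{q_2}$ or $\SplitII{q}{p + q_1}{q_2}$, in the split, $\KW{store}$, $\KW{let}\ \KW{store}$, $\KW{get}$, $\KW{pay}$ and cons rules;
  \item linear equalities, in the variable and nil rules, and in the index shifts such as $p_2 + p_1$ in the quadratic cons rule and $\B(p_2 + r)$ in its recursor.
\end{itemize}
The plan is to show that each such constraint, instantiated at the pointwise sum (resp. $k$-fold scaling) of the two annotations, is implied by the two original instances. Conical combinations with nonnegative coefficients preserve homogeneous linear inequalities and equalities in an ordered commutative monoid, and $\Cost$ is $\NatV$ with $k \in \NatV$. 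So at each node we add the two instances of the node's constraint, or multiply the single instance by $k$, and then apply the induction hypothesis to the premises. The conclusion is the same rule instance with annotation $\DeltaC[1] \oplus \DeltaC[2]$, budget $q_1 + q_2$ and output $\A[1] \oplus \A[2]$ (resp. the scaled versions).

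The key reason this argument works only for the cost-free erasure is the $\ChargeG{p}$ rule. Its constraint $\SplitII{q}{p}{q'}$ contains the constant $p$ from the cost model, and it is \emph{not} homogeneous: summing two instances yields $q_1 + q_2 \ge 2p + \cdots$ rather than $p + \cdots$, and scaling fails similarly. Erasing costs to form $\widehat{e}$ replaces each $\ChargeG{p}$ by $\ChargeG{0}$ (or removes it), so every remaining constraint is homogeneous in the unknowns. The first step of the proof is therefore to verify this homogeneity by inspecting each rule of \cref{sec:giralf:syntax}.

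The main obstacle is the quadratic-credit recursor, whose premises are typed under a fresh structural index $r : \Cost$ and a family $\B(r)$. There the annotations of $e_0$ and $e_1$ are themselves functions of $r$, so the statement must be strengthened to apply to symbolic annotations that are affine in $r$. The induction hypothesis is then applied uniformly in $r$. The top-level instantiation $\B(p_1)$ respects addition because $p_1 \mapsto p_1^{(1)} + p_1^{(2)}$ and the family's dependence on $r$ is linear. To discharge this case, we first check that the translated families $\Inf{\B}(r) = \PowerC{\VQ}{\Debit{\Q}{\Inf{\B[0]}}}$ have debit annotations and invariants linear in $r$. We then check that the invariant propositions $\VQ$ are closed under the same operations; the simplest way to ensure this is to take them to be conjunctions of homogeneous linear inequalities in $r$.
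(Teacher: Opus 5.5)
Your handling of every rule except the quadratic-credit recursor is the standard argument. The paper does not reprove the lemma; it attributes it to \citet{hoffmann-hofmann>2010-aplas}. The rules are syntax-directed, so both derivations have the same shape, and once $\KW{spend}$ is erased every side condition is a homogeneous linear constraint, preserved by sums and nonnegative scalings.

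\textbf{The gap is the recursor case.} You flag it correctly, but the step you use to resolve it fails. In the premise for $e_1$, the budget is the bound index $r$ itself, and the recursive result has type $\B(p_2 + r)$.
\begin{itemize}
  \item Summing the two premise derivations at a common $r$ (``uniformly in $r$'') gives budget $2r$ and $b : \B[1](p_2^{(1)} + r) \oplus \B[2](p_2^{(2)} + r)$.
  \item The rule at family $\B[1] \oplus \B[2]$ and coefficient $p_2^{(1)} + p_2^{(2)}$ instead demands budget $r$ and $b : (\B[1] \oplus \B[2])(p_2^{(1)} + p_2^{(2)} + r)$. A typing with more budget does not yield one with less, so the sum is not an instance of the rule.
  \item Scaling fails the same way: you get budget $kr$ instead of $r$, and conclusion $(k \odot \B[1])(k\,p_1^{(1)})$ instead of $k \odot \B[1](p_1^{(1)})$.
  \item Your justification of the conclusion type is also wrong. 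The summed judgment must have type $\B[1](p_1^{(1)}) \oplus \B[2](p_1^{(2)})$. For families linear in $r$, however, $(\B[1] \oplus \B[2])(p_1^{(1)} + p_1^{(2)})$ carries the extra cross terms $\B[1](p_1^{(2)}) \oplus \B[2](p_1^{(1)})$. Linearity makes one family additive in its argument; it does not let one family evaluated at a sum stand for two families evaluated at two different arguments.
\end{itemize}

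The root cause is that the two derivations run on independent indices. At recursion depth $d$ they sit at $r^{(i)} = p_1^{(i)} + d\,p_2^{(i)}$. Summing them therefore yields data that agree with $\B[1](r^{(1)}) \oplus \B[2](r^{(2)})$ at $r = r^{(1)} + r^{(2)}$. Re-expressing this through the single index $r$ requires dividing by $p_2^{(1)} + p_2^{(2)}$. So the resulting family is in general neither the pointwise sum nor affine in $r$ with natural-number coefficients; for example, it has slope $\tfrac12$ when $\B[1](r) = r$, $\B[2] = 0$, and $p_2^{(1)} = p_2^{(2)} = 1$.

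The paper never sums families. It applies the lemma only to numerical typings of the recursor taken as a whole, namely the three RaML typings in the $\DEFN{snoc}$ example. The $r$-indexed family is extracted afterwards from the combined symbolic typing, via the change of basis $r \isdef 1 + p_1 + p_2$. To repair your proof, you have two options:
\begin{enumerate}
  \item Restrict the statement to the numerical typings the paper actually combines, where no $r$-indexed family is part of the summed data.
  \item Keep the two indices separate in the induction, and treat the passage to a single index as a separate reindexing step, as that change of basis does in the example.
\end{enumerate}
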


\begin{lemma}
  \emph{Cost-aware annotations} are closed under addition with a \emph{cost-free
  annotation} of the same term. That is, if $\Giralf{\DeltaC[1]}{q_1}{\Inf{e}}{\A[1]}$
  and $\Giralf{\DeltaC[2]}{q_2}{\Inf{\widehat{e}}}{\A[2]}$, then
  $
    \Giralf{\DeltaC[1] \oplus \DeltaC[2]}{q_1 + q_2}{\Inf{e}}{\A[1] \oplus \A[2]}
  $.
\end{lemma}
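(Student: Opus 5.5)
The plan is to argue by induction on the derivation of the cost-aware judgment $\Giralf{\DeltaC[1]}{q_1}{\Inf{e}}{\A[1]}$. The cost-free derivation $\Giralf{\DeltaC[2]}{q_2}{\Inf{\widehat{e}}}{\A[2]}$ is inverted alongside it. Both derivations are over the same skeletal term, and $\widehat{e}$ differs from $e$ only in that cost annotations are erased. So the two derivations end in the same rule at every node, except where $\Inf{e}$ contains $\ChargeG{p}[-]$ and $\Inf{\widehat{e}}$ contains $\ChargeG{0}[-]$. In each case we build a derivation for $\Inf{e}$ whose unknown credit values are the pointwise sums $\oplus$ of the two annotations. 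Since every side condition of Giralf is a linear constraint, the goal at each node is to show that the sum of a cost-aware instance and a cost-free instance of that constraint is again a cost-aware instance.

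First, the variable rule and the introduction forms for $\NilG$ impose no constraints, so they go through immediately. For every rule whose premise is a splitting inequality $\SplitII{q}{q_1}{q_2}$, possibly with a stored or spent amount $p$, add the two inequalities componentwise. Examples are let, store and release, get and pay, the cons rules, and the sum and product rules. The recursive premises then follow from the induction hypothesis, because the contexts and types split pointwise under $\oplus$. For example, $\Credit{p}{\A}$ combined with $\Credit{p'}{\A'}$ becomes $\Credit{p+p'}{\A \oplus \A'}$, and likewise for $\Debit{-}{-}$, $\PListI{-}{-}$ and $\PListII{-}{-}{-}$. The spend rule is the asymmetric case. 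The cost-aware side gives $q \ge p + q'$, and the cost-free side, whose spend is erased to $\ChargeG{0}$, gives $q'' \ge 0 + q'''$. Summing yields $q + q'' \ge p + (q' + q''')$, which is exactly the premise for $\ChargeG{p}$ at the combined annotation. This is the only place where cost appears, and it appears only once, which is why we add a cost-free annotation and not a second cost-aware one.

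The main obstacle will be the recursors, especially $\FoldrIIG{r}{\B(r)}{e_0}{a}{b}{e_1}{e}$, whose premises are stated under a bound structural variable $r$ and at the family $\B(r)$. The index $r$ tracks the linear coefficient of the list. In the combined derivation, this coefficient is the sum $r + r'$ of the coefficients in the two derivations. We therefore instantiate the combined family as $(\B[1] \oplus \B[2])(r) $, taking the sum of $\B[1]$ at $r$ and $\B[2]$ at $r'$. To make this work we need the annotations to depend affinely on $r$, as they do in the symbolic invariants produced by the algorithm. With that, the inductive premise at index $r$ splits uniquely into the two components, and the induction hypothesis applies to $e_0$ and $e_1$ separately. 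I would first try to prove the case under this affinity assumption. If it is not syntactically guaranteed, I would strengthen the induction hypothesis so that it quantifies over all structural contexts $\GammaV$, including the freshly bound $r$.
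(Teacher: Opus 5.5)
Apart from the quadratic-list recursor, your argument is the standard one. The paper does not reprove the lemma; it attributes it to \citet{hoffmann-hofmann>2010-aplas}. The argument is: induct on the cost-aware derivation, invert the cost-free one in lockstep, add the linear side conditions, and treat $\ChargeG{p}$ as the single asymmetric case. The gap is the quadratic-list recursor case, which is exactly what is new in Giralf, and your treatment of it does not work.

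The combined rule binds one index $s : \Cost$ and demands its premises for every $s$. It relates the family at $s$ to the family at $s + P$, where $P = p_2 + p_2'$. Its conclusion forces $\B(p_1 + p_1') = \B[1](p_1) \oplus \B[2](p_1')$. To reuse your two induction hypotheses at $s$, you must split $s = r + r'$ so that the family at $s + P$ is $\B[1](p_2 + r) \oplus \B[2](p_2' + r')$.

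Starting from $(p_1, p_1')$, this pins the split at $s = (p_1 + p_1') + kP$ to $(p_1 + kp_2,\ p_1' + kp_2')$. For $k < 0$ this generally leaves $\Cost$: with $p_1 = 0$, $p_2 = 1$, $p_1' = 5$, $p_2' = 0$ it is $(-1, 5)$ at $s = 4$. At such $s$ the induction hypotheses give you nothing, yet the premises for $e_0$ and $e_1$ are still required.

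Affinity in $r$ does not help. The affine split is exactly the one above, and a sum of affine families with different slopes is not a function of $r + r'$. Quantifying over $\GammaV$ only lets you apply the hypothesis under the binder. Your premise that every side condition is a linear constraint also overlooks that the skeletal family carries a propositional invariant.

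The missing idea is to use the invariant guard that the translation builds into the family, $\Inf{\B}(r) = \PowerC{\VQ}{\Debit{\Q}{\Inf{\B[0]}}}$. Let $I_1, I_2$ be the two component invariants, and take the combined invariant at $s$ to be $\exists k.\ s = (p_1 + p_1') + kP \wedge I_1(p_1 + kp_2) \wedge I_2(p_1' + kp_2')$. Take the annotations to be the sums at that forced split; when $P > 0$, $k$ is determined by $s$, and off the progression the values are irrelevant. Then:
\begin{enumerate}
\item the premises at unreachable $s$ hold vacuously, since every side condition is discharged by the false guard;
\item the premises at reachable $s$ follow from your two induction hypotheses;
\item the invariant proof the translated body passes to the recursive result $b$ is assembled from the two component proofs at $k + 1$.
\end{enumerate}

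Alternatively, mirror the paper's actual usage. There the lemma is only applied to RaML's numeric annotations of the whole recursor, which is analysed with resource-polymorphic recursion as in the cited work. The symbolic family $\B(r)$, with its guard, is built only afterwards by a change of basis.
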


\begin{example}
  With these lemmas, we combine the numerical annotations of
  \cref{ex:snoc-annotations} into a single type with symbolic credit variables.
  Denoting the three types as $\A[\text{ca}]$, $\A[\text{cf1}]$, and $\A[\text{cf2}]$, we combine them via $\A[\text{ca}] \oplus (p_1 \odot \A[\text{cf1}]) \oplus (p_2 \odot
    \A[\text{cf2}])$ to obtain the desired resource-polymorphic typing
  \[
    \Giralf{l : \PListII{1 + p_1 + p_2}{p_2}[\F{\NatV}]}{0}{\Inf{\DEFN{snoc'}}}{\PowerC{\VQ}{\Debit{p_1}{\PListII{p_1}{p_2}[\F{\NatV}]}}}.
  \]
  Then, with a change of basis $r \isdef 1 + p_1 + p_2$, we determine the full type family for $\Inf{snoc'}$:
  \[
    \B(r) \isdef \PowerC{(r \ge 1 + p_2)}{\Debit{r - p_2 - 1}{\PListII{r - p_2 - 1}{p_2}[\F{\NatV}]}}.
  \]
  The invariant predicate guarantees that the changing linear coefficient $r$ is always large enough by ensuring all subtractions of credits are non-negative.
\end{example}

In summary, using the LP solving technique of AARA and RaML, we
infer valid invariants and credit amounts.
Once the unknown credit amounts within the skeletal Giralf \emph{types} are
solved, we populate the unknown credit amounts within the skeletal Giralf
\emph{terms} via a simple algorithm inspired by bidirectional type checking;
\eg, when checking $\GetG{\Q}{e}$ against the type
$\Debit{p}{\A}$, we replace $\Q$ with $p$.
The only remaining missing data are proofs that the symbolic invariants are
recursively preserved; we generate such proofs using a simple heuristic-driven
arithmetic solver.

\subsubsection{Cost Inference for Calf}

Using the above inference procedure and the semantics of Giralf in Calf, we may infer cost bounds on Calf programs automatically.

\begin{example}
  Let $\DEFN{isort} : \U[\PowerC{\LV}{\F[\LV]}]$ be the insertion sort algorithm in Calf~\citep{niu-sterling-grodin-harper>2022}.
  This program may be written with the equivalent types $\DEFN{isort} : \LolliV{\F[\LV]}{\F[\LV ]}$ and $\DEFN{isort} : \LolliV{\ListC[\F{\NatV}]}{\ListC[\F{\NatV}]}$.
  The previously described inference algorithm derives the judgment
  \[ \Giralf{\PListII{p_1}{p_2 + 1}[\F{\NatV}]}{0}{\Inf{\DEFN{isort}}}{\PListII{p_1}{p_2}[\F{\NatV}]} \]
  demonstrating that the quadratic credits stored alongside the input list decrease by $1$, indicating a triangular cost upper bound.
  Let $p_1 = p_2 = 0$; as a Calf program (up to \cref{lem:credit-monoidal}),
  \[ \Inf{\DEFN{isort}} : \LolliV{\PListII{0}{1}[\F{\NatV}]}{\F[\LV]}, \]
  and by \cref{thm:calf-aara-types,lem:PotentialC}, with $\Phi(l) \isdef \binom{\ListVLength{l}}{2}$ we have that
  \[ \Inf{\DEFN{isort}} : \LolliV{\CopowerC{(l : \LV)}{\Credit{\Phi(l)}{\TopC}}}{\F[\LV]}. \]
  So, pre-paying as in \cref{rem:prepay}, we find by \cref{lem:save-spend} that
  \[ \clam{(\ret{l})}{\Inf{\DEFN{isort}}(l, \CreditSave{\Phi(l)})} : \LolliV{\F[\LV]}{\F[\LV]} \]
  is an upper bound for the original $\DEFN{isort}$ program in Calf in the abstract phase.
\end{example}

\section{Related Work}\label{sec:related}

The central ideas on which the present work is based are verified cost analysis in Calf~\citep{niu-sterling-grodin-harper>2022,grodin-niu-sterling-harper>2024,grodin-harper>2024}, fracture and gluing for synthetic abstraction~\citep{rijke-shulman-spitters>2020,grodin-li-harper>2026}, and AARA~\citep{hofmann-jost>2003,hoffmann-jost>2022,das-balzer-hoffmann-pfenning-santurkar>2021}.
However, there are many other approaches to formally verifying amortized cost.

Using potential functions from the physicist's view, \citet{vanbrugge>2024} and \citet{nipkow-brinkop>2019} mechanize various amortized bounds.
\citet{atkey>2014} develops a connection between abstraction and conservation of energy which parallels the perspective taken in this work.

\citet{cutler-licata-danner>2020} propose a non-dependent graded language similar to Giralf whose semantics is also based on the cost effect (a writer monad).
\citet{danielsson>2008} verifies amortized costs in dependent type theory using a graded monad representing debits, and \citet{atkey>2011}, \citet{mevel-jourdan-pottier>2019}, and \citet{pottier-gueneau-jourdan-mevel>2024} extend separation logics with a credit resource.

\paragraph[λ-amor]{$\lambda$-amor}
\citet{rajani-barthe-garg>2024} present $\lambda$-amor, an extension of the linear type theory of AARA with a layer of structural refinements, roughly corresponding to the layer of value types in Giralf.
Although Giralf and $\lambda$-amor have some key technical differences---$\lambda$-amor includes, for example, unbounded recursion and impredicative quantification---the overall structures and goals of the type theories are aligned.
Notably, $\lambda$-amor is the first work based on the banker's method of which we are aware that includes negative types, such as (in our notation) products $\ProdC{\A}{\B}$, powers $\PowerC{\X}{\A}$, homomorphisms $\LolliC{\A}{\B}$, and debits $\Debit{c}{\A}$.
Semantically, this is achieved by extending the potential functions of AARA (reviewed in \cref{sec:giralf:semantics}) to potential \emph{predicates} $\Phi_{\A} : \ExpV{\AARAValues{\A}}{\ExpV{\Cost}{\VVProp}}$ that determine if a given amount of potential is sufficient to construct a particular value.
Glimmers of the lax commutative squares emphasized in the present work are visible in these predicates.

\paragraph{Dependent AARA}
More recently, \citet{xu-wang>2026} extended AARA to support potential that is explicitly dependent on data, written (adapting our notation) $\GammaV \mathbin{;} \Phi_{\GammaV} \vdash e : \X \mathbin{;} \Phi_{\X}$, where $\Phi_{\GammaV} : \ExpV{\GammaV}{\Cost}$ and $\Phi_{\X} : \ExpV{\X}{\Cost}$ describe the input and output potentials.
We conjecture that their lightweight dependent type system can be understood in our full dependent type theory by interpreting their typing judgment as the type of homomorphisms $\LolliV{\PotentialC{\GammaV}{\Phi_{\GammaV}}}{\PotentialC{\X}{\Phi_{\X}}}$.

\section{Conclusion} \label{sec:conclusion}

In this work, we render amortized analysis synthetically in the Calf dependent type theory, treating the cost of an effectful abstraction function built into a type as potential.
This ensures that every program comes equipped with an amortized cost interface coherent up to a generalization of the physicist's conservation of potential that accounts for both cost and behavior.
Defining the banker's credits and debits synthetically, we develop Giralf, a substructural dependent type theory layered on top of Calf streamlining programming with credits.
Extending the automated cost inference techniques of AARA, we present an automated cost inference procedure for Calf targeting Giralf.

\paragraph{Future Work}
In this work, we focus on amortization for ephemeral data structures.
\citet{okasaki>1999} uses laziness to adapt amortization to the persistent setting, developed logically by \citet{pottier-gueneau-jourdan-mevel>2024} and operationally by \citet{lorenzen>2026}.
It is yet unknown if the technique developed in the present work can be adapted to the setting of persistent amortization.

In this work, we made use of a commutative cost model, which was essential for the semantics of types such as tensor products and credits.
However, in AARA, a non-commutative cost model is sometimes used to study high-water marks on cost~\citep{hoffmann-hofmann>2010-aplas}.
We leave the incorporation of such cost models to future work.

The methods of cost inference in AARA have been developed extensively over decades~\citep{hoffmann-jost>2022}.
In this work, we adapt the inference algorithm for polynomial credits on lists~\citep{hoffmann-hofmann>2010}, but we leave it as future work to integrate features such as multivariate bounds~\citep{hoffmann-aehlig-hofmann>2011,hoffmann-aehlig-hofmann>2012}, exponential bounds~\citep{kahn-hoffmann>2020}, and more general inductive types~\citep{grosen-kahn-hoffmann>2023}.

\section*{Data Availability Statement}

The core ideas presented in this work are implemented and mechanized.
Fusing and extending the mechanizations of \citet{niu-sterling-grodin-harper>2022}, \citet{grodin-niu-sterling-harper>2024}, and \citet{grodin-li-harper>2026}, we provide a mechanization in Cubical Agda~\citep{norell>2009,vezzosi-mortberg-abel>2019} (indicated by \AgdaFormalized{}) of the central constructions and theorems of \cref{sec:physicist,sec:modularity,sec:banker} and the semantics of Giralf described in \cref{sec:giralf}.
As an approximation, the inequality structure of \citet{grodin-niu-sterling-harper>2024} is replaced with equality.

We also provide an implementation of the inference algorithm given in \cref{sec:giralf:inference}, adapting the OCaml code of RaML~\citep{hoffmann-aehlig-hofmann>2012-raml}; while this code is not verified, it is \emph{certifying}, emitting Giralf artifacts in Agda that include a certificate of amortized correctness by construction.

\begin{acks}
  The authors thank Reid Barton for his advice on the proof of \cref{thm:cfracglue}
  and Jonathan Sterling for fruitful adjacent collaboration.

  This material is based upon work supported by
  \grantsponsor{JSC}{Jane Street Group,
    LLC}{https://www.janestreet.com/};
  the \grantsponsor{AFOSR}{United States Air Force Office of
    Scientific Research}{https://www.afrl.af.mil/AFOSR/} under grant
  numbers \grantnum{AFOSR}{FA9550-21-0009} and
  \grantnum{AFOSR}{FA9550-23-1-0434} (Tristan Nguyen, program
  manager);
  and the \grantsponsor{NSF}{National Science
    Foundation}{https://nsf.gov} under award numbers
  \grantnum{NSF}{2615896}, \grantnum{NSF}{2311983}, and
  \grantnum{NSF}{2525102}.
  Any opinions, findings and conclusions or recommendations expressed
  in this material are those of the authors and do not necessarily
  reflect the views of the AFOSR and the NSF.
\end{acks}

\bibliographystyle{ACM-Reference-Format}
\bibliography{main}


\begin{thebibliography}{59}


\ifx \showCODEN    \undefined \def \showCODEN     #1{\unskip}     \fi
\ifx \showDOI      \undefined \def \showDOI       #1{#1}\fi
\ifx \showISBNx    \undefined \def \showISBNx     #1{\unskip}     \fi
\ifx \showISBNxiii \undefined \def \showISBNxiii  #1{\unskip}     \fi
\ifx \showISSN     \undefined \def \showISSN      #1{\unskip}     \fi
\ifx \showLCCN     \undefined \def \showLCCN      #1{\unskip}     \fi
\ifx \shownote     \undefined \def \shownote      #1{#1}          \fi
\ifx \showarticletitle \undefined \def \showarticletitle #1{#1}   \fi
\ifx \showURL      \undefined \def \showURL       {\relax}        \fi
\providecommand\bibfield[2]{#2}
\providecommand\bibinfo[2]{#2}
\providecommand\natexlab[1]{#1}
\providecommand\showeprint[2][]{arXiv:#2}

\bibitem[Ahman et~al\mbox{.}(2016)]%
        {ahman-ghani-plotkin>2016}
\bibfield{author}{\bibinfo{person}{Danel Ahman}, \bibinfo{person}{Neil Ghani}, {and} \bibinfo{person}{Gordon~D. Plotkin}.} \bibinfo{year}{2016}\natexlab{}.
\newblock \showarticletitle{Dependent {{Types}} and {{Fibred Computational Effects}}}. In \bibinfo{booktitle}{\emph{Foundations of {{Software Science}} and {{Computation Structures}}}} \emph{(\bibinfo{series}{Lecture {{Notes}} in {{Computer Science}}})}, \bibfield{editor}{\bibinfo{person}{Bart Jacobs} {and} \bibinfo{person}{Christof L{\"o}ding}} (Eds.). \bibinfo{publisher}{Springer}, \bibinfo{address}{Berlin, Heidelberg}, \bibinfo{pages}{36--54}.
\newblock
\showISBNx{978-3-662-49630-5}
\urldef\tempurl%
\url{https://doi.org/10.1007/978-3-662-49630-5_3}
\showDOI{\tempurl}


\bibitem[Atkey(2011)]%
        {atkey>2011}
\bibfield{author}{\bibinfo{person}{Robert Atkey}.} \bibinfo{year}{2011}\natexlab{}.
\newblock \showarticletitle{Amortised {{Resource Analysis}} with {{Separation Logic}}}.
\newblock \bibinfo{journal}{\emph{Logical Methods in Computer Science}}  \bibinfo{volume}{Volume 7, Issue 2} (\bibinfo{date}{June} \bibinfo{year}{2011}).
\newblock
\showISSN{1860-5974}
\urldef\tempurl%
\url{https://doi.org/10.2168/LMCS-7(2:17)2011}
\showDOI{\tempurl}


\bibitem[Atkey(2014)]%
        {atkey>2014}
\bibfield{author}{\bibinfo{person}{Robert Atkey}.} \bibinfo{year}{2014}\natexlab{}.
\newblock \showarticletitle{From Parametricity to Conservation Laws, via {{Noether}}'s Theorem}. In \bibinfo{booktitle}{\emph{Proceedings of the 41st {{ACM SIGPLAN-SIGACT Symposium}} on {{Principles}} of {{Programming Languages}}}} \emph{(\bibinfo{series}{{{POPL}} '14})}. \bibinfo{publisher}{Association for Computing Machinery}, \bibinfo{address}{New York, NY, USA}, \bibinfo{pages}{491--502}.
\newblock
\showISBNx{978-1-4503-2544-8}
\urldef\tempurl%
\url{https://doi.org/10.1145/2535838.2535867}
\showDOI{\tempurl}


\bibitem[Benton(1995)]%
        {benton>1995}
\bibfield{author}{\bibinfo{person}{P.~N. Benton}.} \bibinfo{year}{1995}\natexlab{}.
\newblock \showarticletitle{A Mixed Linear and Non-Linear Logic: {{Proofs}}, Terms and Models}. In \bibinfo{booktitle}{\emph{Computer {{Science Logic}}}} \emph{(\bibinfo{series}{Lecture {{Notes}} in {{Computer Science}}})}, \bibfield{editor}{\bibinfo{person}{Leszek Pacholski} {and} \bibinfo{person}{Jerzy Tiuryn}} (Eds.). \bibinfo{publisher}{Springer}, \bibinfo{address}{Berlin, Heidelberg}, \bibinfo{pages}{121--135}.
\newblock
\showISBNx{978-3-540-49404-1}
\urldef\tempurl%
\url{https://doi.org/10.1007/BFb0022251}
\showDOI{\tempurl}


\bibitem[Burton(1982)]%
        {burton>1982}
\bibfield{author}{\bibinfo{person}{F.~Warren Burton}.} \bibinfo{year}{1982}\natexlab{}.
\newblock \showarticletitle{An Efficient Functional Implementation of {{FIFO}} Queues}.
\newblock \bibinfo{journal}{\emph{Inform. Process. Lett.}} \bibinfo{volume}{14}, \bibinfo{number}{5} (\bibinfo{date}{July} \bibinfo{year}{1982}), \bibinfo{pages}{205--206}.
\newblock
\showISSN{0020-0190}
\urldef\tempurl%
\url{https://doi.org/10.1016/0020-0190(82)90015-1}
\showDOI{\tempurl}


\bibitem[Cutler et~al\mbox{.}(2020)]%
        {cutler-licata-danner>2020}
\bibfield{author}{\bibinfo{person}{Joseph~W. Cutler}, \bibinfo{person}{Daniel~R. Licata}, {and} \bibinfo{person}{Norman Danner}.} \bibinfo{year}{2020}\natexlab{}.
\newblock \showarticletitle{Denotational Recurrence Extraction for Amortized Analysis}.
\newblock \bibinfo{journal}{\emph{Proceedings of the ACM on Programming Languages}} \bibinfo{volume}{4}, \bibinfo{number}{ICFP} (\bibinfo{date}{Aug.} \bibinfo{year}{2020}), \bibinfo{pages}{97:1--97:29}.
\newblock
\urldef\tempurl%
\url{https://doi.org/10.1145/3408979}
\showDOI{\tempurl}


\bibitem[Danielsson(2008)]%
        {danielsson>2008}
\bibfield{author}{\bibinfo{person}{Nils~Anders Danielsson}.} \bibinfo{year}{2008}\natexlab{}.
\newblock \showarticletitle{Lightweight Semiformal Time Complexity Analysis for Purely Functional Data Structures}.
\newblock \bibinfo{journal}{\emph{ACM SIGPLAN Notices}} \bibinfo{volume}{43}, \bibinfo{number}{1} (\bibinfo{date}{Jan.} \bibinfo{year}{2008}), \bibinfo{pages}{133--144}.
\newblock
\showISSN{0362-1340}
\urldef\tempurl%
\url{https://doi.org/10.1145/1328897.1328457}
\showDOI{\tempurl}


\bibitem[Das et~al\mbox{.}(2021)]%
        {das-balzer-hoffmann-pfenning-santurkar>2021}
\bibfield{author}{\bibinfo{person}{Ankush Das}, \bibinfo{person}{Stephanie Balzer}, \bibinfo{person}{Jan Hoffmann}, \bibinfo{person}{Frank Pfenning}, {and} \bibinfo{person}{Ishani Santurkar}.} \bibinfo{year}{2021}\natexlab{}.
\newblock \showarticletitle{Resource-{{Aware Session Types}} for {{Digital Contracts}}}. In \bibinfo{booktitle}{\emph{2021 {{IEEE}} 34th {{Computer Security Foundations Symposium}} ({{CSF}})}}. \bibinfo{pages}{1--16}.
\newblock
\showISSN{2374-8303}
\urldef\tempurl%
\url{https://doi.org/10.1109/CSF51468.2021.00004}
\showDOI{\tempurl}


\bibitem[Egger et~al\mbox{.}(2009)]%
        {egger-mogelberg-simpson>2009}
\bibfield{author}{\bibinfo{person}{Jeff Egger}, \bibinfo{person}{Rasmus~Ejlers M{\o}gelberg}, {and} \bibinfo{person}{Alex Simpson}.} \bibinfo{year}{2009}\natexlab{}.
\newblock \showarticletitle{Enriching an {{Effect Calculus}} with {{Linear Types}}}. In \bibinfo{booktitle}{\emph{Computer {{Science Logic}}}} \emph{(\bibinfo{series}{Lecture {{Notes}} in {{Computer Science}}})}, \bibfield{editor}{\bibinfo{person}{Erich Gr{\"a}del} {and} \bibinfo{person}{Reinhard Kahle}} (Eds.). \bibinfo{publisher}{Springer}, \bibinfo{address}{Berlin, Heidelberg}, \bibinfo{pages}{240--254}.
\newblock
\showISBNx{978-3-642-04027-6}
\urldef\tempurl%
\url{https://doi.org/10.1007/978-3-642-04027-6_19}
\showDOI{\tempurl}


\bibitem[Egger et~al\mbox{.}(2014)]%
        {egger-mogelberg-simpson>2014}
\bibfield{author}{\bibinfo{person}{Jeff Egger}, \bibinfo{person}{Rasmus~Ejlers M{\o}gelberg}, {and} \bibinfo{person}{Alex Simpson}.} \bibinfo{year}{2014}\natexlab{}.
\newblock \showarticletitle{The Enriched Effect Calculus: Syntax and Semantics}.
\newblock \bibinfo{journal}{\emph{Journal of Logic and Computation}} \bibinfo{volume}{24}, \bibinfo{number}{3} (\bibinfo{date}{June} \bibinfo{year}{2014}), \bibinfo{pages}{615--654}.
\newblock
\showISSN{1465-363X}
\urldef\tempurl%
\url{https://doi.org/10.1093/logcom/exs025}
\showDOI{\tempurl}


\bibitem[Grandis and Par{\'e}(2004)]%
        {grandis-pare>2004}
\bibfield{author}{\bibinfo{person}{Marco Grandis} {and} \bibinfo{person}{Robert Par{\'e}}.} \bibinfo{year}{2004}\natexlab{}.
\newblock \showarticletitle{Adjoint for Double Categories}.
\newblock \bibinfo{journal}{\emph{Cahiers de Topologie et G\'eom\'etrie Diff\'erentielle Cat\'egoriques}} \bibinfo{volume}{45}, \bibinfo{number}{3} (\bibinfo{year}{2004}), \bibinfo{pages}{193--240}.
\newblock
\showISSN{2681-2363}
\urldef\tempurl%
\url{https://www.numdam.org/item/?id=CTGDC_2004__45_3_193_0}
\showURL{%
\tempurl}


\bibitem[Gries(1989)]%
        {gries>1989}
\bibfield{author}{\bibinfo{person}{David Gries}.} \bibinfo{year}{1989}\natexlab{}.
\newblock \bibinfo{booktitle}{\emph{The {{Science}} of {{Programming}}}}.
\newblock \bibinfo{publisher}{Springer New York}.
\newblock
\showISBNx{978-0-387-96480-5}


\bibitem[Grodin and Harper(2024)]%
        {grodin-harper>2024}
\bibfield{author}{\bibinfo{person}{Harrison Grodin} {and} \bibinfo{person}{Robert Harper}.} \bibinfo{year}{2024}\natexlab{}.
\newblock \showarticletitle{Amortized {{Analysis}} via {{Coalgebra}}}.
\newblock \bibinfo{journal}{\emph{Electronic Notes in Theoretical Informatics and Computer Science}}  \bibinfo{volume}{Volume 4 - Proceedings of MFPS XL} (\bibinfo{date}{Dec.} \bibinfo{year}{2024}).
\newblock
\showISSN{2969-2431}
\urldef\tempurl%
\url{https://doi.org/10.46298/entics.14797}
\showDOI{\tempurl}


\bibitem[Grodin et~al\mbox{.}(2026)]%
        {grodin-li-harper>2026}
\bibfield{author}{\bibinfo{person}{Harrison Grodin}, \bibinfo{person}{Runming Li}, {and} \bibinfo{person}{Robert Harper}.} \bibinfo{year}{2026}\natexlab{}.
\newblock \showarticletitle{Abstraction {{Functions}} as {{Types}}: {{Modular Verification}} of {{Cost}} and {{Behavior}} in {{Dependent Type Theory}}}.
\newblock \bibinfo{journal}{\emph{Proceedings of the ACM on Programming Languages}} \bibinfo{volume}{10}, \bibinfo{number}{POPL} (\bibinfo{date}{Jan.} \bibinfo{year}{2026}), \bibinfo{pages}{31:895--31:922}.
\newblock
\urldef\tempurl%
\url{https://doi.org/10.1145/3776673}
\showDOI{\tempurl}


\bibitem[Grodin et~al\mbox{.}(2024)]%
        {grodin-niu-sterling-harper>2024}
\bibfield{author}{\bibinfo{person}{Harrison Grodin}, \bibinfo{person}{Yue Niu}, \bibinfo{person}{Jonathan Sterling}, {and} \bibinfo{person}{Robert Harper}.} \bibinfo{year}{2024}\natexlab{}.
\newblock \showarticletitle{Decalf: {{A Directed}}, {{Effectful Cost-Aware Logical Framework}}}.
\newblock \bibinfo{journal}{\emph{Proceedings of the ACM on Programming Languages}} \bibinfo{volume}{8}, \bibinfo{number}{POPL} (\bibinfo{date}{Jan.} \bibinfo{year}{2024}), \bibinfo{pages}{10:273--10:301}.
\newblock
\urldef\tempurl%
\url{https://doi.org/10.1145/3632852}
\showDOI{\tempurl}


\bibitem[Grosen et~al\mbox{.}(2023)]%
        {grosen-kahn-hoffmann>2023}
\bibfield{author}{\bibinfo{person}{Jessie Grosen}, \bibinfo{person}{David~M. Kahn}, {and} \bibinfo{person}{Jan Hoffmann}.} \bibinfo{year}{2023}\natexlab{}.
\newblock \showarticletitle{Automatic {{Amortized Resource Analysis}} with {{Regular Recursive Types}}}. In \bibinfo{booktitle}{\emph{2023 38th {{Annual ACM}}/{{IEEE Symposium}} on {{Logic}} in {{Computer Science}} ({{LICS}})}}. \bibinfo{pages}{1--14}.
\newblock
\urldef\tempurl%
\url{https://doi.org/10.1109/LICS56636.2023.10175720}
\showDOI{\tempurl}


\bibitem[Guibas and Sedgewick(1978)]%
        {guibas-sedgewick>1978}
\bibfield{author}{\bibinfo{person}{Leo~J. Guibas} {and} \bibinfo{person}{Robert Sedgewick}.} \bibinfo{year}{1978}\natexlab{}.
\newblock \showarticletitle{A Dichromatic Framework for Balanced Trees}. In \bibinfo{booktitle}{\emph{19th {{Annual Symposium}} on {{Foundations}} of {{Computer Science}} (Sfcs 1978)}}. \bibinfo{pages}{8--21}.
\newblock
\showISSN{0272-5428}
\urldef\tempurl%
\url{https://doi.org/10.1109/SFCS.1978.3}
\showDOI{\tempurl}


\bibitem[Hoare(1972)]%
        {hoare>1972}
\bibfield{author}{\bibinfo{person}{C.~A.~R. Hoare}.} \bibinfo{year}{1972}\natexlab{}.
\newblock \showarticletitle{Proof of {{Correctness}} of {{Data Representations}}}.
\newblock \bibinfo{journal}{\emph{Acta Informatica}} \bibinfo{volume}{1}, \bibinfo{number}{4} (\bibinfo{date}{Dec.} \bibinfo{year}{1972}), \bibinfo{pages}{271--281}.
\newblock
\showISSN{1432-0525}
\urldef\tempurl%
\url{https://doi.org/10.1007/BF00289507}
\showDOI{\tempurl}


\bibitem[Hoffmann et~al\mbox{.}(2011)]%
        {hoffmann-aehlig-hofmann>2011}
\bibfield{author}{\bibinfo{person}{Jan Hoffmann}, \bibinfo{person}{Klaus Aehlig}, {and} \bibinfo{person}{Martin Hofmann}.} \bibinfo{year}{2011}\natexlab{}.
\newblock \showarticletitle{Multivariate Amortized Resource Analysis}. In \bibinfo{booktitle}{\emph{Proceedings of the 38th Annual {{ACM SIGPLAN-SIGACT}} Symposium on {{Principles}} of Programming Languages}} \emph{(\bibinfo{series}{{{POPL}} '11})}. \bibinfo{publisher}{Association for Computing Machinery}, \bibinfo{address}{New York, NY, USA}, \bibinfo{pages}{357--370}.
\newblock
\showISBNx{978-1-4503-0490-0}
\urldef\tempurl%
\url{https://doi.org/10.1145/1926385.1926427}
\showDOI{\tempurl}


\bibitem[Hoffmann et~al\mbox{.}(2012a)]%
        {hoffmann-aehlig-hofmann>2012}
\bibfield{author}{\bibinfo{person}{Jan Hoffmann}, \bibinfo{person}{Klaus Aehlig}, {and} \bibinfo{person}{Martin Hofmann}.} \bibinfo{year}{2012}\natexlab{a}.
\newblock \showarticletitle{Multivariate Amortized Resource Analysis}.
\newblock \bibinfo{journal}{\emph{ACM Transactions on Programming Languages and Systems}} \bibinfo{volume}{34}, \bibinfo{number}{3} (\bibinfo{date}{Nov.} \bibinfo{year}{2012}), \bibinfo{pages}{14:1--14:62}.
\newblock
\showISSN{0164-0925}
\urldef\tempurl%
\url{https://doi.org/10.1145/2362389.2362393}
\showDOI{\tempurl}


\bibitem[Hoffmann et~al\mbox{.}(2012b)]%
        {hoffmann-aehlig-hofmann>2012-raml}
\bibfield{author}{\bibinfo{person}{Jan Hoffmann}, \bibinfo{person}{Klaus Aehlig}, {and} \bibinfo{person}{Martin Hofmann}.} \bibinfo{year}{2012}\natexlab{b}.
\newblock \showarticletitle{Resource {{Aware ML}}}. In \bibinfo{booktitle}{\emph{Computer {{Aided Verification}}}} \emph{(\bibinfo{series}{Lecture {{Notes}} in {{Computer Science}}})}, \bibfield{editor}{\bibinfo{person}{P.~Madhusudan} {and} \bibinfo{person}{Sanjit~A. Seshia}} (Eds.). \bibinfo{publisher}{Springer}, \bibinfo{address}{Berlin, Heidelberg}, \bibinfo{pages}{781--786}.
\newblock
\showISBNx{978-3-642-31424-7}
\urldef\tempurl%
\url{https://doi.org/10.1007/978-3-642-31424-7_64}
\showDOI{\tempurl}


\bibitem[Hoffmann and Hofmann(2010a)]%
        {hoffmann-hofmann>2010-aplas}
\bibfield{author}{\bibinfo{person}{Jan Hoffmann} {and} \bibinfo{person}{Martin Hofmann}.} \bibinfo{year}{2010}\natexlab{a}.
\newblock \showarticletitle{Amortized {{Resource Analysis}} with {{Polymorphic Recursion}} and {{Partial Big-Step Operational Semantics}}}. In \bibinfo{booktitle}{\emph{Programming {{Languages}} and {{Systems}}}} \emph{(\bibinfo{series}{Lecture {{Notes}} in {{Computer Science}}})}, \bibfield{editor}{\bibinfo{person}{Kazunori Ueda}} (Ed.). \bibinfo{publisher}{Springer}, \bibinfo{address}{Berlin, Heidelberg}, \bibinfo{pages}{172--187}.
\newblock
\showISBNx{978-3-642-17164-2}
\urldef\tempurl%
\url{https://doi.org/10.1007/978-3-642-17164-2_13}
\showDOI{\tempurl}


\bibitem[Hoffmann and Hofmann(2010b)]%
        {hoffmann-hofmann>2010}
\bibfield{author}{\bibinfo{person}{Jan Hoffmann} {and} \bibinfo{person}{Martin Hofmann}.} \bibinfo{year}{2010}\natexlab{b}.
\newblock \showarticletitle{Amortized {{Resource Analysis}} with {{Polynomial Potential}}}. In \bibinfo{booktitle}{\emph{Programming {{Languages}} and {{Systems}}}}, \bibfield{editor}{\bibinfo{person}{Andrew~D. Gordon}} (Ed.). \bibinfo{publisher}{Springer}, \bibinfo{address}{Berlin, Heidelberg}, \bibinfo{pages}{287--306}.
\newblock
\showISBNx{978-3-642-11957-6}
\urldef\tempurl%
\url{https://doi.org/10.1007/978-3-642-11957-6_16}
\showDOI{\tempurl}


\bibitem[Hoffmann and Jost(2022)]%
        {hoffmann-jost>2022}
\bibfield{author}{\bibinfo{person}{Jan Hoffmann} {and} \bibinfo{person}{Steffen Jost}.} \bibinfo{year}{2022}\natexlab{}.
\newblock \showarticletitle{Two Decades of Automatic Amortized Resource Analysis}.
\newblock \bibinfo{journal}{\emph{Mathematical Structures in Computer Science}} \bibinfo{volume}{32}, \bibinfo{number}{6} (\bibinfo{date}{June} \bibinfo{year}{2022}), \bibinfo{pages}{729--759}.
\newblock
\showISSN{0960-1295, 1469-8072}
\urldef\tempurl%
\url{https://doi.org/10.1017/S0960129521000487}
\showDOI{\tempurl}


\bibitem[Hofmann and Jost(2003)]%
        {hofmann-jost>2003}
\bibfield{author}{\bibinfo{person}{Martin Hofmann} {and} \bibinfo{person}{Steffen Jost}.} \bibinfo{year}{2003}\natexlab{}.
\newblock \showarticletitle{Static Prediction of Heap Space Usage for First-Order Functional Programs}.
\newblock \bibinfo{journal}{\emph{ACM SIGPLAN Notices}} \bibinfo{volume}{38}, \bibinfo{number}{1} (\bibinfo{date}{Jan.} \bibinfo{year}{2003}), \bibinfo{pages}{185--197}.
\newblock
\showISSN{0362-1340}
\urldef\tempurl%
\url{https://doi.org/10.1145/640128.604148}
\showDOI{\tempurl}


\bibitem[Hofmann et~al\mbox{.}(2022)]%
        {hofmann-leutgeb-obwaller-moser-zuleger>2022}
\bibfield{author}{\bibinfo{person}{Martin Hofmann}, \bibinfo{person}{Lorenz Leutgeb}, \bibinfo{person}{David Obwaller}, \bibinfo{person}{Georg Moser}, {and} \bibinfo{person}{Florian Zuleger}.} \bibinfo{year}{2022}\natexlab{}.
\newblock \showarticletitle{Type-Based Analysis of Logarithmic Amortised Complexity}.
\newblock \bibinfo{journal}{\emph{Mathematical Structures in Computer Science}} \bibinfo{volume}{32}, \bibinfo{number}{6} (\bibinfo{date}{June} \bibinfo{year}{2022}), \bibinfo{pages}{794--826}.
\newblock
\showISSN{0960-1295, 1469-8072}
\urldef\tempurl%
\url{https://doi.org/10.1017/S0960129521000232}
\showDOI{\tempurl}


\bibitem[Hood and Melville(1981)]%
        {hood-melville>1981}
\bibfield{author}{\bibinfo{person}{Robert Hood} {and} \bibinfo{person}{Robert Melville}.} \bibinfo{year}{1981}\natexlab{}.
\newblock \showarticletitle{Real-Time Queue Operations in Pure {{LISP}}}.
\newblock \bibinfo{journal}{\emph{Inform. Process. Lett.}} \bibinfo{volume}{13}, \bibinfo{number}{2} (\bibinfo{date}{Nov.} \bibinfo{year}{1981}), \bibinfo{pages}{50--54}.
\newblock
\showISSN{0020-0190}
\urldef\tempurl%
\url{https://doi.org/10.1016/0020-0190(81)90030-2}
\showDOI{\tempurl}


\bibitem[Kahn and Hoffmann(2020)]%
        {kahn-hoffmann>2020}
\bibfield{author}{\bibinfo{person}{David~M. Kahn} {and} \bibinfo{person}{Jan Hoffmann}.} \bibinfo{year}{2020}\natexlab{}.
\newblock \showarticletitle{Exponential {{Automatic Amortized Resource Analysis}}}. In \bibinfo{booktitle}{\emph{Foundations of {{Software Science}} and {{Computation Structures}}}} \emph{(\bibinfo{series}{Lecture {{Notes}} in {{Computer Science}}})}, \bibfield{editor}{\bibinfo{person}{Jean {Goubault-Larrecq}} {and} \bibinfo{person}{Barbara K{\"o}nig}} (Eds.). \bibinfo{publisher}{Springer International Publishing}, \bibinfo{address}{Cham}, \bibinfo{pages}{359--380}.
\newblock
\showISBNx{978-3-030-45231-5}
\urldef\tempurl%
\url{https://doi.org/10.1007/978-3-030-45231-5_19}
\showDOI{\tempurl}


\bibitem[Kebuladze(2025)]%
        {kebuladze>2025}
\bibfield{author}{\bibinfo{person}{Lukas Kebuladze}.} \bibinfo{year}{2025}\natexlab{}.
\newblock \bibinfo{booktitle}{\emph{Formally {{Verified Amortized Cost Analysis}} of {{Splay Trees}} in {{Agda}}}}.
\newblock \bibinfo{type}{{T}echnical {R}eport}. \bibinfo{institution}{Carnegie Mellon University}.
\newblock
\urldef\tempurl%
\url{https://www.cs.cmu.edu/~rwh/code/kebuladze_splay_tree.tar.gz}
\showURL{%
\tempurl}


\bibitem[Krishnaswami et~al\mbox{.}(2015)]%
        {krishnaswami-pradic-benton>2015}
\bibfield{author}{\bibinfo{person}{Neelakantan~R. Krishnaswami}, \bibinfo{person}{Pierre Pradic}, {and} \bibinfo{person}{Nick Benton}.} \bibinfo{year}{2015}\natexlab{}.
\newblock \showarticletitle{Integrating {{Linear}} and {{Dependent Types}}}.
\newblock \bibinfo{journal}{\emph{ACM SIGPLAN Notices}} \bibinfo{volume}{50}, \bibinfo{number}{1} (\bibinfo{year}{2015}), \bibinfo{pages}{17--30}.
\newblock
\showISBNx{9781450333009}
\showISSN{0362-1340}
\urldef\tempurl%
\url{https://doi.org/10.1145/2775051.2676969}
\showDOI{\tempurl}


\bibitem[Levy(2003)]%
        {levy>2003}
\bibfield{author}{\bibinfo{person}{Paul~Blain Levy}.} \bibinfo{year}{2003}\natexlab{}.
\newblock \bibinfo{booktitle}{\emph{Call-{{By-Push-Value}}: {{A Functional}}/{{Imperative Synthesis}}}}.
\newblock \bibinfo{publisher}{Springer Netherlands}, \bibinfo{address}{Dordrecht}.
\newblock
\showISBNx{978-94-010-3752-5 978-94-007-0954-6}
\urldef\tempurl%
\url{https://doi.org/10.1007/978-94-007-0954-6}
\showDOI{\tempurl}


\bibitem[Li and Harper(2025)]%
        {li-harper>2025}
\bibfield{author}{\bibinfo{person}{Runming Li} {and} \bibinfo{person}{Robert Harper}.} \bibinfo{year}{2025}\natexlab{}.
\newblock \bibinfo{title}{Canonicity for {{Cost-Aware Logical Framework}} via {{Synthetic Tait Computability}}}.
\newblock
\newblock
\urldef\tempurl%
\url{https://doi.org/10.48550/arXiv.2504.12464}
\showDOI{\tempurl}
\showeprint[arxiv]{2504.12464}~[cs]


\bibitem[Lorenzen(2026)]%
        {lorenzen>2026}
\bibfield{author}{\bibinfo{person}{Anton Lorenzen}.} \bibinfo{year}{2026}\natexlab{}.
\newblock \bibinfo{title}{Persistent {{Amortised Analysis}}, {{Operationally}}}.
\newblock
\newblock
\urldef\tempurl%
\url{https://doi.org/arXiv:2605.09411}
\showDOI{\tempurl}


\bibitem[Meertens(1992)]%
        {meertens>1992}
\bibfield{author}{\bibinfo{person}{Lambert Meertens}.} \bibinfo{year}{1992}\natexlab{}.
\newblock \showarticletitle{Paramorphisms}.
\newblock \bibinfo{journal}{\emph{Formal Aspects of Computing}} \bibinfo{volume}{4}, \bibinfo{number}{5} (\bibinfo{date}{Sept.} \bibinfo{year}{1992}), \bibinfo{pages}{413--424}.
\newblock
\showISSN{1433-299X}
\urldef\tempurl%
\url{https://doi.org/10.1007/BF01211391}
\showDOI{\tempurl}


\bibitem[M{\'e}vel et~al\mbox{.}(2019)]%
        {mevel-jourdan-pottier>2019}
\bibfield{author}{\bibinfo{person}{Glen M{\'e}vel}, \bibinfo{person}{Jacques-Henri Jourdan}, {and} \bibinfo{person}{Fran{\c c}ois Pottier}.} \bibinfo{year}{2019}\natexlab{}.
\newblock \showarticletitle{Time {{Credits}} and {{Time Receipts}} in {{Iris}}}. In \bibinfo{booktitle}{\emph{Programming {{Languages}} and {{Systems}}}}, \bibfield{editor}{\bibinfo{person}{Lu{\'i}s Caires}} (Ed.). \bibinfo{publisher}{Springer International Publishing}, \bibinfo{address}{Cham}, \bibinfo{pages}{3--29}.
\newblock
\showISBNx{978-3-030-17184-1}
\urldef\tempurl%
\url{https://doi.org/10.1007/978-3-030-17184-1_1}
\showDOI{\tempurl}


\bibitem[Nipkow and Brinkop(2019)]%
        {nipkow-brinkop>2019}
\bibfield{author}{\bibinfo{person}{Tobias Nipkow} {and} \bibinfo{person}{Hauke Brinkop}.} \bibinfo{year}{2019}\natexlab{}.
\newblock \showarticletitle{Amortized {{Complexity Verified}}}.
\newblock \bibinfo{journal}{\emph{Journal of Automated Reasoning}} \bibinfo{volume}{62}, \bibinfo{number}{3} (\bibinfo{date}{March} \bibinfo{year}{2019}), \bibinfo{pages}{367--391}.
\newblock
\showISSN{1573-0670}
\urldef\tempurl%
\url{https://doi.org/10.1007/s10817-018-9459-3}
\showDOI{\tempurl}


\bibitem[Niu and Harper(2022)]%
        {niu-harper>2022}
\bibfield{author}{\bibinfo{person}{Yue Niu} {and} \bibinfo{person}{Robert Harper}.} \bibinfo{year}{2022}\natexlab{}.
\newblock \bibinfo{title}{A Metalanguage for Cost-Aware Denotational Semantics}.
\newblock
\newblock
\urldef\tempurl%
\url{https://doi.org/10.48550/arXiv.2209.12669}
\showDOI{\tempurl}
\showeprint[arxiv]{2209.12669}~[cs]


\bibitem[Niu et~al\mbox{.}(2022)]%
        {niu-sterling-grodin-harper>2022}
\bibfield{author}{\bibinfo{person}{Yue Niu}, \bibinfo{person}{Jonathan Sterling}, \bibinfo{person}{Harrison Grodin}, {and} \bibinfo{person}{Robert Harper}.} \bibinfo{year}{2022}\natexlab{}.
\newblock \showarticletitle{A {{Cost-Aware Logical Framework}}}.
\newblock \bibinfo{journal}{\emph{Proceedings of the ACM on Programming Languages}} \bibinfo{volume}{6}, \bibinfo{number}{POPL} (\bibinfo{date}{Jan.} \bibinfo{year}{2022}), \bibinfo{pages}{9:1--9:31}.
\newblock
\urldef\tempurl%
\url{https://doi.org/10.1145/3498670}
\showDOI{\tempurl}


\bibitem[Norell(2009)]%
        {norell>2009}
\bibfield{author}{\bibinfo{person}{Ulf Norell}.} \bibinfo{year}{2009}\natexlab{}.
\newblock \showarticletitle{Dependently Typed Programming in {{Agda}}}. In \bibinfo{booktitle}{\emph{Proceedings of the 4th International Workshop on {{Types}} in Language Design and Implementation}} \emph{(\bibinfo{series}{{{TLDI}} '09})}. \bibinfo{publisher}{Association for Computing Machinery}, \bibinfo{address}{New York, NY, USA}, \bibinfo{pages}{1--2}.
\newblock
\showISBNx{978-1-60558-420-1}
\urldef\tempurl%
\url{https://doi.org/10.1145/1481861.1481862}
\showDOI{\tempurl}


\bibitem[Okasaki(1999)]%
        {okasaki>1999}
\bibfield{author}{\bibinfo{person}{Chris Okasaki}.} \bibinfo{year}{1999}\natexlab{}.
\newblock \bibinfo{booktitle}{\emph{Purely {{Functional Data Structures}}}}.
\newblock \bibinfo{publisher}{Cambridge University Press}.
\newblock
\showISBNx{978-0-521-66350-2}


\bibitem[P{\'e}drot and Tabareau(2019)]%
        {pedrot-tabareau>2019}
\bibfield{author}{\bibinfo{person}{Pierre-Marie P{\'e}drot} {and} \bibinfo{person}{Nicolas Tabareau}.} \bibinfo{year}{2019}\natexlab{}.
\newblock \showarticletitle{The {{Fire Triangle}}: {{How}} to {{Mix Substitution}}, {{Dependent Elimination}}, and {{Effects}}}.
\newblock \bibinfo{journal}{\emph{Proceedings of the ACM on Programming Languages}} \bibinfo{volume}{4}, \bibinfo{number}{POPL} (\bibinfo{date}{Dec.} \bibinfo{year}{2019}), \bibinfo{pages}{58:1--58:28}.
\newblock
\urldef\tempurl%
\url{https://doi.org/10.1145/3371126}
\showDOI{\tempurl}


\bibitem[Pham et~al\mbox{.}(2025)]%
        {pham-niu-glover-saad-hoffmann>2025}
\bibfield{author}{\bibinfo{person}{Long Pham}, \bibinfo{person}{Yue Niu}, \bibinfo{person}{Nathan Glover}, \bibinfo{person}{Feras Saad}, {and} \bibinfo{person}{Jan Hoffmann}.} \bibinfo{year}{2025}\natexlab{}.
\newblock \showarticletitle{Integrating {{Resource Analyses}} via {{Resource Decomposition}}}.
\newblock \bibinfo{journal}{\emph{Proceedings of the ACM on Programming Languages}} \bibinfo{volume}{9}, \bibinfo{number}{OOPSLA2} (\bibinfo{date}{Oct.} \bibinfo{year}{2025}), \bibinfo{pages}{409:3811--409:3840}.
\newblock
\urldef\tempurl%
\url{https://doi.org/10.1145/3763798}
\showDOI{\tempurl}


\bibitem[Pottier et~al\mbox{.}(2024)]%
        {pottier-gueneau-jourdan-mevel>2024}
\bibfield{author}{\bibinfo{person}{Fran{\c c}ois Pottier}, \bibinfo{person}{Arma{\"e}l Gu{\'e}neau}, \bibinfo{person}{Jacques-Henri Jourdan}, {and} \bibinfo{person}{Glen M{\'e}vel}.} \bibinfo{year}{2024}\natexlab{}.
\newblock \showarticletitle{Thunks and {{Debits}} in {{Separation Logic}} with {{Time Credits}}}.
\newblock \bibinfo{journal}{\emph{Proceedings of the ACM on Programming Languages}} \bibinfo{volume}{8}, \bibinfo{number}{POPL} (\bibinfo{date}{Jan.} \bibinfo{year}{2024}), \bibinfo{pages}{50:1482--50:1508}.
\newblock
\urldef\tempurl%
\url{https://doi.org/10.1145/3632892}
\showDOI{\tempurl}


\bibitem[Rajani(2020)]%
        {rajani>thesis}
\bibfield{author}{\bibinfo{person}{Vineet Rajani}.} \bibinfo{year}{2020}\natexlab{}.
\newblock \emph{\bibinfo{title}{A Type-Theory for Higher-Order Amortized Analysis}}.
\newblock {{doctoralThesis}}. \bibinfo{school}{Saarl\"andische Universit\"ats- und Landesbibliothek}.
\newblock
\urldef\tempurl%
\url{https://doi.org/10.22028/D291-30877}
\showDOI{\tempurl}


\bibitem[Rajani et~al\mbox{.}(2024)]%
        {rajani-barthe-garg>2024}
\bibfield{author}{\bibinfo{person}{Vineet Rajani}, \bibinfo{person}{Gilles Barthe}, {and} \bibinfo{person}{Deepak Garg}.} \bibinfo{year}{2024}\natexlab{}.
\newblock \showarticletitle{A {{Modal Type Theory}} of {{Expected Cost}} in {{Higher-Order Probabilistic Programs}}}.
\newblock \bibinfo{journal}{\emph{Proc. ACM Program. Lang.}} \bibinfo{volume}{8}, \bibinfo{number}{OOPSLA2} (\bibinfo{date}{Oct.} \bibinfo{year}{2024}), \bibinfo{pages}{285:389--285:414}.
\newblock
\urldef\tempurl%
\url{https://doi.org/10.1145/3689725}
\showDOI{\tempurl}


\bibitem[Rajani et~al\mbox{.}(2021)]%
        {rajani-gaboardi-garg-hoffmann>2021}
\bibfield{author}{\bibinfo{person}{Vineet Rajani}, \bibinfo{person}{Marco Gaboardi}, \bibinfo{person}{Deepak Garg}, {and} \bibinfo{person}{Jan Hoffmann}.} \bibinfo{year}{2021}\natexlab{}.
\newblock \showarticletitle{A Unifying Type-Theory for Higher-Order (Amortized) Cost Analysis}.
\newblock \bibinfo{journal}{\emph{Proceedings of the ACM on Programming Languages}} \bibinfo{volume}{5}, \bibinfo{number}{POPL} (\bibinfo{date}{Jan.} \bibinfo{year}{2021}), \bibinfo{pages}{27:1--27:28}.
\newblock
\urldef\tempurl%
\url{https://doi.org/10.1145/3434308}
\showDOI{\tempurl}


\bibitem[Reynolds(1983)]%
        {reynolds>1983}
\bibfield{author}{\bibinfo{person}{John~C. Reynolds}.} \bibinfo{year}{1983}\natexlab{}.
\newblock \showarticletitle{Types, {{Abstraction}}, and {{Parametric Polymorphism}}}. In \bibinfo{booktitle}{\emph{Information Processing 83, Proceedings of the {{IFIP}} 9th World Computer Congress, Paris, France, September 19-23, 1983}}, \bibfield{editor}{\bibinfo{person}{R.~E.~A. Mason}} (Ed.). \bibinfo{publisher}{North-Holland/IFIP}, \bibinfo{pages}{513--523}.
\newblock
\urldef\tempurl%
\url{https://doi.org/10.1007/3-540-55511-0_1}
\showDOI{\tempurl}


\bibitem[Riehl and Shulman(2017)]%
        {riehl-shulman>2017}
\bibfield{author}{\bibinfo{person}{Emily Riehl} {and} \bibinfo{person}{Michael Shulman}.} \bibinfo{year}{2017}\natexlab{}.
\newblock \showarticletitle{A Type Theory for Synthetic {$\infty$}-Categories}.
\newblock \bibinfo{journal}{\emph{Higher Structures}} \bibinfo{volume}{1}, \bibinfo{number}{1} (\bibinfo{date}{Dec.} \bibinfo{year}{2017}), \bibinfo{pages}{147--224}.
\newblock
\urldef\tempurl%
\url{https://doi.org/10.21136/HS.2017.06}
\showDOI{\tempurl}


\bibitem[Rijke et~al\mbox{.}(2020)]%
        {rijke-shulman-spitters>2020}
\bibfield{author}{\bibinfo{person}{Egbert Rijke}, \bibinfo{person}{Michael Shulman}, {and} \bibinfo{person}{Bas Spitters}.} \bibinfo{year}{2020}\natexlab{}.
\newblock \showarticletitle{Modalities in Homotopy Type Theory}.
\newblock \bibinfo{journal}{\emph{Logical Methods in Computer Science}}  \bibinfo{volume}{Volume 16, Issue 1} (\bibinfo{date}{Jan.} \bibinfo{year}{2020}).
\newblock
\urldef\tempurl%
\url{https://doi.org/10.23638/LMCS-16(1:2)2020}
\showDOI{\tempurl}


\bibitem[Sleator and Tarjan(1985a)]%
        {sleator-tarjan>1985-paging}
\bibfield{author}{\bibinfo{person}{Daniel~D. Sleator} {and} \bibinfo{person}{Robert~E. Tarjan}.} \bibinfo{year}{1985}\natexlab{a}.
\newblock \showarticletitle{Amortized Efficiency of List Update and Paging Rules}.
\newblock \bibinfo{journal}{\emph{Commun. ACM}} \bibinfo{volume}{28}, \bibinfo{number}{2} (\bibinfo{date}{Feb.} \bibinfo{year}{1985}), \bibinfo{pages}{202--208}.
\newblock
\showISSN{0001-0782}
\urldef\tempurl%
\url{https://doi.org/10.1145/2786.2793}
\showDOI{\tempurl}


\bibitem[Sleator and Tarjan(1985b)]%
        {sleator-tarjan>1985}
\bibfield{author}{\bibinfo{person}{Daniel~Dominic Sleator} {and} \bibinfo{person}{Robert~Endre Tarjan}.} \bibinfo{year}{1985}\natexlab{b}.
\newblock \showarticletitle{Self-Adjusting Binary Search Trees}.
\newblock \bibinfo{journal}{\emph{J. ACM}} \bibinfo{volume}{32}, \bibinfo{number}{3} (\bibinfo{date}{July} \bibinfo{year}{1985}), \bibinfo{pages}{652--686}.
\newblock
\showISSN{0004-5411}
\urldef\tempurl%
\url{https://doi.org/10.1145/3828.3835}
\showDOI{\tempurl}


\bibitem[Sterling and Harper(2021)]%
        {sterling-harper>2021}
\bibfield{author}{\bibinfo{person}{Jonathan Sterling} {and} \bibinfo{person}{Robert Harper}.} \bibinfo{year}{2021}\natexlab{}.
\newblock \showarticletitle{Logical {{Relations}} as {{Types}}: {{Proof-Relevant Parametricity}} for {{Program Modules}}}.
\newblock \bibinfo{journal}{\emph{J. ACM}} \bibinfo{volume}{68}, \bibinfo{number}{6} (\bibinfo{date}{Oct.} \bibinfo{year}{2021}), \bibinfo{pages}{41:1--41:47}.
\newblock
\showISSN{0004-5411}
\urldef\tempurl%
\url{https://doi.org/10.1145/3474834}
\showDOI{\tempurl}


\bibitem[Street(1974)]%
        {street>1974}
\bibfield{author}{\bibinfo{person}{Ross Street}.} \bibinfo{year}{1974}\natexlab{}.
\newblock \showarticletitle{Fibrations and {{Yoneda}}'s Lemma in a 2-Category}. In \bibinfo{booktitle}{\emph{Category {{Seminar}}}}, \bibfield{editor}{\bibinfo{person}{Gregory~M. Kelly}} (Ed.). \bibinfo{publisher}{Springer}, \bibinfo{address}{Berlin, Heidelberg}, \bibinfo{pages}{104--133}.
\newblock
\showISBNx{978-3-540-37270-7}
\urldef\tempurl%
\url{https://doi.org/10.1007/BFb0063102}
\showDOI{\tempurl}


\bibitem[Tarjan(1985)]%
        {tarjan>1985}
\bibfield{author}{\bibinfo{person}{Robert~Endre Tarjan}.} \bibinfo{year}{1985}\natexlab{}.
\newblock \showarticletitle{Amortized {{Computational Complexity}}}.
\newblock \bibinfo{journal}{\emph{SIAM Journal on Algebraic Discrete Methods}} \bibinfo{volume}{6}, \bibinfo{number}{2} (\bibinfo{date}{April} \bibinfo{year}{1985}), \bibinfo{pages}{306--318}.
\newblock
\showISSN{0196-5212}
\urldef\tempurl%
\url{https://doi.org/10.1137/0606031}
\showDOI{\tempurl}


\bibitem[{The Univalent Foundations Program}(2013)]%
        {univalentfoundations>2013}
\bibfield{author}{\bibinfo{person}{{The Univalent Foundations Program}}.} \bibinfo{year}{2013}\natexlab{}.
\newblock \bibinfo{booktitle}{\emph{Homotopy {{Type Theory}}: {{Univalent Foundations}} of {{Mathematics}}}}.
\newblock \bibinfo{publisher}{Univalent Foundations Program}.
\newblock


\bibitem[V{\'a}k{\'a}r(2017)]%
        {vakar>thesis}
\bibfield{author}{\bibinfo{person}{Matthijs V{\'a}k{\'a}r}.} \bibinfo{year}{2017}\natexlab{}.
\newblock \emph{\bibinfo{title}{In {{Search}} of {{Effectful Dependent Types}}}}.
\newblock {{http://purl.org/dc/dcmitype/Text}}. \bibinfo{school}{University of Oxford}.
\newblock
\urldef\tempurl%
\url{https://ora.ox.ac.uk/objects/uuid:e91e19b3-7e10-4fda-9433-f23b469e4049}
\showURL{%
\tempurl}


\bibitem[{van Br{\"u}gge}(2024)]%
        {vanbrugge>2024}
\bibfield{author}{\bibinfo{person}{Jan {van Br{\"u}gge}}.} \bibinfo{year}{2024}\natexlab{}.
\newblock \showarticletitle{Liquid {{Amortization}}: {{Proving Amortized Complexity}} with {{LiquidHaskell}} ({{Functional Pearl}})}. In \bibinfo{booktitle}{\emph{Proceedings of the 17th {{ACM SIGPLAN International Haskell Symposium}}}} \emph{(\bibinfo{series}{Haskell 2024})}. \bibinfo{publisher}{Association for Computing Machinery}, \bibinfo{address}{New York, NY, USA}, \bibinfo{pages}{97--108}.
\newblock
\showISBNx{979-8-4007-1102-2}
\urldef\tempurl%
\url{https://doi.org/10.1145/3677999.3678282}
\showDOI{\tempurl}


\bibitem[Vezzosi et~al\mbox{.}(2019)]%
        {vezzosi-mortberg-abel>2019}
\bibfield{author}{\bibinfo{person}{Andrea Vezzosi}, \bibinfo{person}{Anders M{\"o}rtberg}, {and} \bibinfo{person}{Andreas Abel}.} \bibinfo{year}{2019}\natexlab{}.
\newblock \showarticletitle{Cubical {{Agda}}: {{A Dependently Typed Programming Language}} with {{Univalence}} and {{Higher Inductive Types}}}.
\newblock \bibinfo{journal}{\emph{Proceedings of the ACM on Programming Languages}} \bibinfo{volume}{3}, \bibinfo{number}{ICFP} (\bibinfo{date}{July} \bibinfo{year}{2019}), \bibinfo{pages}{87:1--87:29}.
\newblock
\urldef\tempurl%
\url{https://doi.org/10.1145/3341691}
\showDOI{\tempurl}


\bibitem[Xu and Wang(2026)]%
        {xu-wang>2026}
\bibfield{author}{\bibinfo{person}{Han Xu} {and} \bibinfo{person}{Di Wang}.} \bibinfo{year}{2026}\natexlab{}.
\newblock \showarticletitle{Dependently-{{Typed AARA}}: {{A Non-Affine Approach}} for {{Resource Analysis}} of {{Higher-Order Programs}}}. In \bibinfo{booktitle}{\emph{Programming {{Languages}} and {{Systems}}}}, \bibfield{editor}{\bibinfo{person}{Robbert Krebbers}} (Ed.). \bibinfo{publisher}{Springer Nature Switzerland}, \bibinfo{address}{Cham}, \bibinfo{pages}{362--391}.
\newblock
\showISBNx{978-3-032-22723-2}
\urldef\tempurl%
\url{https://doi.org/10.1007/978-3-032-22723-2_13}
\showDOI{\tempurl}


\end{thebibliography}

\end{document}